\documentclass[aps,prx,reprint]{revtex4-2}

\usepackage{amsmath,amssymb,amsfonts,mathtools,bm}
\usepackage{amsthm,booktabs,array,graphicx,hyperref}

\newtheorem{theorem}{Theorem}
\newtheorem{proposition}{Proposition}
\newtheorem{lemma}{Lemma}
\newtheorem{corollary}{Corollary}

\theoremstyle{plain}

\begin{document}

\title{When Quantum States over Spacetime Have No Common Process}

\author{Jianqi Sheng}
\affiliation{Department of Physics, City University of Hong Kong, Hong Kong SAR, China}

\begin{abstract}
Determining whether observations across spacetime arise from one quantum
process is central to causal inference and to consistent observer-relative
descriptions.  For quantum-state-over-spacetime (QSOST) data, this remains
obstructed because causally agnostic interferometry compresses process
matrices: every setting can appear physical although its hidden positive
completions cannot be glued to a common parent.  We formulate this as an
inverse positive-lift problem and solve it exactly.  Each positive-weight
QSOST branch has a unique least positive lift, yielding a data-only
common-parent criterion.  We prove that settingwise realizability implies
common-process realizability for every finite family if and only if the QSOST
projection is injective on deterministic processes.  Thus any normalized
information loss can be exposed as a strict gluing failure.  For bipartite
qubits we identify a $63$-dimensional hidden fiber and construct definite-order
separations, including a minimal two-setting, two-outcome example with exact
noise threshold $\eta=1/\sqrt2$ and a sparse interferometric witness.  Under
explicit causal-access conditions, we also derive exact delayed
fact-inference limits and identify the minimal environment; restoring full
access removes only the pre--post gap.  These results identify the boundary
between spacetime tomography and global process composition, turning hidden
process incompatibility into an experimentally testable phenomenon.
\end{abstract}

\maketitle

\section{Introduction}

A quantum state over time assigns an operator to temporally separated instances of a quantum system so as to reproduce operationally specified temporal correlations.  A recent extension to arbitrary spacetime regions defines a quantum state over spacetime (QSOST) through causally agnostic interferometry: contractions with products of local unitary interventions reproduce complex interference amplitudes independently of the assumed causal relation between the regions~\cite{lie2026qsost}.  For a state $\rho_A$ followed by a channel $\mathcal E_{B|A}$, the left quantum state over time is
\begin{equation}
T^{L}_{AB}=(\rho_A\otimes I_B)C_{B|A}[\mathcal E]^{T_A},
\label{eq:left_qsot}
\end{equation}
and
\begin{equation}
\operatorname{Tr}[(V_A\otimes W_B)T^{L}_{AB}]
=
\operatorname{Tr}[W_B\mathcal E_{B|A}(V_A\rho_A)].
\label{eq:one_step_amplitude}
\end{equation}
The operator $T^{L}_{AB}$ need not be Hermitian or positive, although its underlying channel Choi operator is positive.

Relational quantum mechanics motivates conditioning such spacetime descriptions on physical records~\cite{rovelli1996rqm,dibiagio2025relative,adlam2023cross}.  A finite record is represented by a commuting algebra generated by orthogonal projectors $\{\Pi_a^R\}_a$.  A system variable $Z=\sum_z zQ_z$ is a fact relative to that record when a function $g$ exists such that
\begin{equation}
p(z=g(a)|a)=1,
\label{eq:relative_fact}
\end{equation}
or, equivalently for a finite classical record, $H(Z|R)=0$.  Across several
process settings, Eq.~\eqref{eq:relative_fact} admits three operationally
inequivalent readings.  In the \emph{pre} regime, the classical setting
register $X$ is available before the fact measurement and may select its
instrument.  In the \emph{post} regime, one $x$-independent measurement must
be completed first; the complete memory surviving to the decoder is its
classical outcome, with no quantum side memory or entanglement shared with
the later decoder, and $X$ is revealed only afterward.  In the \emph{none} regime, $X$ is never
available and one fixed decoder is required.  A control-blind fact refers to
the last regime on a system $S$ disjoint from $X$ and the record.  For a binary
record we call the fact nontrivial only when both record outcomes occur and
$g(+)\ne g(-)$.  The mathematical developments below do not assume that
relational quantum mechanics modifies the Born rule.  The record label is an
operational conditioning variable, and a relational interpretation requires
both the correlation test in Eq.~\eqref{eq:relative_fact} and an explicit
declaration of the access regime and accessible algebra.

The central observed object is a non-normalized family
\begin{equation}
\{\Sigma_{a|x}\}_{a,x},
\label{eq:qsost_assemblage_intro}
\end{equation}
where $x$ labels a record setting and $a$ its outcome.  At the one-step level this object is an invertible representation of a non-signalling channel assemblage whenever the input state is faithful.  Common-parent channel assemblages, their steering interpretation, and their common-dilation structure are therefore directly relevant~\cite{piani2015channel,uola2018instruments,zjawin2024channel}.  At the process level, it is also known that fully specified positive branches $\{G_{a|x}\}_{a,x}$ admit a common-future record realization exactly when their sums define one $x$-independent deterministic process~\cite{bavaresco2019sdi,hjw1993}.  That known result assumes access to the full branch processes.

Our inverse problem begins after the non-injective QSOST projection:
\begin{equation}
\Sigma_{a|x}=Q_+(G_{a|x}).
\label{eq:compressed_branch_intro}
\end{equation}
Only $\Sigma_{a|x}$ is observed, while the positive lift $G_{a|x}$ remains hidden.  We ask whether one can select positive lifts from all of the QSOST fibers so that they sum to one common deterministic process.  Different positive processes can have exactly the same complete QSOST response, turning this into a spacetime marginal or positive-completion problem~\cite{hsieh2022marginal,jia2023spacetime}.  Existing process-assemblage realizability does not answer this compressed inverse problem.

We distinguish two notions at the level of the compressed data.  \emph{Settingwise process realizability} means that every record setting separately has positive lifts summing to a deterministic process, with all settings sharing the same unconditional QSOST.  \emph{Common-process realizability} requires positive lifts whose sums equal one deterministic process for all settings.  Our main structural result is stronger than an example: the two notions agree for every finite conditioned family if and only if the QSOST projection is injective on deterministic processes.  Equivalently, any information loss on that normalized process set can be exposed as a strict gluing failure.  Since the bipartite-qubit projection has a nontrivial hidden kernel, this proves
\begin{equation}
\mathcal C_{\mathrm{common}}
\subsetneq
\mathcal C_{\mathrm{settingwise}}
\label{eq:strict_inclusion_intro}
\end{equation}
in the bipartite-qubit process-matrix scenario.  The separation does not arise because an individual branch is nonphysical: every branch has a positive lift and every setting has a valid parent process.  The obstruction is exclusively the absence of a simultaneous positive completion.  Causally, common realizability is equivalent to choosing a measurement in the common future of the laboratories, whereas settingwise realizability is equivalent to allowing a classical variable in their past to select the process context.  Consequently, the same injectivity condition also determines whether compressed QSOST data distinguish future conditioning from past control.

We explicitly separate established process-level structure from our QSOST-level contributions.  We first recall one-step realizability and the known common-future process-assemblage theorem.  We then express the QSOST projection in the standard positive convention and prove that it records precisely one process-matrix column.  This yields a unique least positive lift, an exact QSOST-only domination criterion, and the compression--gluing equivalence.  Finally, we derive its causal-location corollary and construct both a full-rank hidden-fiber separation and a combinatorially minimal two-setting, two-outcome separation inside the definite order $A\prec B$.  The latter has an explicit controlled circuit, an exact process-domination cost, and a sparse interferometric witness.  On the fact side, we first show that the natural output algebra supports only a permanent-erasure gap because its two sharp fact observables commute.  We then give a general support criterion for delayed inference, prove an exact Clifford tester theorem, and realize its anticommuting two-setting case by a setting-independent unitary, causal exclusion of one output, and a complete classical-memory cut.  We prove that one environment qubit is necessary and sufficient on the calibrated support, that tracing it out is not recoverable on the four sharp branches, and that full access closes the delayed gap.  We state explicitly why this access-conditioned result cannot be inferred universally from the QSOST process gap, no-signalling, or record stability alone.

\section{One-step record-conditioned QSOST assemblages}

\subsection{Choi convention and conditioned branches}

For a linear map $\mathcal E_{B|A}:\mathcal L(\mathcal H_A)\to\mathcal L(\mathcal H_B)$, we use
\begin{equation}
C_{B|A}[\mathcal E]
=
\sum_{i,j}|i\rangle\langle j|_A\otimes
\mathcal E_{B|A}(|i\rangle\langle j|_A).
\label{eq:choi}
\end{equation}
Complete positivity and trace preservation are equivalent to
\begin{equation}
C_{B|A}[\mathcal E]\succeq0,
\qquad
\operatorname{Tr}_B C_{B|A}[\mathcal E]=I_A.
\label{eq:choi_cptp}
\end{equation}
The map action is
\begin{equation}
\mathcal E_{B|A}(X)
=
\operatorname{Tr}_A[(X^T\otimes I_B)C_{B|A}[\mathcal E]].
\label{eq:choi_action}
\end{equation}

For each record setting $x$, let $\{\mathcal E_{a|x}\}_a$ be completely positive maps satisfying
\begin{equation}
\sum_a\mathcal E_{a|x}=\mathcal E
\quad\text{for every }x,
\label{eq:common_channel}
\end{equation}
where $\mathcal E$ is trace preserving and independent of $x$.  With $C_{a|x}=C[\mathcal E_{a|x}]$, define
\begin{equation}
\Sigma_{a|x}
=
(\rho_A\otimes I_B)C_{a|x}^{T_A}.
\label{eq:one_step_branch}
\end{equation}
The traces are the branch probabilities,
\begin{equation}
p(a|x)=\operatorname{Tr}\Sigma_{a|x},
\label{eq:branch_probability}
\end{equation}
and
\begin{equation}
\sum_a\Sigma_{a|x}=T_{AB}
\quad\text{independent of }x.
\label{eq:one_step_common_sum}
\end{equation}
The non-normalized form is essential because Eq.~\eqref{eq:one_step_branch} remains linear.

\subsection{Interferometric reconstruction}

Choose orthogonal unitary operator bases $\{U_\mu^A\}_{\mu=1}^{d_A^2}$ and $\{W_\nu^B\}_{\nu=1}^{d_B^2}$ with
\begin{equation}
\operatorname{Tr}[(U_\mu^A)^\dagger U_{\mu'}^A]=d_A\delta_{\mu\mu'},
\qquad
\operatorname{Tr}[(W_\nu^B)^\dagger W_{\nu'}^B]=d_B\delta_{\nu\nu'}.
\end{equation}
The conditioned interference amplitudes are
\begin{equation}
I_{\mu\nu,a|x}
=
\operatorname{Tr}[(U_\mu^A\otimes W_\nu^B)\Sigma_{a|x}],
\label{eq:interference_coefficients}
\end{equation}
and
\begin{equation}
\Sigma_{a|x}
=
\frac{1}{d_Ad_B}
\sum_{\mu,\nu}
I_{\mu\nu,a|x}(U_\mu^A)^\dagger\otimes(W_\nu^B)^\dagger.
\label{eq:tomography}
\end{equation}
Both quadratures are generally required.  If a probe qubit coherently controls the reference and intervention arms, joint probe-record frequencies give
\begin{align}
\operatorname{Re}I_{\mu\nu,a|x}
&=p(+X,a|\mu,\nu,x)-p(-X,a|\mu,\nu,x),
\label{eq:real_amplitude}\\
\operatorname{Im}I_{\mu\nu,a|x}
&=p(+Y,a|\mu,\nu,x)-p(-Y,a|\mu,\nu,x).
\label{eq:imag_amplitude}
\end{align}
The derivation is included in Appendix~\ref{app:probe}.

\subsection{Exact one-step realizability}

\begin{theorem}[Faithful-input one-step theorem]
Let $\rho_A\succ0$ be fixed.  Define
\begin{equation}
\widehat C_{a|x}
=
\left[(\rho_A^{-1}\otimes I_B)\Sigma_{a|x}\right]^{T_A}.
\label{eq:inverse_choi}
\end{equation}
The assemblage is generated by completely positive instruments with one common trace-preserving parent channel if and only if
\begin{align}
\widehat C_{a|x}&\succeq0,
\label{eq:fullrank_pos}\\
\sum_a\widehat C_{a|x}&=C
\quad\text{independent of }x,
\label{eq:fullrank_sum}\\
\operatorname{Tr}_B C&=I_A.
\label{eq:fullrank_tp}
\end{align}
The realizing branch maps are unique.
\end{theorem}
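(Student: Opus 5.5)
The whole argument rests on one observation: when $\rho_A\succ0$ the map $C\mapsto(\rho_A\otimes I_B)C^{T_A}$ is a linear bijection on $\mathcal L(\mathcal H_A\otimes\mathcal H_B)$. Its inverse is $\Sigma\mapsto[(\rho_A^{-1}\otimes I_B)\Sigma]^{T_A}$, which is exactly Eq.~\eqref{eq:inverse_choi}. To see this, note that the partial transpose $T_A$ is an involutive linear bijection. Left multiplication by $\rho_A\otimes I_B$ is invertible precisely because $\rho_A$ is faithful. Composing the two gives $\widehat C$ applied to $(\rho_A\otimes I_B)C^{T_A}$ equal to $(C^{T_A})^{T_A}=C$, and the composition in the other order is the identity as well. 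With this in hand, both directions of the theorem and the uniqueness claim come down to transporting conditions through a bijection and then applying Choi's theorem in the form of Eq.~\eqref{eq:choi_cptp}.

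\emph{Necessity.} Suppose the assemblage is generated by CP maps $\mathcal E_{a|x}$ with $\sum_a\mathcal E_{a|x}=\mathcal E$ trace preserving. Then $\Sigma_{a|x}=(\rho_A\otimes I_B)C_{a|x}^{T_A}$, and applying the inverse map gives $\widehat C_{a|x}=C_{a|x}=C[\mathcal E_{a|x}]$. Complete positivity of each branch yields Eq.~\eqref{eq:fullrank_pos}. The Choi map $\mathcal E\mapsto C[\mathcal E]$ is linear, so $\sum_a\widehat C_{a|x}=C[\mathcal E]$, which is independent of $x$; this is Eq.~\eqref{eq:fullrank_sum}. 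Trace preservation of $\mathcal E$ yields Eq.~\eqref{eq:fullrank_tp}.

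\emph{Sufficiency.} Assume Eqs.~\eqref{eq:fullrank_pos}--\eqref{eq:fullrank_tp}. Define each branch map by Eq.~\eqref{eq:choi_action}:
\begin{equation*}
\mathcal E_{a|x}(X)=\operatorname{Tr}_A[(X^T\otimes I_B)\widehat C_{a|x}].
\end{equation*}
The Choi correspondence is a bijection between linear maps and operators, so $C[\mathcal E_{a|x}]=\widehat C_{a|x}$. The positivity condition makes each $\mathcal E_{a|x}$ completely positive. The common-sum condition gives $\sum_a\mathcal E_{a|x}=\mathcal E$ with $C[\mathcal E]=C$ for every $x$. The trace condition makes $\mathcal E$ trace preserving. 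Finally, applying the forward map to $C_{a|x}=\widehat C_{a|x}$ returns the original $\Sigma_{a|x}$, so these instruments generate the assemblage.

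\emph{Uniqueness.} Suppose two families of branch maps produce the same $\Sigma_{a|x}$. Their Choi operators are then preimages of the same element under an injective map, so they coincide, and the Choi isomorphism forces the maps themselves to coincide.

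There is no genuine obstacle here. The only step needing care is the composition of the partial transpose with multiplication by $\rho_A\otimes I_B$. The two operations do not commute, so the order and placement of the inverse in Eq.~\eqref{eq:inverse_choi} must be checked against Eq.~\eqref{eq:one_step_branch}. The sufficiency argument also relies on the Choi map being bijective, not merely injective, so that every operator satisfying the three conditions actually arises from some linear map.
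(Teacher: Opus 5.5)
Your proof is correct and follows the same route as the paper. Both arguments use the invertibility of $C\mapsto(\rho_A\otimes I_B)C^{T_A}$ for faithful $\rho_A$, and then transport Choi positivity, the common-sum condition, and trace preservation through this bijection; you additionally spell out the check that Eq.~\eqref{eq:inverse_choi} is the two-sided inverse and that the Choi correspondence is bijective, both of which the paper leaves implicit.
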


\begin{proof}
Equation~\eqref{eq:one_step_branch} is invertible when $\rho_A\succ0$, and its inverse is Eq.~\eqref{eq:inverse_choi}.  Necessity follows from Choi positivity, the common-parent condition, and trace preservation.  Conversely, Eqs.~\eqref{eq:fullrank_pos}--\eqref{eq:fullrank_tp} define completely positive branch maps whose sums are one trace-preserving channel and whose QSOST images are the prescribed operators.
\end{proof}

This theorem is an affine equivalence between faithful-input left-QSOST assemblages and common-parent channel assemblages.  Its significance is representational and operational: positivity belongs to the inverse Choi image, not to the generally non-Hermitian QSOST operator itself.

For a rank-deficient input, the exact condition is the semidefinite feasibility problem
\begin{align}
C_{a|x}&\succeq0,
\label{eq:rankdef_pos}\\
\sum_a C_{a|x}&=C
\quad\text{for every }x,
\label{eq:rankdef_sum}\\
\operatorname{Tr}_B C&=I_A,
\label{eq:rankdef_tp}\\
\Sigma_{a|x}&=(\rho_A\otimes I_B)C_{a|x}^{T_A}.
\label{eq:rankdef_image}
\end{align}
The nonuniqueness of the positive completion outside $\operatorname{supp}\rho_A$ is the one-step precursor of the process-fiber phenomenon studied below.

\section{Standard positive process matrices and the QSOST projection}

\subsection{Process-matrix cone}

Consider two laboratories with spaces $A_I,A_O,B_I,B_O$.  We first use the standard interleaved order
\begin{equation}
A_I\otimes A_O\otimes B_I\otimes B_O.
\end{equation}
For a subsystem $X$, define the trace-and-replace map
\begin{equation}
{}_X W
=
\frac{I_X}{d_X}\otimes\operatorname{Tr}_X W,
\label{eq:trace_replace}
\end{equation}
with tensor factors returned to the original order.  The orthogonal projector onto the bipartite valid-process linear subspace is~\cite{oreshkov2012process,branciard2016witnesses}
\begin{align}
L_V(W)={}&{}_{{A_O}}W+{}_{{B_O}}W-{}_{{A_OB_O}}W
\nonumber\\
&-{}_{{A_IA_O}}W+{}_{{A_IA_OB_O}}W
\nonumber\\
&-{}_{{B_IB_O}}W+{}_{{A_OB_IB_O}}W.
\label{eq:LV}
\end{align}
A standard positive deterministic process matrix satisfies
\begin{equation}
W\succeq0,
\qquad
L_V(W)=W,
\qquad
\operatorname{Tr}W=d_{A_O}d_{B_O}.
\label{eq:det_process}
\end{equation}
We denote the compact convex set of such matrices by
\begin{equation}
\mathfrak P
=
\{W\succeq0:L_V(W)=W,\ \operatorname{Tr}W=d_O\},
\qquad
d_O=d_{A_O}d_{B_O}.
\label{eq:process_set}
\end{equation}

The original QSOST process formula is naturally written in a swap-Jamio\l kowski convention in which the process operator is partially transposed relative to the standard positive matrix~\cite{lie2026qsost}.  Positivity and the swap formula must therefore not be imposed on the same convention without the corresponding transpose.  For semidefinite optimization we retain the standard positive matrix $W$.

Assume
\begin{equation}
d_{A_I}=d_{A_O},
\qquad
d_{B_I}=d_{B_O}.
\label{eq:equal_dimensions}
\end{equation}
Let $\Pi$ permute the interleaved order to the grouped order
\begin{equation}
I\otimes O
=(A_I B_I)\otimes(A_O B_O),
\end{equation}
and write $\overline W=\Pi W\Pi^\dagger$.  Let $S_{I:O}$ be the swap between the isomorphic spaces $I$ and $O$.  The standard-positive QSOST projection is
\begin{equation}
Q_+(W)
=
\operatorname{Tr}_{I}
\left[
S_{I:O}\,\overline W^{T_I}
\right].
\label{eq:Qplus}
\end{equation}
For a product operator $X_I\otimes Y_O$,
\begin{equation}
Q_+(X_I\otimes Y_O)=X_I^T Y_O,
\label{eq:Q_product}
\end{equation}
where the fixed input-output identification is understood.

The complex Hilbert--Schmidt adjoint is
\begin{equation}
Q_+^\dagger(F)
=
\Pi^\dagger
\left[
S_{I:O}(I_I\otimes F)
\right]^{T_I}
\Pi.
\label{eq:Q_adjoint}
\end{equation}
For Hermitian process variables and a real witness pairing, the effective process-space coefficient is
\begin{equation}
L_V\!\left(
\frac{Q_+^\dagger(F)+Q_+^\dagger(F)^\dagger}{2}
\right).
\label{eq:real_adjoint}
\end{equation}
The derivations of Eqs.~\eqref{eq:LV}, \eqref{eq:Qplus}, and \eqref{eq:Q_adjoint} are given in Appendix~\ref{app:process_conventions}.

Define the unnormalized maximally entangled vector in the grouped order and its interleaved representative by
\begin{equation}
|\Omega_{\mathrm g}\rangle
=
\sum_{j=1}^{d_O}|j\rangle_I\otimes|j\rangle_O,
\qquad
|\Omega\rangle=\Pi^\dagger|\Omega_{\mathrm g}\rangle .
\label{eq:Omega_vectors}
\end{equation}
Every occurrence of $L_V$ below acts in the interleaved order and therefore uses $|\Omega\rangle$, not $|\Omega_{\mathrm g}\rangle$.  The standard process constraints imply
\begin{equation}
L_V(|\Omega\rangle\langle\Omega|)=\frac{I}{d_O},
\label{eq:LV_Omega}
\end{equation}
and therefore, for every Hermitian $W$ in the valid linear subspace,
\begin{equation}
\langle\Omega|W|\Omega\rangle
=
\frac{\operatorname{Tr}W}{d_O}.
\label{eq:normalization_functional}
\end{equation}
In particular, deterministic processes obey $\langle\Omega|W|\Omega\rangle=1$.

\subsection{Known process-assemblage structure and the new inverse problem}

For one fixed setting $x$, a positive branch decomposition has the form
\begin{equation}
G_{a|x}\succeq0,
\qquad
\sum_aG_{a|x}=W_x,
\label{eq:general_process_assemblage}
\end{equation}
where $W_x\in\mathfrak P$.  We call a collection satisfying this condition separately for every $x$ a \emph{settingwise realization}.  It is not, by itself, a single standard process assemblage across the settings.  If $x$ is the choice of measurement on one common system in the causal future of the laboratories, absence of signalling from that choice to the laboratories instead requires
\begin{equation}
\sum_aG_{a|x}=W
\quad\text{independent of }x,
\label{eq:common_process_assemblage}
\end{equation}
which is exactly the common-parent condition studied below.  The following process-level realization result is known; we include its constructive proof to fix the transpose and support conventions~\cite{bavaresco2019sdi,hjw1993}.

\begin{proposition}[Common-future process-assemblage realization]
\label{prop:future_realization}
A fully specified finite family $\{G_{a|x}\}_{a,x}$ is generated by POVMs on a party $C$ in the common future of $A$ and $B$ if and only if
\begin{equation}
G_{a|x}\succeq0,
\qquad
\sum_aG_{a|x}=W\in\mathfrak P
\quad\text{for every }x.
\label{eq:future_realization_condition}
\end{equation}
Here $C$ has a trivial output space.
\end{proposition}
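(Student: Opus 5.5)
The plan is to prove the two directions separately, using the process-matrix formalism in the standard positive convention fixed above. Throughout, ``generated by POVMs on $C$'' means the following. There is a tripartite process $W_{ABC}$ with $C$ having input space $C_I$ and trivial output. A POVM $\{M_{a|x}\}_a$ on $C_I$ produces
\[
G_{a|x}=\operatorname{Tr}_{C_I}\!\bigl[(I_{AB}\otimes M_{a|x}^T)W_{ABC}\bigr].
\]
Finally, $C$ lies in the common future of $A$ and $B$, meaning $W_{ABC}$ is a valid tripartite process for which $C$ cannot signal to $A$ or $B$ (automatic, since $d_{C_O}=1$).

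\emph{Necessity.} Given such a realization, positivity $G_{a|x}\succeq0$ follows because $W_{ABC}\succeq0$ and $M_{a|x}^T\succeq0$. The partial trace against a positive operator preserves positivity. Summing over $a$ gives
\[
\sum_a G_{a|x}=\operatorname{Tr}_{C_I}W_{ABC}=:W,
\]
which is visibly independent of $x$ since $\sum_a M_{a|x}=I$. It remains to show $W\in\mathfrak P$. For this I will use the standard fact that tracing out a party with trivial output, i.e.\ inserting the deterministic (trace) instrument, maps valid tripartite processes to valid bipartite ones. Concretely, the tripartite validity projector restricted to $d_{C_O}=1$ commutes with $\operatorname{Tr}_{C_I}$ and reduces to $L_V$. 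The trace normalization $d_{A_O}d_{B_O}$ is inherited because $d_{C_O}=1$. I will check this via the characterization ``$W$ gives unit probability for all pairs of CPTP maps'': any CPTP pair for $A,B$ combined with the trace on $C$ is a CPTP triple.

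\emph{Sufficiency.} This is the constructive step, essentially a purification/steering argument (Hughston--Jozsa--Wootters). Let $C_I\cong\mathcal H_{A_IA_OB_IB_O}$ and take
\[
W_{ABC}=|w\rangle\langle w|,\qquad |w\rangle=(\sqrt{W}\otimes I)|\Omega\rangle ,
\]
up to the transpose convention on $C_I$. Here $|\Omega\rangle$ is the unnormalized maximally entangled vector between $AB$ and $C_I$, and $\operatorname{Tr}_{C_I}W_{ABC}=W$. Define
\[
M_{a|x}=\bigl(W^{-1/2}G_{a|x}W^{-1/2}\bigr)^T \text{ on } \operatorname{supp}W,
\]
with the inverse taken on the support. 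Add the projector onto $(\operatorname{supp}W)^\perp$ to, say, $a=1$. Then $M_{a|x}\succeq0$, and $\sum_a M_{a|x}=I$ by the common sum. The standard identity
\[
\operatorname{Tr}_{C}\bigl[(I\otimes M^T)|w\rangle\langle w|\bigr]=\sqrt W\,M\,\sqrt W
\]
returns $G_{a|x}$, because $G_{a|x}\le W$ forces $\operatorname{supp}G_{a|x}\subseteq\operatorname{supp}W$. This is the point where positivity plus the common sum is used.

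\emph{Main obstacle.} The remaining task is to verify that $|w\rangle\langle w|$ is a valid tripartite process with $C$ in the common future. I expect this step to carry the real difficulty, together with getting the transposes consistent. My approach is to note that, since $C$ has trivial output, any local operations by $A$ and $B$ followed by an arbitrary POVM on $C$ compose to an $x$-independent probability. The total probability summed over $C$'s outcomes equals the bipartite probability computed from $W\in\mathfrak P$, hence is $1$. Stated as a linear-constraint check, this is $L_{V}^{ABC}(W_{ABC})=W_{ABC}$. The only terms in which $C$ is traced-and-replaced are those removing $C_O$, which is trivial. The other terms reduce, after partial trace over $C_I$, to those of $L_V$ applied to $W$. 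Because $C_I$ itself carries no constraints apart from being in the future, the condition holds iff $W$ is valid. I will present this via the operational unit-probability characterization, to avoid expanding the tripartite projector term by term. I will also fix the transpose on $C_I$ so that $G_{a|x}$ appears without a stray transpose.
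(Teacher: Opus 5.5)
Your proof follows the paper's. The necessity argument is the same: positivity and POVM completeness under $\operatorname{Tr}_{C_I}$, plus the fact that the marginal of a process with a terminal party is a deterministic bipartite process. The sufficiency construction is also the same: a canonical purification of $W$, with effects built from $W^{-1/2}G_{a|x}W^{-1/2}$ on $\operatorname{supp}W$. The only structural difference is the size of $C_I$. You take $C_I\cong\mathcal H_{A_IA_OB_IB_O}$ and pad one effect. The paper takes $C_I$ of dimension $r=\operatorname{rank}W$ in the eigenbasis of $W$, so $\sum_aE_{a|x}=I_{C_I}$ holds exactly and no padding is needed.

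There is, however, one concrete error, and it is the stray transpose you anticipated. Your contraction convention already inserts $M_{a|x}^T$, and your identity correctly gives $\operatorname{Tr}_{C_I}[(I\otimes M^T)|w\rangle\langle w|]=\sqrt W M\sqrt W$. With your definition $M_{a|x}=(W^{-1/2}G_{a|x}W^{-1/2})^T$ this returns $\sqrt W(W^{-1/2}G_{a|x}W^{-1/2})^T\sqrt W$. For $W=I/d_O$ that equals $G_{a|x}^T\neq G_{a|x}$ whenever $G_{a|x}$ is not real in the chosen basis.

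The same transpose also breaks completeness. Let $P_W$ be the support projector of $W$. Then $\sum_a(W^{-1/2}G_{a|x}W^{-1/2})^T=P_W^T$, so adding $I-P_W$ gives $I$ only if $P_W$ is real.

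The fix is to drop that transpose: define $M_{a|x}=W^{-1/2}G_{a|x}W^{-1/2}+\delta_{a,1}(I-P_W)$. Then $\sum_aM_{a|x}=I$, and the padding contributes $\sqrt W(I-P_W)\sqrt W=0$. Since $0\preceq G_{a|x}\preceq W$, you get $\sqrt W M_{a|x}\sqrt W=P_WG_{a|x}P_W=G_{a|x}$. The paper keeps the transpose in $E_{a|x}$ only because its contraction $\operatorname{Tr}_{C_I}[(I\otimes E_{a|x})\Upsilon]$ carries none.

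Finally, validity of $|w\rangle\langle w|$ is not a real obstacle. It is positive, and the only deterministic operation of a party with trivial output is the trace, with Choi operator $I_{C_I}$. So normalization on all CPTP triples reduces to that of $\operatorname{Tr}_{C_I}|w\rangle\langle w|=W\in\mathfrak P$. This is exactly the paper's one-line justification.
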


\begin{proof}
For necessity, let $\Upsilon^{A_IA_OB_IB_OC_I}$ be the tripartite process and let $\{E_{a|x}^{C_I}\}_a$ be Charlie's POVM.  Then
\begin{equation}
G_{a|x}
=
\operatorname{Tr}_{C_I}
\left[
(I\otimes E_{a|x})\Upsilon
\right].
\label{eq:future_branch_contraction}
\end{equation}
Positivity is immediate, and POVM completeness gives
\begin{equation}
\sum_aG_{a|x}
=
\operatorname{Tr}_{C_I}\Upsilon
=W,
\end{equation}
independently of $x$.  Because $C$ is terminal, this marginal is a deterministic bipartite process.

For sufficiency, diagonalize $W$ on its support,
\begin{equation}
W=\sum_{k=1}^{r}\lambda_k|k\rangle\langle k|,
\qquad
\lambda_k>0,
\end{equation}
and introduce an $r$-dimensional terminal input $C_I$.  Define
\begin{equation}
|\Psi_W\rangle
=
\sum_{k=1}^{r}\sqrt{\lambda_k}
|k\rangle_{A_IA_OB_IB_O}|k\rangle_{C_I},
\qquad
\Upsilon=|\Psi_W\rangle\langle\Psi_W|.
\label{eq:canonical_future_extension}
\end{equation}
The operator $\Upsilon$ is a valid tripartite process with $C$ in the future because it is positive, $C$ has no output, and
$\operatorname{Tr}_{C_I}\Upsilon=W$.  In the displayed eigenbasis set
\begin{equation}
E_{a|x}
=
\left(
W^{-1/2}G_{a|x}W^{-1/2}
\right)^T,
\label{eq:future_povm_general}
\end{equation}
where the inverse is the Moore--Penrose inverse on $\operatorname{supp}W$.  Congruence and transposition preserve positivity, and Eq.~\eqref{eq:future_realization_condition} gives
\begin{equation}
\sum_aE_{a|x}=I_{C_I}.
\end{equation}
Finally, direct contraction of Eq.~\eqref{eq:canonical_future_extension} gives
\begin{equation}
\operatorname{Tr}_{C_I}
\left[
(I\otimes E_{a|x})\Upsilon
\right]
=
W^{1/2}E_{a|x}^{T}W^{1/2}
=G_{a|x}.
\end{equation}
\end{proof}

An individual branch need not obey $L_V(G_{a|x})=G_{a|x}$.  Imposing the deterministic-process linear constraints on every branch would describe only an external classical mixture of complete normalized processes, for which branch weights are independent of the local interventions.

Proposition~\ref{prop:future_realization} assumes that the full branch processes $G_{a|x}$ are known.  Our problem is different: the experiment determines only
\begin{equation}
\Sigma_{a|x}=Q_+(G_{a|x}),
\label{eq:observed_compressed_branches}
\end{equation}
and the positive processes in its fibers remain unobserved.  The new task is to decide, using only $\{\Sigma_{a|x}\}_{a,x}$, whether there exist positive lifts satisfying Eq.~\eqref{eq:future_realization_condition}.  The next section solves this compressed inverse-completion problem.

\section{Eliminating the positive branch lifts}

The special algebraic form of $Q_+$ permits an exact elimination of all branch process variables.

\subsection{The QSOST image is one process column}

For an operator $\Sigma$ on $O$, define the grouped vectorization and its interleaved representative by
\begin{equation}
\mathfrak v_{\mathrm g}(\Sigma)
=
\sum_{\alpha,\beta}
\Sigma_{\alpha\beta}^{*}
|\alpha\rangle_I\otimes|\beta\rangle_O.
\label{eq:vmap}
\end{equation}
Thus
\begin{equation}
\widetilde{\mathfrak v}(\Sigma)
=
\Pi^\dagger\mathfrak v_{\mathrm g}(\Sigma)
\label{eq:vmap_interleaved}
\end{equation}
belongs to the same interleaved process Hilbert space as $G$.

\begin{lemma}[Column identity]
For every Hermitian operator $G$ in the interleaved order,
\begin{equation}
\widetilde{\mathfrak v}(Q_+(G))=G|\Omega\rangle.
\label{eq:column_identity}
\end{equation}
Consequently,
\begin{equation}
\operatorname{Tr}Q_+(G)
=
\langle\Omega|G|\Omega\rangle.
\label{eq:trace_Q}
\end{equation}
\end{lemma}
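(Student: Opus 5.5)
The plan is to reduce everything to the grouped order, check the identity on matrix units, and use Hermiticity to absorb the complex conjugation built into $\mathfrak v_{\mathrm g}$.

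\emph{Reduction to the grouped order.} Set $\overline G=\Pi G\Pi^\dagger$. Since $\Pi$ is a unitary permutation and $|\Omega\rangle=\Pi^\dagger|\Omega_{\mathrm g}\rangle$, we have
\begin{equation*}
G|\Omega\rangle=\Pi^\dagger\overline G|\Omega_{\mathrm g}\rangle .
\end{equation*}
By Eq.~\eqref{eq:vmap_interleaved}, $\widetilde{\mathfrak v}(Q_+(G))=\Pi^\dagger\mathfrak v_{\mathrm g}(Q_+(G))$. It therefore suffices to prove
\begin{equation*}
\mathfrak v_{\mathrm g}(Q_+(G))=\overline G|\Omega_{\mathrm g}\rangle
\end{equation*}
for Hermitian $\overline G$ on $I\otimes O$. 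Hermiticity of $G$ and of $\overline G$ are equivalent because $\Pi$ is unitary.

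\emph{Expansion in matrix units.} Write
\begin{equation*}
\overline G=\sum_{ijkl}g_{ijkl}\,|i\rangle\langle j|_I\otimes|k\rangle\langle l|_O .
\end{equation*}
The product rule Eq.~\eqref{eq:Q_product}, which follows from the definition Eq.~\eqref{eq:Qplus} and is linear in $\overline G$, gives
\begin{equation*}
Q_+(|i\rangle\langle j|\otimes|k\rangle\langle l|)=|j\rangle\langle i|k\rangle\langle l|=\delta_{ik}|j\rangle\langle l| .
\end{equation*}
Hence the matrix elements of $\Sigma=Q_+(G)$ are $\Sigma_{jl}=\sum_i g_{ijil}$. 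Applying the antilinear map Eq.~\eqref{eq:vmap} yields
\begin{equation*}
\mathfrak v_{\mathrm g}(\Sigma)=\sum_{i,j,l}g_{ijil}^{*}\,|j\rangle_I|l\rangle_O .
\end{equation*}
On the other side, a direct contraction gives
\begin{equation*}
\overline G|\Omega_{\mathrm g}\rangle=\sum_{i,j,k}g_{ijkj}\,|i\rangle_I|k\rangle_O .
\end{equation*}

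\emph{Use of Hermiticity.} These two expressions are not equal for general $G$, because the conjugation in $\mathfrak v_{\mathrm g}$ makes the left side antilinear. This is the only real point of the proof and the reason the lemma is restricted to Hermitian $G$. Hermiticity means $g_{ijkl}^{*}=g_{jilk}$. Substituting this into the first expression and relabelling $i\leftrightarrow j$, $k\leftrightarrow l$ turns $\sum g_{jili}|j\rangle|l\rangle$ into $\sum g_{ijkj}|i\rangle|k\rangle$, which is exactly the second expression. This proves Eq.~\eqref{eq:column_identity}.

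The main obstacle is bookkeeping of conventions: the transpose on $I$, the swap $S_{I:O}$, and the ordering in $\Pi$ must be consistent with Eq.~\eqref{eq:Q_product}. I would take that product formula as the verified input, with its derivation in Appendix~\ref{app:process_conventions}.

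\emph{The trace identity.} From Eq.~\eqref{eq:vmap},
\begin{equation*}
\langle\Omega_{\mathrm g}|\mathfrak v_{\mathrm g}(\Sigma)\rangle=\sum_\alpha\Sigma_{\alpha\alpha}^{*}=(\operatorname{Tr}\Sigma)^{*} .
\end{equation*}
Pairing Eq.~\eqref{eq:column_identity} with $\langle\Omega|$ and using $\Pi\Pi^\dagger=I$ then gives
\begin{equation*}
\langle\Omega|G|\Omega\rangle=\bigl(\operatorname{Tr}Q_+(G)\bigr)^{*} .
\end{equation*}
For Hermitian $G$ the left side is real, so the conjugate can be dropped, and Eq.~\eqref{eq:trace_Q} follows.
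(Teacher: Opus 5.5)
Your proposal is correct and follows essentially the same route as the paper's proof. Both compute the matrix elements of $Q_+(G)$ in the grouped basis, use Hermiticity to turn the conjugated entries of $\mathfrak v_{\mathrm g}$ into the column $\overline G|\Omega_{\mathrm g}\rangle$, then apply $\Pi^\dagger$ and take the overlap with $\langle\Omega|$. Your handling of the trace identity, which first yields $\langle\Omega|G|\Omega\rangle=(\operatorname{Tr}Q_+(G))^{*}$ and then removes the conjugate because the left side is real, is a detail the paper passes over silently.
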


\begin{proof}
In the grouped basis, write
$\overline G_{i\alpha,j\beta}=\langle i,\alpha|\overline G|j,\beta\rangle$.
Equation~\eqref{eq:Qplus} gives
\begin{equation}
[Q_+(G)]_{\alpha\beta}
=
\sum_i\overline G_{ii,\alpha\beta}.
\end{equation}
Hermiticity then implies
\begin{equation}
[\mathfrak v_{\mathrm g}(Q_+(G))]_{\alpha\beta}
=
\sum_i\overline G_{\alpha\beta,ii}
=
[\overline G|\Omega_{\mathrm g}\rangle]_{\alpha\beta}.
\end{equation}
Applying $\Pi^\dagger$ proves Eq.~\eqref{eq:column_identity}.  Taking the overlap with $\langle\Omega|$ proves Eq.~\eqref{eq:trace_Q}.
\end{proof}

\subsection{Unique least positive lift}

\begin{theorem}[Least positive branch lift]
Let $\Sigma$ be a proposed branch QSOST, define
\begin{equation}
p=\operatorname{Tr}\Sigma,
\qquad
|q\rangle=\widetilde{\mathfrak v}(\Sigma).
\label{eq:pq_definition}
\end{equation}
There exists $G\succeq0$ satisfying $Q_+(G)=\Sigma$ if and only if either
\begin{equation}
p>0\text{ is real},
\label{eq:positive_weight}
\end{equation}
or $\Sigma=0$.  For $p>0$, the unique least positive lift in the Loewner order is
\begin{equation}
M(\Sigma)
=
\frac{|q\rangle\langle q|}{p}.
\label{eq:minimal_lift}
\end{equation}
Every positive lift has the unique form
\begin{equation}
G=M(\Sigma)+H,
\qquad
H\succeq0,
\qquad
H|\Omega\rangle=0.
\label{eq:all_lifts}
\end{equation}
\end{theorem}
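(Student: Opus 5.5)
The plan is to reduce everything to the Column identity lemma. The key observation is that $\widetilde{\mathfrak v}$ is an (antilinear) bijection from operators on $O$ onto vectors in the process Hilbert space. Hence, for Hermitian $G$, the constraint $Q_+(G)=\Sigma$ is equivalent to the single vector equation $G|\Omega\rangle=|q\rangle$, with $|q\rangle=\widetilde{\mathfrak v}(\Sigma)$. By Eq.~\eqref{eq:trace_Q} this forces $\langle\Omega|q\rangle=\langle\Omega|G|\Omega\rangle=\operatorname{Tr}\Sigma=p$. The theorem then becomes a statement about positive semidefinite operators with one prescribed column, $G|\Omega\rangle=|q\rangle$, and I will prove it in three steps.

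\emph{Step 1 (necessity).} If $G\succeq0$ lifts $\Sigma$, then $p=\langle\Omega|G|\Omega\rangle$ is real and nonnegative. If $p=0$, positivity gives $G^{1/2}|\Omega\rangle=0$. Hence $|q\rangle=G|\Omega\rangle=0$, and injectivity of $\widetilde{\mathfrak v}$ gives $\Sigma=0$. So either $p>0$ is real or $\Sigma=0$.

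\emph{Step 2 (sufficiency and minimality).} For $p>0$, the candidate $M=|q\rangle\langle q|/p$ is manifestly positive. It satisfies $M|\Omega\rangle=|q\rangle\langle q|\Omega\rangle/p=|q\rangle$, because $\langle q|\Omega\rangle=\overline{p}=p$. By the lemma, $Q_+(M)=\Sigma$. For the case $\Sigma=0$, take $G=0$. To show that every positive lift satisfies $G\succeq M$, I will use the Schur-complement or Cauchy--Schwarz argument. For any vector $|v\rangle$,
\begin{equation}
|\langle v|q\rangle|^2=|\langle v|G|\Omega\rangle|^2\le\langle v|G|v\rangle\,\langle\Omega|G|\Omega\rangle=p\,\langle v|G|v\rangle ,
\end{equation}
which is exactly $\langle v|M|v\rangle\le\langle v|G|v\rangle$. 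Consequently $M$ is a lower bound of the fiber that itself belongs to the fiber. It is therefore the least element, and uniqueness of a least element in a partial order is automatic.

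\emph{Step 3 (parametrization).} Set $H=G-M$. Step 2 gives $H\succeq0$, and $H|\Omega\rangle=|q\rangle-|q\rangle=0$. Conversely, any $H\succeq0$ with $H|\Omega\rangle=0$ yields $G=M+H\succeq0$ with $G|\Omega\rangle=|q\rangle$, hence $Q_+(G)=\Sigma$. Uniqueness of the decomposition is trivial, since $H$ is determined as $G-M$.

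The only delicate point is bookkeeping rather than analysis. The Column identity lemma is stated for Hermitian $G$, and positive lifts are Hermitian, so that is fine. I must also check that $\widetilde{\mathfrak v}$ is injective, since it is a conjugated vectorization composed with the unitary $\Pi^\dagger$. Finally, I must check that $\langle\Omega|q\rangle=p$ holds with the correct complex conjugation, so that $M|\Omega\rangle=|q\rangle$ exactly rather than up to the phase $p^*/p$. I expect this conjugation check to be the main, if minor, obstacle; it resolves because real positivity of $p$ is already established.
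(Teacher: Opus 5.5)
Your proof is correct and matches the paper's argument step for step. Necessity uses $p=\langle\Omega|G|\Omega\rangle$ together with the column identity. Minimality uses the Cauchy--Schwarz form of the paper's factorization $G-M=G^{1/2}(I-|u\rangle\langle u|)G^{1/2}$ with $|u\rangle=G^{1/2}|\Omega\rangle/\sqrt p$.

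The differences are cosmetic. You obtain $H|\Omega\rangle=0$ directly from $G|\Omega\rangle=M|\Omega\rangle=|q\rangle$, whereas the paper uses zero expectation plus positivity. Also, the conjugated vectorization actually gives $\langle q|\Omega\rangle=p$ and $\langle\Omega|q\rangle=\bar p$ for every $\Sigma$, the reverse of what you wrote. This is immaterial because $p>0$ is real.
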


\begin{proof}
If $G\succeq0$, then $p=\langle\Omega|G|\Omega\rangle\ge0$.  If $p=0$, positivity implies $G|\Omega\rangle=0$, so Eq.~\eqref{eq:column_identity} gives $\Sigma=0$.

Suppose $p>0$.  Since $\langle q|\Omega\rangle=p$, the operator in Eq.~\eqref{eq:minimal_lift} satisfies
\begin{equation}
M(\Sigma)|\Omega\rangle=|q\rangle
\end{equation}
and hence $Q_+(M(\Sigma))=\Sigma$.  For any other positive lift $G$, let
\begin{equation}
|u\rangle=\frac{G^{1/2}|\Omega\rangle}{\sqrt p}.
\end{equation}
Then $\langle u|u\rangle=1$ and
\begin{align}
G-M(\Sigma)
&=
G^{1/2}(I-|u\rangle\langle u|)G^{1/2}
\succeq0.
\label{eq:minimal_lift_proof}
\end{align}
The difference has zero expectation on $|\Omega\rangle$ and is positive, so it annihilates $|\Omega\rangle$.  The converse is immediate from Eqs.~\eqref{eq:column_identity} and \eqref{eq:all_lifts}.
\end{proof}

This theorem is the process analogue of a Schur-complement completion.  It is stronger than an unconstrained lifted SDP because it removes every branch process variable analytically.

\subsection{Exact settingwise and common gluing criteria}

Assume
\begin{equation}
\sum_a\Sigma_{a|x}=T
\quad\text{for every }x,
\qquad
\operatorname{Tr}T=1,
\label{eq:process_common_qsost}
\end{equation}
and every nonzero branch has $p_{a|x}=\operatorname{Tr}\Sigma_{a|x}>0$.  Define
\begin{equation}
|q_{a|x}\rangle=\widetilde{\mathfrak v}(\Sigma_{a|x}),
\qquad
R_x=
\sum_a
\frac{|q_{a|x}\rangle\langle q_{a|x}|}{p_{a|x}}.
\label{eq:Rx}
\end{equation}
Zero branches are omitted.  The operators $R_x$ are determined entirely by the conditioned QSOST data and satisfy
\begin{equation}
R_x\succeq0,
\qquad
Q_+(R_x)=T,
\qquad
\langle\Omega|R_x|\Omega\rangle=1.
\label{eq:Rx_properties}
\end{equation}

\begin{theorem}[Exact process-domination criterion]
A fixed setting $x$ is process realizable if and only if there exists a deterministic process $W_x$ such that
\begin{equation}
W_x\succeq R_x.
\label{eq:settingwise_domination}
\end{equation}
All settings admit one common deterministic parent process if and only if there exists a single $W$ satisfying
\begin{equation}
\begin{aligned}
L_V(W)&=W,
&\operatorname{Tr}W&=d_O,\\
W&\succeq R_x
&&\text{for every }x.
\end{aligned}
\label{eq:common_domination}
\end{equation}
\end{theorem}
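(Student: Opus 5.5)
The argument reduces both statements to the least-lift theorem applied branch by branch. Recall that a setting $x$ is process realizable when there exist $G_{a|x}\succeq0$ with $Q_+(G_{a|x})=\Sigma_{a|x}$ and $\sum_aG_{a|x}=W_x\in\mathfrak P$. The common version is the same condition with one $x$-independent $W$.

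\emph{Necessity.} Suppose such lifts exist. By the least-positive-lift theorem, each nonzero branch decomposes uniquely as
\begin{equation}
G_{a|x}=M(\Sigma_{a|x})+H_{a|x},\qquad H_{a|x}\succeq0,\qquad H_{a|x}|\Omega\rangle=0 .
\end{equation}
A zero branch has $p_{a|x}=0$, so the same theorem forces $G_{a|x}|\Omega\rangle=0$ with $G_{a|x}\succeq0$. Such a branch therefore contributes only a positive term annihilating $|\Omega\rangle$. Summing over $a$ gives
\begin{equation}
W_x=R_x+H_x,\qquad H_x=\sum_a H_{a|x}\succeq0 ,
\end{equation}
with zero branches absorbed into $H_x$. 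Hence $W_x\succeq R_x$. In the common case $W_x=W$ for all $x$, so $W\succeq R_x$ for every $x$, and $W$ satisfies the linear and trace conditions of Eq.~\eqref{eq:common_domination} because $W\in\mathfrak P$.

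\emph{Sufficiency.} Take $W_x\in\mathfrak P$ with $W_x\succeq R_x$, or one common $W$ for the common case. Pick any nonzero branch $a_0$ and set
\begin{equation}
G_{a_0|x}=M(\Sigma_{a_0|x})+(W_x-R_x),\qquad G_{a|x}=M(\Sigma_{a|x})\ (a\neq a_0),
\end{equation}
with $G_{a|x}=0$ for zero branches. A nonzero branch exists because $\sum_a\operatorname{Tr}\Sigma_{a|x}=\operatorname{Tr}T=1$. All $G_{a|x}$ are positive, and $\sum_a G_{a|x}=W_x$. It remains to check $Q_+(G_{a_0|x})=\Sigma_{a_0|x}$, which by the column identity Eq.~\eqref{eq:column_identity} amounts to $(W_x-R_x)|\Omega\rangle=0$. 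The difference is positive, so it suffices that its expectation on $|\Omega\rangle$ vanishes. This follows from Eq.~\eqref{eq:normalization_functional}, which gives $\langle\Omega|W_x|\Omega\rangle=\operatorname{Tr}W_x/d_O=1$, together with $\langle\Omega|R_x|\Omega\rangle=1$ from Eq.~\eqref{eq:Rx_properties}. Proposition~\ref{prop:future_realization} then gives the common-future interpretation in the common case.

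\emph{Main obstacle.} The key step is the normalization match $\langle\Omega|W|\Omega\rangle=1=\langle\Omega|R_x|\Omega\rangle$. It guarantees that the slack $W-R_x$ lies in the kernel direction of $Q_+$, so it can be added to a branch without changing any observed $\Sigma_{a|x}$. This uses the interleaved $|\Omega\rangle$ and the fact that $L_V(|\Omega\rangle\langle\Omega|)=I/d_O$, so I will cite Eqs.~\eqref{eq:LV_Omega}–\eqref{eq:normalization_functional} explicitly there. The settingwise requirement that all settings share the same $T$ plays no role in this step.
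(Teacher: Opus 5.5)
Your proposal is correct and follows the paper's proof: necessity by summing the least-lift bound $G_{a|x}\succeq M(\Sigma_{a|x})$ over outcomes, and sufficiency by using $\langle\Omega|W_x|\Omega\rangle=1=\langle\Omega|R_x|\Omega\rangle$ to show that the positive slack $W_x-R_x$ annihilates $|\Omega\rangle$, so it can be added to the branch lifts without changing their QSOST images. Putting all the slack on a single nonzero branch is the special case $\lambda_{a_0|x}=1$ of the paper's arbitrary nonnegative weights, and your explicit treatment of zero branches fills in a detail the paper leaves implicit.
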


\begin{proof}
Suppose positive branch lifts $G_{a|x}$ sum to a deterministic process $W_x$.  The least-lift theorem gives
\begin{equation}
G_{a|x}\succeq
\frac{|q_{a|x}\rangle\langle q_{a|x}|}{p_{a|x}},
\end{equation}
so $W_x\succeq R_x$.

Conversely, let $W_x\succeq R_x$ and set $H_x=W_x-R_x\succeq0$.  Equations~\eqref{eq:normalization_functional} and \eqref{eq:Rx_properties} imply
\begin{equation}
\langle\Omega|H_x|\Omega\rangle=0.
\end{equation}
Positivity therefore gives $H_x|\Omega\rangle=0$ and $Q_+(H_x)=0$.  Distribute $H_x$ among the outcomes with arbitrary nonnegative weights $\lambda_{a|x}$ summing to one:
\begin{equation}
G_{a|x}
=
\frac{|q_{a|x}\rangle\langle q_{a|x}|}{p_{a|x}}
+
\lambda_{a|x}H_x.
\end{equation}
These operators are positive, have the prescribed QSOST images, and sum to $W_x$.  The same argument with one $W$ proves the common-parent statement.
\end{proof}

Define the process-domination cost
\begin{equation}
\begin{aligned}
\mu(\{R_x\})
={}&\min_W\operatorname{Tr}W\\
\text{subject to }{}&L_V(W)=W,
\qquad W\succeq R_x\quad\forall x.
\end{aligned}
\label{eq:mu_primal}
\end{equation}
Strict feasibility is obtained with a sufficiently large multiple of the identity.  Since any feasible $W$ obeys
\begin{equation}
\frac{\operatorname{Tr}W}{d_O}
=
\langle\Omega|W|\Omega\rangle
\ge1,
\end{equation}
we have $\mu\ge d_O$.  The previous theorem gives
\begin{equation}
\boxed{
\text{common-process realizability}
\iff
\mu(\{R_x\})=d_O.}
\label{eq:mu_criterion}
\end{equation}
Similarly, setting $x$ is individually realizable if and only if $\mu(\{R_x\})=d_O$ for that single $R_x$.

The dual SDP is
\begin{equation}
\mu(\{R_x\})
=
\max_{\{Y_x\}}
\sum_x\operatorname{Tr}(Y_xR_x)
\label{eq:mu_dual_objective}
\end{equation}
subject to
\begin{equation}
Y_x\succeq0,
\qquad
L_V\!\left(\sum_xY_x\right)=I.
\label{eq:mu_dual_constraints}
\end{equation}
Strong duality follows from the strict primal point.  Hence every common-process assemblage obeys the QSOST-only nonlinear witness
\begin{equation}
\sum_{x,a}
\frac{\langle q_{a|x}|Y_x|q_{a|x}\rangle}{p_{a|x}}
\le d_O
\label{eq:nonlinear_witness}
\end{equation}
for every dual-feasible $\{Y_x\}$.

A linear interferometric witness follows by supporting-hyperplane linearization.  Fix a reference assemblage with $p^0_{a|x}>0$ and vectors $|q^0_{a|x}\rangle$.  The quadratic-over-linear inequality
\begin{align}
\frac{\langle q|Y|q\rangle}{p}
\ge{}&
\frac{2\operatorname{Re}\langle q^0|Y|q\rangle}{p^0}
-
\frac{\langle q^0|Y|q^0\rangle}{(p^0)^2}p
\label{eq:quadratic_tangent}
\end{align}
holds for $Y\succeq0$, because their difference is
\begin{equation}
\frac{1}{p}
\left\|Y^{1/2}|q\rangle-
\frac{p}{p^0}Y^{1/2}|q^0\rangle
\right\|^2.
\end{equation}
Combining Eqs.~\eqref{eq:nonlinear_witness} and \eqref{eq:quadratic_tangent} gives the linear inequality
\begin{align}
\sum_{x,a}
\left[
\frac{2\operatorname{Re}\langle q^0_{a|x}|Y_x|q_{a|x}\rangle}{p^0_{a|x}}
-
\frac{\langle q^0_{a|x}|Y_x|q^0_{a|x}\rangle}{(p^0_{a|x})^2}
 p_{a|x}
\right]
\le d_O.
\label{eq:linear_witness}
\end{align}
The left-hand side is real-linear in the QSOST operators and can be expanded in the measured unitary basis of Eq.~\eqref{eq:interference_coefficients}.

\section{Hidden process fibers in the bipartite-qubit scenario}

\begin{theorem}[Kernel criterion for deterministic-process information loss]
Under Eq.~\eqref{eq:equal_dimensions}, the following statements are equivalent:
\begin{align}
&Q_+\text{ is injective on }\mathfrak P;
\label{eq:injective_on_processes}\\
&\ker Q_+\cap\operatorname{ran}L_V\cap\operatorname{Herm}
=\{0\}.
\label{eq:trivial_hidden_kernel}
\end{align}
If the intersection in Eq.~\eqref{eq:trivial_hidden_kernel} is nonzero, every nonzero $\Delta$ in it generates distinct full-rank processes with the same QSOST:
\begin{equation}
W_\pm=\frac{I}{d_O}\pm\epsilon\Delta,
\qquad
0<\epsilon<\frac{1}{d_O\|\Delta\|_\infty}.
\label{eq:generic_hidden_pair}
\end{equation}
\end{theorem}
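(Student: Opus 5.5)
The proof will run through two implications and then the explicit construction, using only the linear structure of $\mathfrak P$ and the interior point $I/d_O$.

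\emph{Trivial kernel implies injectivity.} Let $W_1,W_2\in\mathfrak P$ satisfy $Q_+(W_1)=Q_+(W_2)$ and set $\Delta=W_1-W_2$. The operator $\Delta$ is Hermitian, because both $W_i$ are positive. It lies in $\operatorname{ran}L_V$, because $L_V$ is linear and fixes each $W_i$. Linearity of $Q_+$ gives $Q_+(\Delta)=0$. Hence $\Delta$ belongs to the intersection in Eq.~\eqref{eq:trivial_hidden_kernel}, which is $\{0\}$ by assumption, so $W_1=W_2$.

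\emph{Nonzero intersection implies non-injectivity.} Take a nonzero $\Delta$ in the intersection and form $W_\pm$ as in Eq.~\eqref{eq:generic_hidden_pair}. I verify the three defining conditions of $\mathfrak P$.
\begin{enumerate}
\item \emph{Membership in the valid subspace.} From Eq.~\eqref{eq:LV}, each trace-and-replace term fixes $I$. The signed coefficients sum to $1+1-1-1+1-1+1=1$, so $L_V(I)=I$. Since $L_V$ is a projector, $L_V(\Delta)=\Delta$. Together these give $L_V(W_\pm)=W_\pm$.
\item \emph{Trace.} The normalization functional Eq.~\eqref{eq:normalization_functional} holds for Hermitian elements of $\operatorname{ran}L_V$. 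Combined with Eq.~\eqref{eq:trace_Q}, it gives
\begin{equation}
\operatorname{Tr}\Delta=d_O\langle\Omega|\Delta|\Omega\rangle=d_O\operatorname{Tr}Q_+(\Delta)=0 .
\end{equation}
Hence $\operatorname{Tr}W_\pm=\operatorname{Tr}I/d_O=d_O$.
\item \emph{Positivity and full rank.} Because $\Delta$ is Hermitian, $\pm\epsilon\Delta\succeq-\epsilon\|\Delta\|_\infty I$. The bound on $\epsilon$ makes $\epsilon\|\Delta\|_\infty<1/d_O$, so both $W_\pm$ are strictly positive.
\end{enumerate}
Linearity gives $Q_+(W_+)-Q_+(W_-)=2\epsilon Q_+(\Delta)=0$, so the two processes have the same QSOST. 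They are distinct since $W_+-W_-=2\epsilon\Delta\neq0$. This contradicts injectivity and proves the converse by contraposition.

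The only step needing care is the trace claim, because $\Omega$ lives in the interleaved order. This is already handled by Eq.~\eqref{eq:LV_Omega}, which leads to Eq.~\eqref{eq:normalization_functional}. A second point to confirm is that $L_V(I)=I$ with the signs exactly as written in Eq.~\eqref{eq:LV}; the coefficient sum above settles it. No other obstacle remains, since all remaining steps are linear algebra.
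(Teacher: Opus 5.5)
Your proof is correct and follows the paper's argument. In one direction, the difference of two same-image processes is a nonzero element of the Hermitian hidden kernel. In the other, Eqs.~\eqref{eq:trace_Q} and \eqref{eq:normalization_functional} force $\operatorname{Tr}\Delta=0$, which lets you perturb the interior point $I/d_O$ into two distinct strictly positive processes with the same QSOST. Your explicit checks that $L_V(I)=I$ and that $\epsilon\|\Delta\|_\infty<1/d_O$ yields $W_\pm\succ0$ simply fill in steps the paper asserts without computation.
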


\begin{proof}
If $Q_+$ is not injective on $\mathfrak P$, the difference of two distinct processes with the same image is a nonzero Hermitian element of the intersection in Eq.~\eqref{eq:trivial_hidden_kernel}.

Conversely, let $0\ne\Delta$ belong to that intersection.  Equations~\eqref{eq:trace_Q} and \eqref{eq:normalization_functional} give
\begin{equation}
0=\operatorname{Tr}Q_+(\Delta)
=\frac{\operatorname{Tr}\Delta}{d_O},
\end{equation}
so $\operatorname{Tr}\Delta=0$.  The maximally mixed process $I/d_O$ is an interior point of $\mathfrak P$.  The bound in Eq.~\eqref{eq:generic_hidden_pair} ensures $W_\pm\succ0$, while $L_V(W_\pm)=W_\pm$, $\operatorname{Tr}W_\pm=d_O$, and $Q_+(W_+)=Q_+(W_-)$.  Thus $Q_+$ is not injective on $\mathfrak P$.
\end{proof}

Set all four local input and output dimensions to two.  Then $d_O=4$ and the process matrix acts on a sixteen-dimensional Hilbert space.

\begin{theorem}[Dimension of the qubit hidden fiber]
For bipartite qubits,
\begin{equation}
\dim_{\mathbb R}
\left[
\ker Q_+\cap\operatorname{ran}L_V\cap\operatorname{Herm}
\right]
=63.
\label{eq:kernel_dimension}
\end{equation}
The real image of $Q_+$ on the Hermitian valid-process subspace has dimension $25$, while the valid Hermitian process subspace has dimension $88$.
\end{theorem}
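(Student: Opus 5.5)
The plan is to compute everything in a Hermitian Pauli product basis, where both $L_V$ and $Q_+$ act diagonally up to phases, and then obtain the kernel dimension by rank--nullity. Write $\sigma_0=I$ and $\sigma_1,\sigma_2,\sigma_3$ for the Paulis. Expand a Hermitian $W$ in the interleaved order as a real combination of the $256$ strings $\sigma_i^{A_I}\otimes\sigma_j^{A_O}\otimes\sigma_k^{B_I}\otimes\sigma_l^{B_O}$.

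First I fix which strings survive $L_V$. Each trace-and-replace map ${}_XW$ in Eq.~\eqref{eq:LV} is diagonal in this basis: it keeps a string iff the string is the identity on $X$. So $L_V$ is the diagonal projector whose eigenvalue on a string is the inclusion--exclusion sum of the seven indicator terms. Checking the cases shows that exactly the standard allowed classes survive, namely
\begin{equation*}
1,\quad A_I,\quad B_I,\quad A_IB_I,\quad A_IA_O,\quad B_IB_O,\quad A_IA_OB_I,\quad B_IB_OA_I .
\end{equation*}
Here a class lists the nontrivial factors, and e.g.\ $A_IA_O$ means the string is nontrivial on both $A_I$ and $A_O$ and trivial elsewhere. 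Counting $3^{\#\text{factors}}$ per class gives
\begin{equation*}
1+3+3+9+9+9+27+27=88 ,
\end{equation*}
which is the stated dimension of the valid Hermitian subspace.

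Next I compute $Q_+$ on each surviving string using Eq.~\eqref{eq:Q_product}. This factorizes over the two parties: the image is $(\sigma_i^T\sigma_j)_{A}\otimes(\sigma_k^T\sigma_l)_{B}$. Using $\sigma_2^T=-\sigma_2$, each local factor $\sigma_i^T\sigma_j$ is one of three types: $\pm I$ (real) if $i=j$; a real multiple of a single Pauli if exactly one of $i,j$ vanishes; and $\pm i\,\sigma_m$ (imaginary) if $i,j$ are distinct and both nonzero. So every basis string is sent to $c\,P$, with $P$ a Pauli string on $O$ and $c\in\{\pm1,\pm i\}$. The real image is then spanned by the pairs ($P$, real or imaginary class), and its dimension is the number of distinct pairs reached. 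The reason is that the $32$ operators $P$ and $iP$ are $\mathbb R$-linearly independent in $\mathcal L(\mathbb C^4)$.

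The main obstacle is this enumeration, and I would organize it party by party. For party $A$, the allowed local pairs $(i,j)$ are:
\begin{itemize}
\item $(0,0)$, giving $I$;
\item $(i,0)$, giving real $\sigma_i$;
\item $(i,i)$, giving real $I$;
\item $(i,j)$ with $i\neq j$ both nonzero, giving imaginary $\sigma_m$.
\end{itemize}
The pair $(0,j)$ is forbidden, because $A_O$ is never nontrivial without $A_I$. The same list holds for $B$. However, the joint constraint is not a product: nontrivial $A_O$ and nontrivial $B_O$ cannot occur together. I would therefore tabulate the reachable (local Pauli, phase) pairs for party $A$ when $A_O$ is trivial and when it is nontrivial, and likewise for $B$. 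I would then take the union over the allowed combinations, multiplying the phases (imaginary times imaginary is real), and count the distinct global (string, phase-class) pairs. I expect this count to come out at $25$; I would cross-check it numerically by building the $88\times32$ real matrix of images and computing its rank.

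Finally, rank--nullity for the real-linear map $Q_+$ restricted to $\operatorname{ran}L_V\cap\operatorname{Herm}$ gives $88-25=63$ for the dimension of $\ker Q_+\cap\operatorname{ran}L_V\cap\operatorname{Herm}$. This is Eq.~\eqref{eq:kernel_dimension}. In particular the intersection is nonzero, so the kernel criterion theorem applies.
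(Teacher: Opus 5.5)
\textbf{There is a genuine gap.} Your strategy is the paper's: Pauli basis, diagonal $L_V$, Eq.~\eqref{eq:Q_product} on product strings, counting distinct string/phase pairs, and rank--nullity. However, your identification of $\operatorname{ran}L_V$ is wrong, and the error changes the answer.

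Take a string that is nontrivial exactly on $A_IA_O$. The seven terms of Eq.~\eqref{eq:LV}, in order, contribute $0+1-0-0+0-1+0=0$. For a string nontrivial exactly on $B_IB_O$ they give $1+0-0-1+0-0+0=0$. So both classes are annihilated. Physically, an $A_IA_O$ term would pair with the unconstrained $A_IA_O$ part of Alice's CPTP Choi operator and spoil normalization. Conversely, strings nontrivial exactly on $A_OB_I$ or on $A_IB_O$ have eigenvalue $1$.

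The correct list is therefore $1,A_I,B_I,A_IB_I,A_OB_I,A_IA_OB_I,A_IB_O,A_IB_IB_O$. These are the patterns $0000,1000,0010,1010,0110,1110,1001,1011$ used in the paper. Your total of $88$ comes out right only because each swapped class has $9$ elements. In particular, your rule that $(0,j)$ is forbidden for party $A$ is false. The correct joint rule is this: a nontrivial $A_O$ forces a nontrivial $B_I$ and a trivial $B_O$, and symmetrically a nontrivial $B_O$ forces a nontrivial $A_I$ and a trivial $A_O$.

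This is not cosmetic, because the enumeration you deferred is exactly where it matters. With your list, $A_IA_O$ strings with $i\neq j$ map to $\pm i\,\sigma_m\otimes I$, and $B_IB_O$ strings map to $\pm i\,I\otimes\sigma_n$. Besides the nine directions $i\sigma_r\otimes\sigma_s$, you would reach six more anti-Hermitian directions. You would get rank $16+9+6=31$ and kernel dimension $57$, not $25$ and $63$.

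With the correct list, an imaginary local factor on one party occurs only in $A_IA_OB_I$ or $A_IB_IB_O$. In both cases it is multiplied by a real nonidentity Pauli on the other party. Hence the anti-Hermitian part of the image is exactly the nine $i\sigma_r\otimes\sigma_s$. The classes $A_OB_I$ and $A_IB_O$ add only real $\sigma\otimes\sigma$ directions, which are already reached from $A_IB_I$. That gives $16+9=25$ and $88-25=63$, as in the paper.

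So you need to redo the $L_V$ eigenvalue check; once it is corrected, your argument coincides with the paper's.
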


A Pauli-basis proof is given in Appendix~\ref{app:kernel}.  Together with the kernel criterion and the compression--gluing theorem below, Eq.~\eqref{eq:kernel_dimension} already implies the existence of strict conditioned gluing failures.

An explicit nonzero kernel direction is
\begin{align}
\Delta={}&
X^{A_I}\otimes I^{A_O}\otimes X^{B_I}\otimes X^{B_O}
\nonumber\\
&-
X^{A_I}\otimes I^{A_O}\otimes I^{B_I}\otimes I^{B_O}.
\label{eq:Delta}
\end{align}
Both Pauli strings lie in $\operatorname{ran}L_V$, and Eq.~\eqref{eq:Q_product} gives
\begin{equation}
Q_+(\Delta)=0.
\label{eq:Delta_kernel}
\end{equation}
The spectrum of $\Delta$ consists of $+2$ with multiplicity four, $-2$ with multiplicity four, and $0$ with multiplicity eight, so
\begin{equation}
\|\Delta\|_\infty=2,
\qquad
\|\Delta\|_1=16.
\label{eq:Delta_norms}
\end{equation}
For
\begin{equation}
0<\epsilon<\frac18,
\label{eq:epsilon_range}
\end{equation}
define
\begin{equation}
W_\pm
=
\frac{I_{16}}{4}\pm\epsilon\Delta.
\label{eq:Wpm}
\end{equation}
These are distinct full-rank deterministic process matrices and
\begin{equation}
Q_+(W_+)=Q_+(W_-)=\frac{I_4}{4}.
\label{eq:same_unconditional_QSOST}
\end{equation}

\section{Strict QSOST--process gluing separation}

\subsection{Exposure of a hidden parent process}

The following construction converts any pair of distinct parent processes in one QSOST fiber into a strict conditioned separation.

\begin{lemma}[Full exposure by a rank-one ensemble]
Let $W$ be a deterministic process of rank $r$.  There exists a decomposition
\begin{equation}
W=\sum_{a=0}^{r-1}|v_a\rangle\langle v_a|
\label{eq:rank_one_decomposition}
\end{equation}
such that
\begin{equation}
|\langle v_a|\Omega\rangle|^2=\frac1r
\quad\text{for every }a.
\label{eq:equal_probabilities}
\end{equation}
For every branch in this decomposition,
\begin{equation}
M(Q_+(|v_a\rangle\langle v_a|))
=|v_a\rangle\langle v_a|.
\label{eq:branch_exposed}
\end{equation}
\end{lemma}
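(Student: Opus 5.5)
The plan is to build the decomposition from a square-root factorization of $W$ twisted by a suitable unitary, and then to read off the exposure property directly from the least-lift theorem.

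Write $W=W^{1/2}W^{1/2}$ and let $P$ be the support projector of $W$, which has rank $r$. For any orthonormal basis $\{|e_a\rangle\}_{a=0}^{r-1}$ of $\operatorname{supp}W$, the vectors $|v_a\rangle=W^{1/2}|e_a\rangle$ satisfy
\begin{equation}
\sum_a|v_a\rangle\langle v_a|=W^{1/2}PW^{1/2}=W .
\end{equation}
The overlaps are $\langle v_a|\Omega\rangle=\langle e_a|w\rangle$, where $|w\rangle=W^{1/2}|\Omega\rangle$. By Eq.~\eqref{eq:normalization_functional} and $\operatorname{Tr}W=d_O$, we have
\begin{equation}
\langle w|w\rangle=\langle\Omega|W|\Omega\rangle=1 .
\end{equation}
Moreover $|w\rangle$ lies in $\operatorname{supp}W$, since it lies in the range of $W^{1/2}$, and $r\ge1$ because $W\neq0$. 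So the task reduces to the following: given a unit vector $|w\rangle$ in an $r$-dimensional space, find an orthonormal basis of that space in which every component of $|w\rangle$ has modulus squared $1/r$. I expect this to be the only substantive step.

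For that step, choose any orthonormal basis $\{|f_k\rangle\}$ of $\operatorname{supp}W$ with $|f_0\rangle=|w\rangle$, and take the Fourier-rotated basis
\begin{equation}
|e_a\rangle=\frac{1}{\sqrt r}\sum_{k=0}^{r-1}e^{2\pi i ak/r}|f_k\rangle .
\end{equation}
This is orthonormal because the discrete Fourier matrix is unitary. It gives $\langle e_a|w\rangle=1/\sqrt r$ for every $a$, which is exactly Eq.~\eqref{eq:equal_probabilities}.

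For Eq.~\eqref{eq:branch_exposed}, apply the least-positive-lift theorem to the branch $\Sigma_a=Q_+(|v_a\rangle\langle v_a|)$. By Eq.~\eqref{eq:trace_Q},
\begin{equation}
p_a=\operatorname{Tr}\Sigma_a=|\langle\Omega|v_a\rangle|^2=\frac1r>0 .
\end{equation}
By the column identity, Eq.~\eqref{eq:column_identity},
\begin{equation}
|q_a\rangle=\widetilde{\mathfrak v}(\Sigma_a)=|v_a\rangle\langle v_a|\Omega\rangle .
\end{equation}
Substituting into Eq.~\eqref{eq:minimal_lift} gives
\begin{equation}
M(\Sigma_a)=\frac{|v_a\rangle\langle v_a|\,|\langle v_a|\Omega\rangle|^2}{p_a}=|v_a\rangle\langle v_a| .
\end{equation}
This is Eq.~\eqref{eq:branch_exposed}. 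More generally, any rank-one positive lift with nonzero weight is its own least lift. The argument is routine once the Fourier basis has been chosen.
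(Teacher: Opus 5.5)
Your proposal is correct and follows the paper's proof essentially step for step. You use the unit vector $W^{1/2}|\Omega\rangle\in\operatorname{supp}W$, extend it to an orthonormal basis, Fourier-rotate that basis, set $|v_a\rangle=W^{1/2}|u_a\rangle$, and then evaluate the least lift through the column identity, which is the same chain of steps.
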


\begin{proof}
The vector $|r_W\rangle=W^{1/2}|\Omega\rangle$ lies in $\operatorname{supp}W$ and has unit norm because $W$ is deterministic.  Extend it to an orthonormal basis $\{|e_k\rangle\}_{k=0}^{r-1}$ of $\operatorname{supp}W$ and define
\begin{equation}
|u_a\rangle
=
\frac1{\sqrt r}
\sum_{k=0}^{r-1}
\exp\left(\frac{2\pi iak}{r}\right)|e_k\rangle.
\end{equation}
Set $|v_a\rangle=W^{1/2}|u_a\rangle$.  Completeness on the support gives Eq.~\eqref{eq:rank_one_decomposition}, while
\begin{equation}
\langle v_a|\Omega\rangle
=
\langle u_a|r_W\rangle
=
\frac1{\sqrt r}.
\end{equation}
For $G_a=|v_a\rangle\langle v_a|$, one has $|q_a\rangle=G_a|\Omega\rangle$ and $p_a=|\langle v_a|\Omega\rangle|^2$, so Eq.~\eqref{eq:minimal_lift} gives Eq.~\eqref{eq:branch_exposed}.
\end{proof}

\begin{theorem}[Compression--gluing equivalence]
Let $\mathcal C_{\mathrm{settingwise}}$ be the class of all finite conditioned QSOST families that obey Eq.~\eqref{eq:process_common_qsost} and admit a settingwise realization as in Eq.~\eqref{eq:general_process_assemblage}.  Let $\mathcal C_{\mathrm{common}}$ be the subclass admitting one common parent as in Eq.~\eqref{eq:common_process_assemblage}.  Then
\begin{equation}
Q_+|_{\mathfrak P}\text{ is injective}
\quad\Longleftrightarrow\quad
\mathcal C_{\mathrm{settingwise}}
=
\mathcal C_{\mathrm{common}}.
\label{eq:compression_gluing_equivalence}
\end{equation}
\end{theorem}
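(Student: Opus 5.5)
Both directions are proved separately; the reverse direction is done by contrapositive.

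\emph{Forward direction (injective $\Rightarrow$ the classes agree).} The inclusion $\mathcal C_{\mathrm{common}}\subseteq\mathcal C_{\mathrm{settingwise}}$ holds by definition, so only the reverse inclusion needs proof. Take a family in $\mathcal C_{\mathrm{settingwise}}$, with settingwise parents $W_x\in\mathfrak P$ and positive branches $G_{a|x}$ summing to $W_x$. By linearity of $Q_+$ and Eq.~\eqref{eq:process_common_qsost},
\[
Q_+(W_x)=\sum_a Q_+(G_{a|x})=\sum_a\Sigma_{a|x}=T
\]
for every $x$. Injectivity of $Q_+$ on $\mathfrak P$ then forces $W_x=W$ independently of $x$. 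The same lifts $G_{a|x}$ therefore already form a common-parent realization, and the family lies in $\mathcal C_{\mathrm{common}}$.

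\emph{Reverse direction (non-injective $\Rightarrow$ strict gap).} Suppose $W_0\neq W_1$ in $\mathfrak P$ satisfy $Q_+(W_0)=Q_+(W_1)=T$. Here $\operatorname{Tr}T=\langle\Omega|W_0|\Omega\rangle=1$ by Eqs.~\eqref{eq:trace_Q} and \eqref{eq:normalization_functional}. For each setting $x\in\{0,1\}$, apply the full-exposure lemma to $W_x$ to obtain a rank-one decomposition
\[
W_x=\sum_a |v_{a|x}\rangle\langle v_{a|x}|, \qquad |\langle v_{a|x}|\Omega\rangle|^2=1/r_x>0.
\]
Define $\Sigma_{a|x}=Q_+(|v_{a|x}\rangle\langle v_{a|x}|)$.
\begin{itemize}
\item The branches sum to $T$ for both settings.
\item Each branch has positive weight $p_{a|x}=1/r_x$.
\item The family lies in $\mathcal C_{\mathrm{settingwise}}$, witnessed by the lifts $|v_{a|x}\rangle\langle v_{a|x}|$ and the parents $W_x$.
\end{itemize}
By Eq.~\eqref{eq:branch_exposed} the least lifts are the rank-one projectors themselves, so $R_x=W_x$.

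It remains to show that this family is not in $\mathcal C_{\mathrm{common}}$. By the process-domination criterion, a common parent would be some $W\in\mathfrak P$ with $W\succeq W_0$ and $W\succeq W_1$. Consider $H=W-W_0\succeq0$. Since both $W$ and $W_0$ are deterministic, Eq.~\eqref{eq:normalization_functional} gives
\[
\operatorname{Tr}H=d_O-d_O=0 .
\]
A positive semidefinite operator with zero trace vanishes, so $W=W_0$. The same argument with $W_1$ gives $W=W_1$, contradicting $W_0\neq W_1$. Hence the family lies in $\mathcal C_{\mathrm{settingwise}}\setminus\mathcal C_{\mathrm{common}}$ and the inclusion is strict.

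\emph{Where the difficulty lies.} The reverse direction carries the weight. The key points are that exposure makes the domination constraints tight ($R_x=W_x$), and that the trace normalization makes $\mathfrak P$ "Loewner-antichain-like": no deterministic process strictly dominates another. Both follow directly from the exposure lemma and Eq.~\eqref{eq:normalization_functional}, so the remaining checks (finiteness of the family, positivity of the branch weights) are routine.
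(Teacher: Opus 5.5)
Your proof is correct and follows the paper's argument essentially step for step. In the forward direction, injectivity forces $W_x=W$; in the converse, the exposure lemma makes the least lifts sum to $W_x$ (your $R_x=W_x$), so any common deterministic parent dominates both $W_0$ and $W_1$ at equal trace and therefore equals both. The paper adds only one detail you left implicit: padding with zero outcomes when $r_0\neq r_1$, so that both settings share one outcome alphabet; your argument goes through unchanged with that padding.
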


\begin{proof}
Assume first that $Q_+|_{\mathfrak P}$ is injective.  For any settingwise realization,
\begin{equation}
Q_+(W_x)
=
\sum_a\Sigma_{a|x}
=T
\quad\text{for every }x.
\end{equation}
Injectivity gives $W_x=W$ for every $x$, so the same positive branch lifts already form a common realization.  Hence the two classes coincide.

For the converse, suppose $Q_+|_{\mathfrak P}$ is not injective.  Choose distinct $W_0,W_1\in\mathfrak P$ such that
\begin{equation}
Q_+(W_0)=Q_+(W_1).
\label{eq:same_fiber_general}
\end{equation}
Apply the exposure lemma to each process and define
\begin{equation}
\Sigma_{a|x}=Q_+(|v_{a|x}\rangle\langle v_{a|x}|),
\qquad x\in\{0,1\}.
\label{eq:general_exposed_assemblage}
\end{equation}
Add zero outcomes if the two ranks differ.  Each setting is realized by its defining rank-one decomposition, and Eq.~\eqref{eq:same_fiber_general} gives a common unconditional QSOST.  Suppose a common deterministic process $W$ and positive branch lifts $\widetilde G_{a|x}$ existed.  The least-lift theorem and Eq.~\eqref{eq:branch_exposed} imply
\begin{equation}
\widetilde G_{a|x}\succeq|v_{a|x}\rangle\langle v_{a|x}|.
\end{equation}
Summing over $a$ gives $W\succeq W_x$.  Both $W$ and $W_x$ have trace $d_O$, so the positive difference $W-W_x$ has zero trace and vanishes.  Hence $W=W_0=W_1$, a contradiction.
\end{proof}

\subsection{Causal location of the setting label}

The preceding distinction has an exact causal interpretation.  A \emph{future-choice realization} places the setting $x$ at a terminal party after the laboratories.  A \emph{past-controlled realization} places $x$ in a classical register before the laboratories and permits it to select their process context.

\begin{theorem}[Future choice versus past control]
\label{thm:causal_location}
For any finite family of fully specified positive branches $\{G_{a|x}\}_{a,x}$:
\begin{enumerate}
\item it has a future-choice realization if and only if it has one common deterministic parent;
\item it has a past-controlled realization if and only if every setting has a deterministic parent $W_x$.
\end{enumerate}
\end{theorem}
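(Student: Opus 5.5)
Part (1) is essentially Proposition~\ref{prop:future_realization} restated in causal language, so I will prove it by reduction. A future-choice realization places the setting $x$ at a terminal party $C$ with trivial output, and the branches arise by contracting a tripartite process with a POVM $\{E_{a|x}\}_a$ as in Eq.~\eqref{eq:future_branch_contraction}. The contraction preserves positivity. POVM completeness gives $\sum_aG_{a|x}=\operatorname{Tr}_{C_I}\Upsilon$, which does not depend on $x$. Because $C$ is terminal, this marginal is a deterministic bipartite process, so $\sum_aG_{a|x}=W\in\mathfrak P$. For the converse, the canonical purification $\Upsilon=|\Psi_W\rangle\langle\Psi_W|$ of Eq.~\eqref{eq:canonical_future_extension}, together with the POVMs of Eq.~\eqref{eq:future_povm_general}, reproduces the branches exactly. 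I will simply cite the proposition for both directions.

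For part (2), I first fix a definition. A past-controlled realization consists of:
\begin{itemize}
\item a classical register $X$ in the causal past of $A$ and $B$;
\item a family of deterministic bipartite processes, with the context for setting $x$ being one such process $W_x$;
\item for each $x$, a terminal future party $C$ that measures a fixed POVM, or equivalently, a future party receiving $x$.
\end{itemize}
Concretely, I model this as the classically controlled process $\sum_x|x\rangle\langle x|_X\otimes\Upsilon_x$, where each $\Upsilon_x$ has terminal $C$. Conditioning on $X=x$ yields the branch family $G_{a|x}=\operatorname{Tr}_{C_I}[(I\otimes E_{a|x})\Upsilon_x]$.

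Necessity is then the same computation as in part (1), done setting by setting. Each $G_{a|x}\succeq0$, and $\sum_aG_{a|x}=\operatorname{Tr}_{C_I}\Upsilon_x=:W_x$ is a deterministic process, since $C$ is terminal in $\Upsilon_x$. For sufficiency, given $\sum_aG_{a|x}=W_x\in\mathfrak P$, I apply the constructive half of Proposition~\ref{prop:future_realization} separately for each $x$. This produces $\Upsilon_x=|\Psi_{W_x}\rangle\langle\Psi_{W_x}|$ and POVMs $E_{a|x}=(W_x^{-1/2}G_{a|x}W_x^{-1/2})^T$. The ranks may differ across $x$, so I pad $C_I$ to a common dimension $\max_x\operatorname{rank}W_x$. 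I also complete each POVM with the projector onto the complement of $\operatorname{supp}W_x$, assigned to an arbitrary outcome; this does not change any contraction.

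The remaining task is to check that the controlled object is a legitimate process with $X$ in the past. Its positivity is immediate. Its marginal on $X$ together with the laboratories, $\sum_x|x\rangle\langle x|\otimes W_x$, is a classically controlled mixture of deterministic processes, which satisfies the validity constraints for a classical control party preceding $A$ and $B$.

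The main obstacle is exactly this last check. The paper has so far defined only bipartite validity via $L_V$, so I must fix what counts as a ``past classical register selecting the process context.'' My plan is to take this as a definition, namely a process that is block diagonal in $X$ with each block valid. I will then argue that no-signalling constraints do not restrict it: $X$ lies in the past of $A$ and $B$, so signalling from $X$ to them is permitted. The only possible subtlety is trace normalization when $X$ is given a prior distribution, and that is handled by weighting the blocks by $p(x)$ and conditioning on $x$.
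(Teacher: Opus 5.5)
Your part (1) is the paper's: a direct citation of Proposition~\ref{prop:future_realization}. Your necessity argument in part (2) is the paper's ``condition on the classical input $x$'' step. The two routes differ only in sufficiency for part (2).

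\emph{Your route.} You apply the Proposition separately for each $x$. This gives $x$-dependent purifications $\Upsilon_x$ and $x$-dependent POVMs $E_{a|x}$ on a padded $C_I$. Using $E_{a|x}$ implicitly lets the terminal party know $x$. You justify this only with the unproved phrase ``or equivalently, a future party receiving $x$.''

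\emph{The paper's route.} It folds the setting into one enlarged parent, $\Theta=\sum_x|x\rangle\langle x|\otimes W_x$, with branches $\Gamma_a=\sum_x|x\rangle\langle x|\otimes G_{a|x}$. It then checks three things: $\Gamma_a\succeq0$; $\sum_a\Gamma_a=\Theta$; and $\operatorname{Tr}[(\rho_X\otimes M_A\otimes M_B)\Theta]=1$ for every $\rho_X$, including coherent ones. After that it applies the Proposition once. The result is a single POVM $\{F_a\}$ that does not depend on $x$. Purifying the block-diagonal $\Theta$ automatically routes the label $x$ into $C_I$, which is exactly what your ``equivalently'' presupposes.

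\emph{What each buys.} The paper's route gives the sharper statement: all $x$-dependence sits in the past, and the future measurement is one fixed POVM. Parts (1) and (2) then differ only in the causal location of $x$. This is also the form the later explicit circuit uses, with one POVM $E_\pm$ for both settings. In addition, it handles your ``main obstacle'' with an explicit normalization check rather than by choosing a definition. Your route is more elementary, since both directions become per-setting copies of part (1). As written, however, it proves the theorem for a slightly more permissive notion of past control, in which the future measurement may also depend on $x$.

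To match the paper's notion, add one step. Append a classical copy $|x\rangle\langle x|^{C'}$ to each block $\Upsilon_x$, and take $F_a=\sum_x E_{a|x}\otimes|x\rangle\langle x|^{C'}$. This $F_a$ is positive, sums to the identity, and returns $G_{a|x}$ when $X=x$.
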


\begin{proof}
The first statement is Proposition~\ref{prop:future_realization}.  For the nontrivial direction of the second statement, assume
\begin{equation}
G_{a|x}\succeq0,
\qquad
\sum_aG_{a|x}=W_x\in\mathfrak P.
\label{eq:settingwise_branches_causal}
\end{equation}
Add zero outcomes if necessary so that all settings have one outcome alphabet.  Let $X$ be a classical system with orthonormal basis $\{|x\rangle\}_x$ and define
\begin{align}
\Theta^{X A_IA_OB_IB_O}
&=
\sum_x|x\rangle\langle x|^X\otimes W_x,
\label{eq:controlled_parent_process}\\
\Gamma_a^{X A_IA_OB_IB_O}
&=
\sum_x|x\rangle\langle x|^X\otimes G_{a|x}.
\label{eq:controlled_branch_process}
\end{align}
The operator $\Theta$ is a valid process with $X$ in the causal past.  Indeed, it is positive and, for every state $\rho_X$ prepared by the past controller and every pair of deterministic local operations with Choi operators $M_A,M_B$,
\begin{align}
&\operatorname{Tr}
\left[
(\rho_X\otimes M_A\otimes M_B)\Theta
\right]
\nonumber\\
&\quad=
\sum_x\langle x|\rho_X|x\rangle
\operatorname{Tr}[(M_A\otimes M_B)W_x]
=1.
\label{eq:controlled_process_normalization}
\end{align}
The block-diagonal form also preserves positivity under arbitrary ancillary extensions.  Moreover,
\begin{equation}
\Gamma_a\succeq0,
\qquad
\sum_a\Gamma_a=\Theta.
\end{equation}
Applying the construction in Proposition~\ref{prop:future_realization} to the enlarged parent $\Theta$ produces one terminal record system and one fixed POVM $\{F_a\}_a$ whose branch processes are $\Gamma_a$.  Preparing $|x\rangle\langle x|$ at the past controller gives
\begin{equation}
\operatorname{Tr}_X
\left[
(|x\rangle\langle x|\otimes I)\Gamma_a
\right]
=G_{a|x}.
\end{equation}
Thus one past-controlled process realizes the entire settingwise family.  The converse follows by conditioning any such controlled process on the classical input $x$.
\end{proof}

Let $\mathcal C_{\mathrm{future}}$ and $\mathcal C_{\mathrm{past}}$ denote the corresponding classes after applying $Q_+$ to the branch processes.  The theorem gives
\begin{equation}
\mathcal C_{\mathrm{future}}
=
\mathcal C_{\mathrm{common}},
\qquad
\mathcal C_{\mathrm{past}}
=
\mathcal C_{\mathrm{settingwise}}.
\label{eq:causal_class_identification}
\end{equation}
Combining Eq.~\eqref{eq:causal_class_identification} with the compression--gluing theorem yields the operational corollary
\begin{equation}
\boxed{
Q_+|_{\mathfrak P}\text{ is injective}
\quad\Longleftrightarrow\quad
\mathcal C_{\mathrm{future}}
=
\mathcal C_{\mathrm{past}}.}
\label{eq:causal_compression_equivalence}
\end{equation}
Hence QSOST compression loses deterministic-process information exactly when compressed data can fail to distinguish future conditioning of one process from past-controlled selection of different process contexts.

\begin{corollary}[Quantitative cost of exposing two parents]
For the exposed family generated by two same-fiber processes $W_0,W_1\in\mathfrak P$, every common dominating operator obeys
\begin{equation}
\operatorname{Tr}W
\ge
d_O+\frac12\|W_0-W_1\|_1.
\label{eq:generic_exposure_gap}
\end{equation}
\end{corollary}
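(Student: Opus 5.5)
The bound follows from the exposure lemma and the least-lift theorem, which together force any common dominating operator to sit above both parents. After that, a standard trace-norm identity for Hermitian matrices finishes the argument.

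\textbf{Step 1: domination of both parents.} In the exposed family each setting $x\in\{0,1\}$ consists of the branches $\Sigma_{a|x}=Q_+(|v_{a|x}\rangle\langle v_{a|x}|)$. By Eq.~\eqref{eq:branch_exposed}, the least lifts are exactly $|v_{a|x}\rangle\langle v_{a|x}|$. Summing them gives $R_x=W_x$, so any common dominating operator $W$ feasible for Eq.~\eqref{eq:mu_primal} satisfies
\begin{equation*}
W\succeq W_0,\qquad W\succeq W_1.
\end{equation*}
The trace bound is therefore a statement about any Hermitian operator dominating two positive operators of equal trace $d_O$. The constraint $L_V(W)=W$ plays no further role.

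\textbf{Step 2: Jordan decomposition of the difference.} Write $W_0-W_1=P-N$ with $P,N\succeq0$ and $PN=0$. Equal traces give $\operatorname{Tr}P=\operatorname{Tr}N=\tfrac12\|W_0-W_1\|_1$. Let $\Pi_P$ be the projector onto the support of $P$. From $W\succeq W_0$ we get
\begin{equation*}
\operatorname{Tr}(\Pi_P W)\ge\operatorname{Tr}(\Pi_PW_0)=\operatorname{Tr}(\Pi_PW_1)+\operatorname{Tr}P .
\end{equation*}
From $W\succeq W_1$ we get
\begin{equation*}
\operatorname{Tr}\bigl((I-\Pi_P)W\bigr)\ge\operatorname{Tr}\bigl((I-\Pi_P)W_1\bigr).
\end{equation*}
Adding the two inequalities gives
\begin{equation*}
\operatorname{Tr}W\ge\operatorname{Tr}W_1+\operatorname{Tr}P=d_O+\tfrac12\|W_0-W_1\|_1,
\end{equation*}
which is Eq.~\eqref{eq:generic_exposure_gap}. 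Equivalently, $\operatorname{Tr}W\ge\operatorname{Tr}\max(W_0,W_1)$ for the spectral splitting above.

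The only step needing care is Step 2. Loewner domination of two operators does not give domination of a pointwise maximum, so one cannot argue through a "max" operator. The projector-splitting trick avoids this: it compares $W$ against $W_0$ on one subspace and against $W_1$ on the complementary subspace, and uses only that compressions preserve the Loewner order. I would check that this subspace choice indeed yields the sum $\operatorname{Tr}W_1+\operatorname{Tr}P$ as above. Given that, the corollary applied to the qubit pair $W_\pm$ of Eq.~\eqref{eq:Wpm} gives the explicit gap $\tfrac12\|2\epsilon\Delta\|_1=16\epsilon$ by Eq.~\eqref{eq:Delta_norms}.
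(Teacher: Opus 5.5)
Your proof is correct, and it differs from the paper's only in how the final trace inequality is obtained.

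Step~1 is the same domination $W\succeq R_x=W_x$ that the paper relies on. It is the step from the compression--gluing proof: every exposed branch has $p_{a|x}=1/r_x>0$, and padded zero outcomes are omitted.

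In Step~2 the paper sets $A=W-W_0\succeq0$ and $B=W-W_1\succeq0$, observes that $W_0-W_1=B-A$, and applies the trace-norm triangle inequality. This gives $\|W_0-W_1\|_1\le\operatorname{Tr}A+\operatorname{Tr}B=2\operatorname{Tr}W-2d_O$ in one line, with no spectral information about $W_0-W_1$ needed.

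You instead take the Jordan decomposition $W_0-W_1=P-N$. You compress $W\succeq W_0$ onto $\operatorname{supp}P$ and $W\succeq W_1$ onto its complement. The step you flagged is fine: $PN=0$ with $P$ Hermitian gives $\Pi_PN=0$, hence $\operatorname{Tr}(\Pi_PW_0)=\operatorname{Tr}(\Pi_PW_1)+\operatorname{Tr}P$.

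Your argument is the variational proof of the triangle inequality, specialized to this case with the norming operator $2\Pi_P-I$. Your two compressed inequalities are exactly $\operatorname{Tr}(\Pi_PA)\ge0$ and $\operatorname{Tr}((I-\Pi_P)B)\ge0$.

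What your route buys is self-containment and an explicit benchmark. The bound is the trace of the Jordan maximum $W_1+P=\tfrac12(W_0+W_1+|W_0-W_1|)$. As you correctly note, a common Loewner upper bound need not dominate that operator; only the traces compare. What the paper's route buys is brevity.

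Both arguments show that equality forces $W-W_1=P$ and $W-W_0=N$. Neither uses $L_V(W)=W$.
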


\begin{proof}
Set $A=W-W_0\succeq0$ and $B=W-W_1\succeq0$.  Since
$W_0-W_1=B-A$, the triangle inequality gives
\begin{equation}
\|W_0-W_1\|_1
\le\operatorname{Tr}A+\operatorname{Tr}B
=2\operatorname{Tr}W-2d_O.
\end{equation}
\end{proof}

Applying the theorem to the full-rank pair $W_\pm$ in Eq.~\eqref{eq:Wpm} gives a two-setting, sixteen-outcome strict separation with a maximally mixed unconditional QSOST.  This shows that the separation is not confined to a boundary face of the process cone.  The corollary gives
\begin{equation}
\operatorname{Tr}W
\ge
\frac{\operatorname{Tr}W_++\operatorname{Tr}W_-+\|W_+-W_-\|_1}{2}
=
4+16\epsilon.
\label{eq:fullrank_gap}
\end{equation}

\subsection{A minimal two-outcome separation with definite causal order}

A sharper example exists entirely inside the causally ordered cone $A\prec B$.  Let
\begin{align}
C_{\mathrm R}^{A_OB_I}
&=
I^{A_O}\otimes|0\rangle\langle0|^{B_I},
\label{eq:reset_choi}\\
C_{\mathrm D}^{A_OB_I}
&=
|00\rangle\langle00|+|11\rangle\langle11|,
\label{eq:dephasing_choi}
\end{align}
be the Choi operators of, respectively, the reset channel
\begin{equation}
\mathcal E_{\mathrm R}(X)=\operatorname{Tr}(X)|0\rangle\langle0|
\end{equation}
and complete computational-basis dephasing.  Define
\begin{align}
W_{\mathrm R}
&=
|0\rangle\langle0|^{A_I}
\otimes C_{\mathrm R}^{A_OB_I}
\otimes I^{B_O},
\label{eq:WR}\\
W_{\mathrm D}
&=
|0\rangle\langle0|^{A_I}
\otimes C_{\mathrm D}^{A_OB_I}
\otimes I^{B_O}.
\label{eq:WD}
\end{align}
Both are deterministic $A\prec B$ process matrices, each of rank four and trace four.  In the interleaved order $A_IA_OB_IB_O$, introduce
\begin{equation}
|q\rangle=|0000\rangle,
\qquad
|u_{\mathrm R}\rangle=|0100\rangle,
\qquad
|u_{\mathrm D}\rangle=|0110\rangle.
\label{eq:q_u_minimal}
\end{equation}
Direct use of Eq.~\eqref{eq:Q_product} gives
\begin{equation}
Q_+(W_{\mathrm R})=Q_+(W_{\mathrm D})=|00\rangle\langle00|,
\label{eq:RD_same_Q}
\end{equation}
and
\begin{align}
R_{\mathrm R}
&=|q\rangle\langle q|+|u_{\mathrm R}\rangle\langle u_{\mathrm R}|
\preceq W_{\mathrm R},
\label{eq:RR_minimal}\\
R_{\mathrm D}
&=|q\rangle\langle q|+|u_{\mathrm D}\rangle\langle u_{\mathrm D}|
\preceq W_{\mathrm D}.
\label{eq:RD_minimal}
\end{align}

Choose two outcomes $a\in\{+,-\}$ with equal probabilities and define
\begin{equation}
|q_{\pm|x}\rangle
=
\frac{|q\rangle\pm|u_x\rangle}{2},
\qquad
\Sigma_{\pm|x}
=
\widetilde{\mathfrak v}^{-1}(|q_{\pm|x}\rangle),
\label{eq:two_outcome_branches}
\end{equation}
where $x\in\{\mathrm R,\mathrm D\}$.  Explicitly, on $A_OB_O$,
\begin{align}
\Sigma_{\pm|\mathrm R}
&=
\frac12\left(
|00\rangle\langle00|
\pm|00\rangle\langle10|
\right),
\label{eq:Sigma_R_minimal}\\
\Sigma_{\pm|\mathrm D}
&=
\frac12\left(
|00\rangle\langle00|
\pm|01\rangle\langle10|
\right).
\label{eq:Sigma_D_minimal}
\end{align}
Each branch has trace $1/2$, and the two settings have the same complete unconditional QSOST $|00\rangle\langle00|$.  Their least positive lifts are
\begin{equation}
M_{\pm|x}
=
2|q_{\pm|x}\rangle\langle q_{\pm|x}|,
\qquad
M_{+|x}+M_{-|x}=R_x.
\label{eq:minimal_two_lifts}
\end{equation}
Since $W_x-R_x\succeq0$ and $(W_x-R_x)|\Omega\rangle=0$, the positive branches
\begin{equation}
G_{\pm|x}
=
M_{\pm|x}+\frac12(W_x-R_x)
\label{eq:realizing_two_branches}
\end{equation}
sum to $W_x$ and reproduce Eqs.~\eqref{eq:Sigma_R_minimal} and \eqref{eq:Sigma_D_minimal}.  Thus each setting is separately process realizable.

\begin{theorem}[Minimal two-outcome gluing separation]
The two-setting, two-outcome assemblage in Eqs.~\eqref{eq:Sigma_R_minimal} and \eqref{eq:Sigma_D_minimal} has no common deterministic process parent.  Its exact common process-domination cost is
\begin{equation}
\mu(\{R_{\mathrm R},R_{\mathrm D}\})=8,
\label{eq:minimal_mu_eight}
\end{equation}
whereas a deterministic bipartite-qubit process has trace four.
\end{theorem}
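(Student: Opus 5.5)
The plan is to prove $\mu(\{R_{\mathrm R},R_{\mathrm D}\})\le 8$ by an explicit primal point and $\mu\ge 8$ by an explicit dual certificate in the SDP of Eqs.~\eqref{eq:mu_dual_objective}--\eqref{eq:mu_dual_constraints}. The absence of a common parent then follows from Eq.~\eqref{eq:mu_criterion}, since $8>d_O=4$.

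\emph{Upper bound.} I take $W=W_{\mathrm R}+W_{\mathrm D}$. Since $L_V$ is linear and both summands are valid processes, $L_V(W)=W$. Positivity of $W_{\mathrm D}$ and Eq.~\eqref{eq:RR_minimal} give $W\succeq W_{\mathrm R}\succeq R_{\mathrm R}$. Symmetrically, $W\succeq W_{\mathrm D}\succeq R_{\mathrm D}$ by Eq.~\eqref{eq:RD_minimal}. Finally $\operatorname{Tr}W=8$, so $\mu\le 8$. The primal problem only imposes $L_V(W)=W$, so by Eq.~\eqref{eq:normalization_functional} every feasible $W$ satisfies $\operatorname{Tr}W=4\langle\Omega|W|\Omega\rangle$. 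The lower bound is therefore equivalent to $\langle\Omega|W|\Omega\rangle\ge 2$ for every valid $W$ dominating both $R_x$.

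\emph{Lower bound.} I would look for dual variables of rank one or two supported on the relevant computational-basis vectors. The relevant vectors are $|0000\rangle$, $|0100\rangle$, $|0110\rangle$ and their partners under the trace-and-replace maps in Eq.~\eqref{eq:LV}. I would make a symmetric ansatz
\begin{equation}
Y_x=c\,|w_x\rangle\langle w_x|+Z_x,
\qquad
|w_x\rangle=\alpha|q\rangle+\beta|u_x\rangle,
\end{equation}
where $Z_x\succeq0$ is chosen so that $L_V(Y_{\mathrm R}+Y_{\mathrm D})=I$. The objective is $\operatorname{Tr}(Y_xR_x)=c(|\alpha|^2+|\beta|^2)+\operatorname{Tr}(Z_xR_x)$, and I want each setting to contribute $4$. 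Because $L_V$ is a sum of trace-and-replace maps, $L_V$ of a product computational-basis projector is a combination of projectors tensored with maximally mixed factors. The constraint $L_V(\sum_xY_x)=I$ thus reduces to a small linear system in the diagonal Pauli-$Z$ coefficients on $A_IA_OB_IB_O$. I would first try purely diagonal $Y_x$: a multiple of $|0100\rangle\langle0100|$ padded over $B_O$ for the reset setting, and a multiple of $|0110\rangle\langle0110|$ padded over $B_O$ for the dephasing setting, plus diagonal filling terms on $|1\rangle^{A_I}$ that are orthogonal to the support of both $R_x$. If the diagonal ansatz fails, I would allow coherences between $|q\rangle$ and $|u_x\rangle$. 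The intuition is that validity on the $A_O\to B_I$ link forbids one operator from being simultaneously as large as a reset Choi and a dephasing Choi on the $A_I=0$ block without doubling its weight. Equivalently, for valid $W$ the value $\langle\Omega|W|\Omega\rangle$ equals an average of diagonal entries over a complete channel-like family, and domination of both $|0100\rangle$ and $|0110\rangle$ (together with $|0000\rangle$) forces two unit contributions.

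\emph{Main obstacle.} The main obstacle is finding the exact dual certificate and checking $L_V(Y_{\mathrm R}+Y_{\mathrm D})=I$ by hand through the seven terms of Eq.~\eqref{eq:LV}. The upper bound is immediate, but the matching lower bound of $8$, rather than something between $4$ and $8$, requires the certificate to be tight. I would carry out this check in the Pauli basis, as in Appendix~\ref{app:kernel}. Once $\sum_x\operatorname{Tr}(Y_xR_x)=8$ is verified, weak duality gives $\mu\ge 8$ and hence Eq.~\eqref{eq:minimal_mu_eight}.
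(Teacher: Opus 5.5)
\paragraph{Review.} Your upper bound is correct and is exactly the paper's: $W_{\mathrm R}+W_{\mathrm D}$ is valid, dominates both $R_x$, and has trace $8$. Reducing the claim to weak duality is also the right framework. However, the lower bound $\mu\ge8$, which is the whole content of the theorem, is never established. You give an ansatz and a fallback, but no feasible $\{Y_x\}$ with value $8$ and no check of $L_V(Y_{\mathrm R}+Y_{\mathrm D})=I$.

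\paragraph{The ansatz partly points the wrong way.} Put $W^\ast=W_{\mathrm R}+W_{\mathrm D}$. For any dual-feasible pair, $\operatorname{Tr}[(Y_{\mathrm R}+Y_{\mathrm D})W^\ast]=\operatorname{Tr}[L_V(Y_{\mathrm R}+Y_{\mathrm D})W^\ast]=8$. Hence $\sum_x\operatorname{Tr}(Y_xR_x)=8-\sum_x\operatorname{Tr}[Y_x(W^\ast-R_x)]$, and each subtracted term is nonnegative. The diagonal operator $W^\ast-R_{\mathrm R}$ has unit weight on $|0000\rangle$ and $|0101\rangle$, and $W^\ast-R_{\mathrm D}$ has unit weight on $|0000\rangle$ and $|0111\rangle$. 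This has three consequences:
\begin{itemize}
\item A certificate of value $8$ must satisfy $Y_x|q\rangle=0$, so your coherent fallback with $\alpha\neq0$ can only lose value.
\item The parenthetical role you give $|0000\rangle$ is spurious.
\item Padding $|0100\rangle\langle0100|$ or $|0110\rangle\langle0110|$ with $I^{B_O}$ puts weight on $|0101\rangle$ or $|0111\rangle$ and caps the value strictly below $8$. The $B_O$ factor must stay $|0\rangle\langle0|$.
\end{itemize}

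\paragraph{The missing idea.} Choose $\sum_xY_x$ as $d_O$ times the Choi operator of a product of deterministic local operations, for which $L_V(\cdot)=I$ holds automatically. The paper takes
\begin{equation}
Y_{\mathrm R}+Y_{\mathrm D}=4\,I^{A_I}\otimes|1\rangle\langle1|^{A_O}\otimes I^{B_I}\otimes|0\rangle\langle0|^{B_O}.
\end{equation}
Here Alice discards her input and prepares $|1\rangle$, and Bob discards his and prepares $|0\rangle$. This is split as $Y_{\mathrm R}=4|0100\rangle\langle0100|$ and $Y_{\mathrm D}=4(|0110\rangle\langle0110|+|1100\rangle\langle1100|+|1110\rangle\langle1110|)$.

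In the seven-term expansion of $L_V$, both ${}_{A_O}Y-{}_{A_IA_O}Y$ and ${}_{B_O}Y-{}_{B_IB_O}Y$ vanish. The remaining three terms each equal $\pm I$ and sum to $-I+I+I=I$. The dual value is $4+4=8$.

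In primal language, this is the precise form of your ``channel-like family'' intuition. Every valid $W$ dominating both $R_x$ obeys $\operatorname{Tr}W=4\sum_{i,k}\langle i1k0|W|i1k0\rangle\ge4(\langle0100|W|0100\rangle+\langle0110|W|0110\rangle)\ge8$. The first equality uses the identity $\operatorname{Tr}[(M_A\otimes M_B)W]=\operatorname{Tr}W/d_O$ for the two discard-and-prepare channels, rather than only for the identity channels behind $\langle\Omega|W|\Omega\rangle$.
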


\begin{proof}
Define the positive dual operators
\begin{align}
Y_{\mathrm R}
&=4|0100\rangle\langle0100|,
\label{eq:YR_minimal}\\
Y_{\mathrm D}
&=4\bigl(
|0110\rangle\langle0110|
+|1100\rangle\langle1100|
+|1110\rangle\langle1110|
\bigr).
\label{eq:YD_minimal}
\end{align}
Their sum is
\begin{equation}
Y_{\mathrm R}+Y_{\mathrm D}
=
4I^{A_I}\otimes|1\rangle\langle1|^{A_O}
\otimes I^{B_I}\otimes|0\rangle\langle0|^{B_O},
\end{equation}
and direct substitution in Eq.~\eqref{eq:LV} gives
\begin{equation}
L_V(Y_{\mathrm R}+Y_{\mathrm D})=I.
\end{equation}
Hence the pair is feasible for the dual SDP in Eqs.~\eqref{eq:mu_dual_objective} and \eqref{eq:mu_dual_constraints}.  Its value is
\begin{equation}
\operatorname{Tr}(Y_{\mathrm R}R_{\mathrm R})
+
\operatorname{Tr}(Y_{\mathrm D}R_{\mathrm D})
=4+4=8.
\end{equation}
Thus $\mu\ge8$.  Conversely, $W_{\mathrm R}+W_{\mathrm D}$ lies in the valid process linear subspace, dominates both $R_{\mathrm R}$ and $R_{\mathrm D}$, and has trace eight.  Therefore $\mu=8$, proving the claim.
\end{proof}

Two settings are necessary for any gluing question.  With only one outcome per setting, the conditioned operator is just the common unconditional QSOST, and any one of its deterministic lifts is automatically a common parent.  The construction above is therefore minimal in the numbers of settings and outcomes; qubits are the smallest nontrivial local dimensions.

\subsection{Exact visibility threshold and a sparse interferometric witness}

Introduce the visibility family
\begin{align}
\Sigma_{\pm|\mathrm R}^{(\eta)}
&=
\frac12\left(
|00\rangle\langle00|
\pm\eta|00\rangle\langle10|
\right),
\label{eq:eta_R}\\
\Sigma_{\pm|\mathrm D}^{(\eta)}
&=
\frac12\left(
|00\rangle\langle00|
\pm\eta|01\rangle\langle10|
\right),
\label{eq:eta_D}
\end{align}
where $0\le\eta\le1$.  The corresponding domination operators are
\begin{equation}
R_x^{(\eta)}
=
|q\rangle\langle q|+\eta^2|u_x\rangle\langle u_x|.
\label{eq:R_eta}
\end{equation}
The same dual pair gives $\mu\ge8\eta^2$, while the universal normalization bound gives $\mu\ge4$.  These bounds are tight.  For $\eta^2\le1/2$, the deterministic process
\begin{equation}
W(\eta)=(1-\eta^2)W_{\mathrm R}+\eta^2W_{\mathrm D}
\label{eq:W_eta_low}
\end{equation}
dominates both $R_x^{(\eta)}$ and has trace four.  For $\eta^2\ge1/2$, the valid operator
\begin{equation}
\eta^2(W_{\mathrm R}+W_{\mathrm D})
\label{eq:W_eta_high}
\end{equation}
dominates both and has trace $8\eta^2$.  Hence
\begin{align}
\mu(\eta)&=\max\{4,8\eta^2\},
\label{eq:eta_exact_cost}\\
\text{common-process realizable}
&\iff
\eta\le\frac1{\sqrt2}.
\label{eq:eta_exact_threshold}
\end{align}

A sparse linear witness is obtained by applying Eq.~\eqref{eq:linear_witness} at the boundary reference visibility $\eta_0=1/\sqrt2$.  Define
\begin{align}
\mathcal W={}&
\operatorname{Re}\langle u_{\mathrm R}|
(q_{+|\mathrm R}-q_{-|\mathrm R})\rangle
\nonumber\\
&+
\operatorname{Re}\langle u_{\mathrm D}|
(q_{+|\mathrm D}-q_{-|\mathrm D})\rangle.
\label{eq:sparse_W_vector}
\end{align}
Every common-process assemblage satisfies
\begin{equation}
\boxed{\mathcal W\le\sqrt2.}
\label{eq:sparse_W_bound}
\end{equation}
For the family in Eqs.~\eqref{eq:eta_R} and \eqref{eq:eta_D}, $\mathcal W=2\eta$, so the linear witness detects exactly the full non-common region $\eta>1/\sqrt2$.

Let
\begin{equation}
\Delta_x I_{PQ}=I_{PQ,+|x}-I_{PQ,-|x}.
\end{equation}
Using $|1\rangle\langle0|=(X-iY)/2$ and $|0\rangle\langle0|=(I+Z)/2$, Eq.~\eqref{eq:sparse_W_vector} becomes
\begin{align}
\mathcal W
=\frac14\bigl[{}
&\operatorname{Re}\Delta_{\mathrm R}I_{XI}
+\operatorname{Re}\Delta_{\mathrm R}I_{XZ}
\nonumber\\
&+\operatorname{Im}\Delta_{\mathrm R}I_{YI}
+\operatorname{Im}\Delta_{\mathrm R}I_{YZ}
\nonumber\\
&+\operatorname{Re}\Delta_{\mathrm D}I_{XX}
+\operatorname{Re}\Delta_{\mathrm D}I_{YY}
\nonumber\\
&-\operatorname{Im}\Delta_{\mathrm D}I_{XY}
+\operatorname{Im}\Delta_{\mathrm D}I_{YX}
\bigr].
\label{eq:sparse_W_interference}
\end{align}
Thus the strict separation can be certified from eight branch-resolved interferometric quadratures rather than complete process or QSOST tomography.

\section{Operational and relational interpretation}

The separation has a precise operational meaning.  Complete causally agnostic interferometric data identify the branch QSOST operators but not the full positive process matrices.  Each record setting can be completed to a valid standard-quantum process, and the unconditional QSOST is identical for all settings.  Nevertheless, the branch data expose incompatible hidden positive completions, so no $x$-independent parent exists on the original $A,B$ process slots.  Theorem~\ref{thm:causal_location} does not exclude a larger controlled process containing a past classical register; it identifies exactly why that enlarged realization is causally different from conditioning one fixed process in the future.

The result does not imply a probability set beyond quantum theory.  Every setting separately has a standard quantum realization.  Nor does it establish that every record label is a relational fact.  To make that additional statement, one must specify a commuting record algebra and a system observable and verify Eq.~\eqref{eq:relative_fact}, or its approximate form.  These constraints are linear in the conditioned output states once the observable and record map are fixed and can be added to the gluing SDP.

The minimal definite-order example admits a single ordinary circuit with a past control and a fixed future record measurement.  Write $x=\mathrm R,\mathrm D$ and define
\begin{equation}
f_{\mathrm R}(j)=0,
\qquad
f_{\mathrm D}(j)=j,
\qquad j\in\{0,1\}.
\end{equation}
The reset and dephasing channels have Stinespring isometries
\begin{equation}
V_x|j\rangle^{A_O}
=
|f_x(j)\rangle^{B_I}|j\rangle^E.
\label{eq:reset_dephase_isometries}
\end{equation}
A single past-controlled isometry is obtained by coherently retaining the classical label:
\begin{equation}
V|x\rangle^X|j\rangle^{A_O}
=
|x\rangle^{X'}|f_x(j)\rangle^{B_I}|j\rangle^E.
\label{eq:controlled_isometry}
\end{equation}
The output states on the right-hand side are orthonormal for distinct $(x,j)$, hence $V^\dagger V=I$.  Purify the terminal discard of $B_O$ into a qubit $F$.  Conditioned on the past value $x$, the resulting process vector is
\begin{equation}
|\Xi_x\rangle
=
\sum_{j,\ell=0}^{1}
|0,j,f_x(j),\ell\rangle_{A_IA_OB_IB_O}
|j,\ell\rangle_{EF}.
\label{eq:explicit_controlled_process_vector}
\end{equation}
Tracing out the future systems gives
\begin{equation}
\operatorname{Tr}_{EF}|\Xi_x\rangle\langle\Xi_x|=W_x.
\label{eq:explicit_parent_recovery}
\end{equation}

The same binary POVM is used for both past settings:
\begin{equation}
E_\pm
=
|\pm\rangle\langle\pm|^E\otimes|0\rangle\langle0|^F
+
\frac12 I^E\otimes|1\rangle\langle1|^F,
\label{eq:fixed_future_povm}
\end{equation}
where $|\pm\rangle=(|0\rangle\pm|1\rangle)/\sqrt2$.  It is positive and obeys $E_++E_-=I^{EF}$.  Define
\begin{equation}
|r_{j\ell}^{(x)}\rangle
=
|0,j,f_x(j),\ell\rangle.
\end{equation}
Since the POVM is real in the displayed basis, the transpose in the Choi contraction is immaterial, and direct evaluation gives
\begin{align}
&\operatorname{Tr}_{EF}
\left[
(I\otimes E_\pm)|\Xi_x\rangle\langle\Xi_x|
\right]
\nonumber\\
&=
\frac12
\left(
|r_{00}^{(x)}\rangle
\pm|r_{10}^{(x)}\rangle
\right)
\left(
\langle r_{00}^{(x)}|
\pm\langle r_{10}^{(x)}|
\right)
\nonumber\\
&\quad+
\frac12\sum_{j=0}^{1}
|r_{j1}^{(x)}\rangle\langle r_{j1}^{(x)}|
\nonumber\\
&=
M_{\pm|x}+\frac12(W_x-R_x)
=G_{\pm|x}.
\label{eq:explicit_controlled_branch_recovery}
\end{align}
Here $|r_{00}^{(x)}\rangle=|q\rangle$, while
$|r_{10}^{(\mathrm R)}\rangle=|u_{\mathrm R}\rangle$ and
$|r_{10}^{(\mathrm D)}\rangle=|u_{\mathrm D}\rangle$.  The outcome can be copied to orthogonal pointer states, yielding one commuting future record algebra.  Thus the complete two-setting family is realized by one past-controlled circuit and one fixed future measurement, but not by choosing a future measurement on one $x$-independent $A,B$ parent.

\subsection{Control-blind and setting-assisted relative facts}

Copying the future outcome to another pointer would satisfy
Eq.~\eqref{eq:relative_fact} by definition and is therefore not a test of an
independently specified laboratory fact.  We instead let the $A$ laboratory
prepare an arbitrary state of $A_O$ and an independent reference $C$, and ask
whether the later record can be inferred from a fact measurement on $C B_I$.
The reference is retained and made available at the fact event.  Both its
preparation and the complete fact tester are required to be independent of
$x$; in particular, $C$ may not contain an orthogonal copy of the setting.
In a fact-validation run, the $B$ instrument resets $B_O$ to
$|0\rangle\langle0|$, so that the coherent $F=0$ sector of
Eq.~\eqref{eq:fixed_future_povm} is selected.  Visibility $0\le\eta\le1$ is
represented by the binary record effects
\begin{equation}
\mathsf E_a^{(\eta)}
=
\frac12(I+a\eta X^E),
\qquad a\in\{+1,-1\}.
\label{eq:relative_record_effects}
\end{equation}
Equivalently, one may perform the sharp $X^E$ measurement and flip its
classical output with probability $(1-\eta)/2$.  On the complete future
system this is the POVM
\begin{equation}
\widehat E_a^{(\eta)}
=
\mathsf E_a^{(\eta)}\otimes|0\rangle\langle0|^F
+
\frac12I^E\otimes|1\rangle\langle1|^F,
\label{eq:full_relative_record_effects}
\end{equation}
which reduces to Eq.~\eqref{eq:fixed_future_povm} at $\eta=1$.

For an input state $\rho_{C A_O}$, let
\begin{equation}
\tau_{a|x}^{(\eta)}
=
\operatorname{Tr}_E
\left[
(I^{C B_I}\otimes\mathsf E_a^{(\eta)})
(I^C\otimes V_x)\rho_{C A_O}(I^C\otimes V_x^\dagger)
\right]
\label{eq:relative_conditioned_states}
\end{equation}
be the subnormalized conditional state on $C B_I$.  After relabeling the
nontrivial binary map as $g(a)=a$, define the optimal control-blind and
setting-assisted fact probabilities by
\begin{align}
P_{\mathrm{blind}}(\eta)
&=
\max_{\rho,\{Q_a\}}
\frac12\sum_{x=\mathrm R,\mathrm D}\sum_{a=\pm1}
\operatorname{Tr}
\left[
Q_a\tau_{a|x}^{(\eta)}
\right],
\label{eq:Pblind_definition}\\
P_X(\eta)
&=
\max_{\rho,\{Q_{a|x}\}}
\frac12\sum_{x=\mathrm R,\mathrm D}\sum_{a=\pm1}
\operatorname{Tr}
\left[
Q_{a|x}\tau_{a|x}^{(\eta)}
\right].
\label{eq:PX_definition}
\end{align}
The first optimization uses one POVM $\{Q_a\}$ for both settings and grants it
access to the complete natural fact system $C B_I$.  The second permits the
classical register $X$ to select the POVM.  A multi-outcome observable followed
by a fixed binary map $g$ coarse-grains to this form, so restricting the
optimization to two effects loses no generality.  Here $X$ belongs to the
classical algebra generated by $\{|x\rangle\langle x|\}_x$; coherent
superpositions and off-diagonal input blocks are excluded from this theorem.

\begin{theorem}[Exact control-assistance gap for relative facts]
\label{thm:relative_fact_control_gap}
For the reset/dephasing realization in
Eq.~\eqref{eq:reset_dephase_isometries}, allowing arbitrary finite-dimensional
references, arbitrary $x$-independent input states, and arbitrary fact POVMs,
\begin{align}
P_{\mathrm{blind}}(\eta)
&=
\frac12\left(1+\frac{\eta}{\sqrt2}\right),
\label{eq:Pblind_exact}\\
P_X(\eta)
&=
\frac12(1+\eta).
\label{eq:PX_exact}
\end{align}
Consequently, at unit visibility every setting separately has a perfect
relative fact, whereas every control-blind common readout has error at least
\begin{equation}
\varepsilon_{\mathrm{blind}}^{\mathrm{opt}}
=
\frac12\left(1-\frac1{\sqrt2}\right)
\simeq0.1464.
\label{eq:relative_fact_error_exact}
\end{equation}
No control-blind observable therefore satisfies
Eq.~\eqref{eq:relative_fact_zero_error} for the present circuit.
\end{theorem}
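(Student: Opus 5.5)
The plan is to reduce both optimizations to the norm of an explicit operator. First I would write the conditional states of Eq.~\eqref{eq:relative_conditioned_states} in Kraus form. The isometry in Eq.~\eqref{eq:reset_dephase_isometries} gives $V_x=\sum_j K^{x}_j\otimes|j\rangle^E$ with $K^x_j=|f_x(j)\rangle\langle j|$ from $A_O$ to $B_I$. Inserting $\mathsf E_a^{(\eta)}=\tfrac12(I+a\eta X^E)$ and tracing $E$ should give
\begin{equation*}
\tau_{a|x}^{(\eta)}=\tfrac12\bigl[\mathcal E_x(\rho)+a\eta\,\Delta_x(\rho)\bigr],
\qquad
\Delta_x(\rho)=\sum_j K^x_{1-j}\rho K^{x\dagger}_j ,
\end{equation*}
with $I^C$ understood throughout. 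Writing $Z=Q_+-Q_-$, every binary POVM corresponds to a Hermitian $Z$ with $\|Z\|_\infty\le1$, and conversely. Since $\sum_a\tau_{a|x}$ is normalized, I expect
\begin{equation*}
P=\tfrac12+\tfrac{\eta}{4}\sum_x\operatorname{Tr}\bigl[Z_x\,\Delta_x(\rho)\bigr]
=\tfrac12+\tfrac{\eta}{4}\sum_x\operatorname{Tr}\bigl[\Delta_x^\dagger(Z_x)\,\rho\bigr],
\end{equation*}
where the control-blind case has $Z_{\mathrm R}=Z_{\mathrm D}=Z$ and the setting-assisted case has independent $Z_x$.

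Next I would compute the adjoints explicitly. For reset, $\Delta_{\mathrm R}(\rho)=|0\rangle\langle0|^{B_I}\otimes\operatorname{Tr}_{A_O}[X^{A_O}\rho]$, so
\begin{equation*}
\Delta_{\mathrm R}^\dagger(Z)=X^{A_O}\otimes Z_{00},
\qquad Z_{00}=\langle0|Z|0\rangle_{B_I},
\end{equation*}
which is an operator on $C$ tensored with $X^{A_O}$. For dephasing, $\Delta_{\mathrm D}(\rho)=\sum_j P_{1-j}\rho P_j$, so $\Delta_{\mathrm D}^\dagger(Z)=\sum_jP_jZP_{1-j}$ is the $B_I$-off-diagonal part of $Z$, transplanted to $A_O$. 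In $A_O$ block form the blind operator should therefore be
\begin{equation*}
O(Z)=\begin{pmatrix}0&Z_{00}+Z_{01}\\ Z_{00}+Z_{10}&0\end{pmatrix},
\end{equation*}
whose norm is $\|Z_{00}+Z_{01}\|_\infty$.

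The main step, and the only real obstacle, is the bound $\|Z_{00}+Z_{01}\|_\infty\le\sqrt2$, which must hold uniformly over arbitrary finite references $C$. I would argue that the block row $[Z_{00}\ Z_{01}]$ of the contraction $Z$ is itself a contraction, because $Z_{00}Z_{00}^\dagger+Z_{01}Z_{01}^\dagger\preceq I$. Then
\begin{equation*}
Z_{00}+Z_{01}=[Z_{00}\ Z_{01}]\begin{pmatrix}I\\ I\end{pmatrix}
\end{equation*}
has norm at most $\sqrt2$. Since $\operatorname{Tr}[O\rho]\le\|O\|_\infty$ for every state, this gives $P_{\mathrm{blind}}\le\tfrac12+\tfrac{\eta}{4}\sqrt2$.

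For tightness I would take $C$ trivial, $\rho=|+\rangle\langle+|^{A_O}$, and $Z=(Z^{B_I}+X^{B_I})/\sqrt2$, for which $Z_{00}=Z_{01}=1/\sqrt2$; the sum is $\sqrt2$, proving Eq.~\eqref{eq:Pblind_exact}. For $P_X$ each term separately satisfies $|\operatorname{Tr}[\Delta_x^\dagger(Z_x)\rho]|\le1$, since $\Delta_{\mathrm R}^\dagger(Z)$ and $\Delta_{\mathrm D}^\dagger(Z)$ are both contractions. Equality is reached with the same $\rho=|+\rangle\langle+|$ by choosing $Z_{\mathrm R}=Z^{B_I}$ (or $|0\rangle\langle0|$) and $Z_{\mathrm D}=X^{B_I}$, which gives Eq.~\eqref{eq:PX_exact}. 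The error statement then follows by setting $\eta=1$ in $1-P_{\mathrm{blind}}$.
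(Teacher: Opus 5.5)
Your proof is correct, and it takes a genuinely different route from the paper's.

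The paper works in the Schr\"odinger picture. It purifies the input into $C$, writes it as $|s_0\rangle|0\rangle+|s_1\rangle|1\rangle$ with Gram data $p,q,c$, and computes the Helstrom operator $\sigma_{+1}-\sigma_{-1}$ as an explicit rank-two operator. Its trace norm is $\eta\sqrt{2pq-(\operatorname{Im}c)^2}\le\eta/\sqrt2$. For $P_X$ the paper uses $\|\tau^{(\eta)}_{+1|x}-\tau^{(\eta)}_{-1|x}\|_1\le\eta$ in each setting, together with the Bell input. You instead fix the measurement and pass to the Heisenberg picture. This reduces the blind problem to $\max_Z\|Z_{00}+Z_{01}\|_\infty$, which your block-row contraction bounds by $\sqrt2$.

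The two orders of optimization buy different things. Yours handles mixed inputs and arbitrary references in one stroke, with no purification or parametrization needed. It also isolates the $\sqrt2$ as an operator-norm fact, in the same spirit as the paper's later Clifford fact-access theorem. The paper's order gives the exact blind success $\frac12\bigl(1+\eta\sqrt{2pq-(\operatorname{Im}c)^2}\bigr)$ for every fixed input. That characterizes all optimal inputs and feeds directly into the trace-norm deficit $\Gamma_X$ used afterwards.

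One small caveat concerns your attaining input $|+\rangle\langle+|^{A_O}$ with trivial $C$. It has $p(a|\mathrm R)=\frac12(1+a\eta)$, so at $\eta=1$ the reset record is deterministic. The ``perfect fact'' in setting $\mathrm R$ is then a constant guess, which the introduction would not call nontrivial. The optimal values are unaffected, but the Bell input is a better witness for the consequence clause. In your notation:
\begin{itemize}
\item $Z=X^C\otimes(Z^{B_I}+X^{B_I})/\sqrt2$ gives $O(Z)=\sqrt2\,X^{A_O}\otimes X^C$, with expectation $\sqrt2$;
\item $Z_{\mathrm R}=X^C\otimes I^{B_I}$ and $Z_{\mathrm D}=X^C\otimes X^{B_I}$ give value $1$ in each setting.
\end{itemize}
In both cases the record outcomes are equiprobable.
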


\begin{proof}
Purifying an arbitrary mixed input and including the purifying system in $C$
can only increase the available fact information.  It is therefore sufficient
to write the most general pure input as
\begin{equation}
|\psi\rangle_{C A_O}
=
|s_0\rangle_C|0\rangle^{A_O}
+
|s_1\rangle_C|1\rangle^{A_O},
\label{eq:general_fact_input}
\end{equation}
where
\begin{equation}
p=\langle s_0|s_0\rangle,
\qquad
q=\langle s_1|s_1\rangle,
\qquad
c=\langle s_0|s_1\rangle,
\qquad
p+q=1.
\end{equation}
For the sharp record, the unnormalized conditional vectors on $C B_I$ are
\begin{align}
|\phi_{a|\mathrm R}\rangle
&=
\frac{
(|s_0\rangle+a|s_1\rangle)|0\rangle^{B_I}
}{\sqrt2},
\label{eq:relative_vectors_R}\\
|\phi_{a|\mathrm D}\rangle
&=
\frac{
|s_0\rangle|0\rangle^{B_I}
+
a|s_1\rangle|1\rangle^{B_I}
}{\sqrt2}.
\label{eq:relative_vectors_D}
\end{align}
Classical record noise gives
\begin{equation}
\tau_{a|x}^{(\eta)}
=
\frac{1+\eta}{2}\tau_{a|x}^{(1)}
+
\frac{1-\eta}{2}\tau_{-a|x}^{(1)}.
\label{eq:relative_visibility_mixture}
\end{equation}

For the control-blind problem, define
\begin{equation}
\sigma_a=\frac12\sum_x\tau_{a|x}^{(\eta)},
\qquad
\Delta=\sigma_{+1}-\sigma_{-1}.
\end{equation}
Helstrom's theorem~\cite{helstrom1976} gives
$P_{\mathrm{blind}}=(1+\|\Delta\|_1)/2$.  Directly from
Eqs.~\eqref{eq:relative_vectors_R}--\eqref{eq:relative_visibility_mixture},
\begin{align}
\Delta
=
\frac{\eta}{2}\bigl[{}
&|s_0\rangle\langle s_1|
\otimes
\left(
|0\rangle\langle0|+|0\rangle\langle1|
\right)
\nonumber\\
&+\text{H.c.}
\bigr].
\label{eq:relative_Helstrom_operator}
\end{align}
Its two nonzero eigenvalues give
\begin{equation}
\|\Delta\|_1
=
\eta\sqrt{2pq-(\operatorname{Im}c)^2}
\le
\frac{\eta}{\sqrt2}.
\label{eq:relative_trace_norm_bound}
\end{equation}
The bound follows from $pq\le1/4$ and is attained by
$p=q=1/2$ and $c=0$.  This proves Eq.~\eqref{eq:Pblind_exact}.

For each fixed $x$, the trace norm of
$\tau_{+1|x}^{(\eta)}-\tau_{-1|x}^{(\eta)}$ is at most $\eta$.
Hence even with access to $X$ the Helstrom success probability is at most
$(1+\eta)/2$.  The maximally entangled input
\begin{equation}
|\Phi^+\rangle_{C A_O}
=
\frac{|00\rangle+|11\rangle}{\sqrt2}
\label{eq:relative_Bell_input}
\end{equation}
attains this bound for both settings, proving
Eq.~\eqref{eq:PX_exact}.
\end{proof}

For equal setting weights, Eq.~\eqref{eq:Pblind_exact} is also the exact
minimax value:
\begin{equation}
\max_{\rho,\{Q_a\}}
\min_{x=\mathrm R,\mathrm D}
\sum_{a=\pm1}
\operatorname{Tr}
\left[
Q_a\tau_{a|x}^{(\eta)}
\right]
=
\frac12\left(1+\frac{\eta}{\sqrt2}\right).
\label{eq:relative_fact_minimax}
\end{equation}
The minimum cannot exceed the uniform average, while the Bell input and the
symmetric Helstrom measurement give this success probability for both
settings.

At $\eta=1$, the normalized conditional states generated by
Eq.~\eqref{eq:relative_Bell_input} are
\begin{align}
|\varphi_{a|\mathrm R}\rangle
&=
|a_X\rangle^C|0\rangle^{B_I},
\label{eq:relative_Bell_states_R}\\
|\varphi_{a|\mathrm D}\rangle
&=
\frac{|00\rangle+a|11\rangle}{\sqrt2}^{C B_I},
\label{eq:relative_Bell_states_D}
\end{align}
where $X^C|a_X\rangle^C=a|a_X\rangle^C$.  They are perfectly read by
\begin{equation}
Z_{\mathrm R}=X^C\otimes I^{B_I},
\qquad
Z_{\mathrm D}=X^C\otimes X^{B_I},
\label{eq:settingwise_fact_observables}
\end{equation}
respectively.  Classical access to the setting register combines them into
\begin{align}
Z_{X C B_I}
={}&
|\mathrm R\rangle\langle\mathrm R|^X
\otimes X^C\otimes I^{B_I}
\nonumber\\
&+
|\mathrm D\rangle\langle\mathrm D|^X
\otimes X^C\otimes X^{B_I}.
\label{eq:controlled_fact_observable}
\end{align}
This operator obeys $Z_{X C B_I}^2=I$, is fixed on $X C B_I$, and is perfectly
correlated with the record.  Equivalently, in setting $\mathrm D$ one can apply
$\operatorname{CNOT}_{C\to B_I}$ before measuring the fixed observable $X^C$,
since
\begin{equation}
\operatorname{CNOT}_{C\to B_I}^{\dagger}
(X^C\otimes I^{B_I})
\operatorname{CNOT}_{C\to B_I}
=
X^C\otimes X^{B_I}.
\end{equation}

Because $X$ is classical, an arbitrary POVM $\{\widehat Q_a^{X C B_I}\}$ has
exactly the same statistics as its diagonal blocks
\begin{equation}
Q_{a|x}^{C B_I}
=
\langle x|\widehat Q_a^{X C B_I}|x\rangle.
\label{eq:classical_X_diagonal_blocks}
\end{equation}
Indeed, for every classical--quantum family,
\begin{align}
&\operatorname{Tr}
\left[
\widehat Q_a^{X C B_I}
\sum_x w_x|x\rangle\langle x|^X\otimes\tau_{a|x}
\right]
\nonumber\\
&\qquad
=
\sum_x w_x
\operatorname{Tr}
\left[
Q_{a|x}^{C B_I}\tau_{a|x}
\right].
\label{eq:classical_X_reduction}
\end{align}
Positivity and completeness of $\{\widehat Q_a\}$ imply that these diagonal
blocks form a POVM for every $x$, and every such family has a block-diagonal
extension.  Off-diagonal operators in $X$ are therefore operationally
invisible.  More generally, for any fixed binary conditioned family and
setting weights $w_x$, define
\begin{equation}
D_x=\tau_{+1|x}-\tau_{-1|x}.
\end{equation}
Binary state discrimination gives
\begin{align}
P_X[\tau]
&=
\frac12
\left(
1+\sum_x w_x\|D_x\|_1
\right),
\label{eq:general_PX_binary}\\
P_{\mathrm{blind}}[\tau]
&=
\frac12
\left(
1+\left\|\sum_x w_xD_x\right\|_1
\right),
\label{eq:general_Pblind_binary}
\end{align}
and hence
\begin{equation}
\Gamma_X[\tau]
=
\frac12
\left[
\sum_x w_x\|D_x\|_1
-
\left\|\sum_x w_xD_x\right\|_1
\right]
\ge0.
\label{eq:general_trace_norm_assistance}
\end{equation}
This is the trace-norm triangle deficit.  It vanishes exactly when one binary
Helstrom POVM can be chosen optimal for all settings simultaneously.

Thus Eq.~\eqref{eq:controlled_fact_observable} is a
direct sum of two context-specific facts, not one control-blind fact.  The
nontrivial operational quantity is the exact assistance advantage
\begin{equation}
\Gamma_X(\eta)
:=
P_X(\eta)-P_{\mathrm{blind}}(\eta)
=
\frac{\eta}{2}
\left(1-\frac1{\sqrt2}\right).
\label{eq:control_assistance_gap}
\end{equation}
Equivalently, using the operational identity
$H_{\min}(a|Y)=-\log_2P_{\mathrm{guess}}(a|Y)$, the setting information reduces
the guessing min-entropy by
\begin{equation}
\Delta H_{\min}^{X}(\eta)
:=
H_{\min}(a|C B_I)-H_{\min}(a|X C B_I)
=
\log_2
\left(
\frac{1+\eta}{1+\eta/\sqrt2}
\right),
\label{eq:min_entropy_assistance}
\end{equation}
which equals approximately $0.2284$ bits at $\eta=1$.

There is an additional timing distinction.  Let
$P_{\mathrm{pre}}:=P_X$ denote access to $X$ before the fact measurement, and
let $P_{\mathrm{post}}$ allow one $x$-independent measurement
$\{M_\lambda\}_\lambda$ followed by revelation of $x$ and an
$x$-dependent classical decoder $h_x(\lambda)$.  Explicitly,
\begin{equation}
P_{\mathrm{post}}(\eta)
=
\max_{\rho,\{M_\lambda\},\{h_x\}}
\frac12
\sum_{x,a}
\sum_{\lambda:\,h_x(\lambda)=a}
\operatorname{Tr}
\left[
M_\lambda\tau_{a|x}^{(\eta)}
\right].
\label{eq:Ppost_definition}
\end{equation}
For the present circuit,
\begin{equation}
[Z_{\mathrm R},Z_{\mathrm D}]=0.
\end{equation}
Measuring the common refinement of $X^C$ and $X^{B_I}$ gives outcomes
$(s,t)$.  After learning $x$, the decoder uses $a=s$ for
$x=\mathrm R$ and $a=st$ for $x=\mathrm D$.  Therefore
\begin{equation}
P_{\mathrm{pre}}(\eta)
=
P_{\mathrm{post}}(\eta)
=
\frac12(1+\eta),
\qquad
P_{\mathrm{none}}(\eta)
:=
P_{\mathrm{blind}}(\eta)
=
\frac12\left(1+\frac{\eta}{\sqrt2}\right).
\label{eq:fact_access_hierarchy_current}
\end{equation}
The current result is thus a permanent-setting-erasure or decoding-assistance
gap, not a measurement-incompatibility result on the natural accessible
algebra $C B_I$.  The reason is structural: the sharp observables in
Eq.~\eqref{eq:settingwise_fact_observables} commute.  We next characterize
exactly when delayed revelation can nevertheless be perfect, and then show
that a fixed unitary followed by an explicit causal and classical-memory cut
removes this common refinement from the observer's accessible algebra.

\subsection{Support geometry of delayed facts}

Let $\{\tau_{a|x}\}_{a,x}$ be any finite family of subnormalized states on a
fixed accessible Hilbert space $\mathcal H$, with
\begin{equation}
\sum_a\operatorname{Tr}\tau_{a|x}=1
\qquad\text{for every }x,
\end{equation}
and let every setting weight $w_x$ be strictly positive.  Write
\begin{equation}
\mathcal K_{a|x}=\operatorname{supp}\tau_{a|x}.
\end{equation}
In the post regime, every measurement outcome may be labeled by a complete
guess string
\begin{equation}
\bm b=(b_x)_x\in\prod_x\mathsf A_x,
\end{equation}
because any later decoder is equivalent to assigning one guess to every
possible value of $x$.

\begin{theorem}[Support criterion for perfect delayed inference]
\label{thm:support_delayed_inference}
For the conditioned state family above:
\begin{enumerate}
\item $P_{\mathrm{pre}}[\tau]=1$ if and only if, for every fixed $x$,
\begin{equation}
\mathcal K_{a|x}\perp\mathcal K_{a'|x}
\qquad(a\ne a').
\label{eq:pre_support_orthogonality}
\end{equation}
\item $P_{\mathrm{post}}[\tau]=1$ if and only if there exists a POVM
$\{G_{\bm b}\}_{\bm b}$ such that
\begin{equation}
\operatorname{ran}G_{\bm b}\subseteq
\mathcal L_{\bm b}
:=
\left[
\operatorname{span}
\left\{
\mathcal K_{a|x}:a\ne b_x
\right\}
\right]^\perp
\label{eq:guess_string_support}
\end{equation}
for every guess string $\bm b$.
\end{enumerate}
\end{theorem}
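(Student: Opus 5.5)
The proof rests on one elementary fact: for a subnormalized state $\tau\succeq0$ and an effect $0\preceq E\preceq I$, we have $\operatorname{Tr}[E\tau]=0$ if and only if $E\tau=0$, that is, if and only if $\operatorname{ran}E\perp\operatorname{supp}\tau$. Indeed, $\operatorname{Tr}[E\tau]=\|E^{1/2}\tau^{1/2}\|_2^2$, so vanishing of the trace forces $E^{1/2}\tau^{1/2}=0$. Perfect success with strictly positive weights $w_x$ and $\sum_a\operatorname{Tr}\tau_{a|x}=1$ means that every error term has zero weight. We will therefore convert each optimal-success condition into a family of trace-zero conditions and then into support-orthogonality statements.

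For part 1, the pre regime reduces by Eq.~\eqref{eq:classical_X_diagonal_blocks} to an $x$-dependent POVM $\{Q_{a|x}\}_a$ for each $x$. We have $P_{\mathrm{pre}}=1$ if and only if, for every $x$, $\sum_a\operatorname{Tr}[Q_{a|x}\tau_{a|x}]=\sum_a\operatorname{Tr}\tau_{a|x}$. This holds if and only if $\operatorname{Tr}[Q_{a'|x}\tau_{a|x}]=0$ for all $a'\ne a$. For necessity, fix $a\ne a'$. Since $\sum_{a''}Q_{a''|x}=I$ and $Q_{a''|x}\tau_{a|x}=0$ for every $a''\ne a$, we get $Q_{a|x}\tau_{a|x}=\tau_{a|x}$. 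Hence $\mathcal K_{a|x}\subseteq\operatorname{ran}Q_{a|x}$, and $\operatorname{ran}Q_{a|x}\perp\mathcal K_{a'|x}$ because $Q_{a|x}\tau_{a'|x}=0$. This gives Eq.~\eqref{eq:pre_support_orthogonality}. For sufficiency, take $Q_{a|x}$ to be the projector onto $\mathcal K_{a|x}$, and add the projector onto the orthogonal complement of $\bigoplus_a\mathcal K_{a|x}$ to any one outcome. Mutual orthogonality makes this a projective measurement that never errs.

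For part 2, we first justify the guess-string relabeling stated before the theorem. Given $\{M_\lambda\}$ and decoders $h_x$, set $G_{\bm b}=\sum_{\lambda:\,h_x(\lambda)=b_x\ \forall x}M_\lambda$. This is a POVM with the same success probability, and conversely any POVM indexed by guess strings is a valid post strategy with $h_x(\bm b)=b_x$. The success probability is then $\sum_x w_x\sum_{\bm b}\operatorname{Tr}[G_{\bm b}\tau_{b_x|x}]$. Perfect success is equivalent to $\operatorname{Tr}[G_{\bm b}\tau_{a|x}]=0$ whenever $a\ne b_x$, for all $x$ and $\bm b$; this uses $w_x>0$ and the normalization of each setting. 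By the elementary fact, this says that $\operatorname{ran}G_{\bm b}\perp\mathcal K_{a|x}$ for every pair $(a,x)$ with $a\ne b_x$, which is exactly $\operatorname{ran}G_{\bm b}\subseteq\mathcal L_{\bm b}$. Both directions follow immediately.

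The main point needing care is the passage from $\operatorname{Tr}[E\tau]=0$ to range orthogonality, together with the bookkeeping that zero total error, weighted by strictly positive $w_x$, forces every individual error term to vanish. Once these are stated cleanly, the rest of the argument is routine.
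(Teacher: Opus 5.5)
Your proposal is correct and follows essentially the same route as the paper. In both arguments, zero error forces every error term $\operatorname{Tr}(Q_{a'|x}\tau_{a|x})$ or $\operatorname{Tr}(G_{\bm b}\tau_{a|x})$ to vanish, and positivity turns this into annihilation of the corresponding supports; completeness finishes part 1 and the guess-string coarse-graining finishes part 2. Your identity $\operatorname{Tr}[E\tau]=\|E^{1/2}\tau^{1/2}\|_2^2$ and your explicit construction of $G_{\bm b}$ from $\{M_\lambda\}$ and $\{h_x\}$ just spell out steps the paper states in one line.
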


\begin{proof}
If a setting-dependent POVM $\{Q_{a|x}\}_a$ succeeds with unit probability,
then every positive error term
$\operatorname{Tr}(Q_{a'|x}\tau_{a|x})$, $a'\ne a$, vanishes.
Positivity implies
$Q_{a'|x}\mathcal K_{a|x}=0$.  Completeness then implies that
$Q_{a|x}$ acts as the identity on $\mathcal K_{a|x}$, so the supports with
different labels are orthogonal.  Their support projections give the converse.

For the post regime, refine the classical outcome label to the complete
decoder string $\bm b$.  Perfect inference is equivalent to
\begin{equation}
\operatorname{Tr}(G_{\bm b}\tau_{a|x})=0
\qquad\text{whenever }b_x\ne a.
\end{equation}
For positive operators this is equivalent to $G_{\bm b}$ annihilating every
such support, which is exactly Eq.~\eqref{eq:guess_string_support}.  Conversely,
that range condition eliminates every erroneous decoder event.
\end{proof}

\begin{corollary}[Sharp support-complete facts]
\label{cor:sharp_delayed_commutation}
Suppose that for every $x$ the nonzero supports form an orthogonal
decomposition of the same effective space $\mathcal K$, and let
$P_{a|x}$ be their support projections on $\mathcal K$.  Then
\begin{equation}
P_{\mathrm{post}}[\tau]=1
\quad\Longleftrightarrow\quad
[P_{a|x},P_{a'|x'}]=0
\quad\text{for all }a,a',x,x'.
\label{eq:sharp_delayed_commutation}
\end{equation}
\end{corollary}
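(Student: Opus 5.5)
Both directions will be obtained from the support criterion in Theorem~\ref{thm:support_delayed_inference}, part 2, working on the effective space $\mathcal K$ throughout. First, the hypothesis gives the following: for each $x$, $\sum_a P_{a|x}=P_{\mathcal K}$, and the $P_{a|x}$ are mutually orthogonal. The span of the supports $\{\mathcal K_{a|x}:a\ne b_x\}$ is therefore $\operatorname{ran}(P_{\mathcal K}-P_{b_x|x})$ for each fixed $x$. Hence, inside $\mathcal K$,
\begin{equation}
\mathcal L_{\bm b}\cap\mathcal K=\bigcap_x \operatorname{ran}P_{b_x|x}.
\end{equation}
The component orthogonal to $\mathcal K$ carries no state weight. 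Any POVM can be compressed to $\mathcal K$ and its complement assigned to an arbitrary guess string. So perfect delayed inference is equivalent to the existence of a POVM on $\mathcal K$ with $\operatorname{ran}G_{\bm b}\subseteq\bigcap_x\operatorname{ran}P_{b_x|x}$.

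For the direction ($\Leftarrow$), assume all projections commute. The products
\begin{equation}
G_{\bm b}=\prod_x P_{b_x|x}
\end{equation}
are then orthogonal projections onto exactly $\bigcap_x\operatorname{ran}P_{b_x|x}$. Summing over all strings $\bm b$ factorizes into $\prod_x\sum_a P_{a|x}=P_{\mathcal K}$. This gives a POVM satisfying Eq.~\eqref{eq:guess_string_support}, so $P_{\mathrm{post}}=1$.

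For the direction ($\Rightarrow$), which is the main obstacle, fix a POVM $\{G_{\bm b}\}$ with the range condition. Each $G_{\bm b}$ is positive with range inside $\operatorname{ran}P_{b_x|x}$, so $P_{b_x|x}G_{\bm b}=G_{\bm b}=G_{\bm b}P_{b_x|x}$. Moreover, $P_{a|x}G_{\bm b}=0$ for $a\ne b_x$, by orthogonality within setting $x$. Summing over all strings with $b_x=a$ then gives
\begin{equation}
\sum_{\bm b:\,b_x=a}G_{\bm b}=P_{a|x}\Big(\sum_{\bm b}G_{\bm b}\Big)=P_{a|x}P_{\mathcal K}=P_{a|x}.
\end{equation}
So each $P_{a|x}$ is a coarse-graining (marginal) of one common POVM. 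Every $G_{\bm b}$ commutes with every $P_{a|x}$, being a fixed point of left and right multiplication by the relevant projections or annihilated by them. Hence each $P_{a|x}$, as a sum of $G_{\bm b}$'s, commutes with every $G_{\bm b}$, and therefore with every $P_{a'|x'}$, which is itself such a sum. This is the step I expect to need the most care. Alternatively one can invoke the standard fact that projective measurements that are marginals of one joint POVM must commute. The direct argument above avoids citing it. The care lies in checking that $G_{\bm b}P_{a|x}=\delta_{a,b_x}G_{\bm b}$ holds for every $x$ simultaneously, which is exactly what the intersection form of $\mathcal L_{\bm b}$ supplies.

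The reduction to $\mathcal K$ in the first step also needs one check: that off-$\mathcal K$ components of the $G_{\bm b}$ do not affect success. This follows because every $\tau_{a|x}$ is supported in $\mathcal K$. I will therefore state it once and then work with compressed operators $P_{\mathcal K}G_{\bm b}P_{\mathcal K}$. Compression preserves positivity and completeness on $\mathcal K$, and because $\mathcal K\cap\mathcal L_{\bm b}$ is the intersection above, it preserves the range inclusion.
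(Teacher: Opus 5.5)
Your proposal is correct and follows essentially the paper's route: a perfect post measurement is a guess-string POVM whose marginals are forced to equal the support PVMs $\{P_{a|x}\}_a$, so these PVMs are jointly measurable and hence commute, while the product $\prod_x P_{b_x|x}$ (their common spectral refinement) gives the converse. The only difference is that you prove the ``jointly measurable sharp PVMs commute'' step inline, via $G_{\bm b}P_{a|x}=\delta_{a,b_x}G_{\bm b}=P_{a|x}G_{\bm b}$, instead of citing it, which makes the argument self-contained; your compression to $\mathcal K$ is also fine, because $\mathcal L_{\bm b}\supseteq\mathcal K^\perp$ and so $\mathcal L_{\bm b}$ is invariant under $P_{\mathcal K}$.
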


\begin{proof}
Any perfect discriminator for a fixed setting is uniquely equal, on
$\mathcal K$, to its support PVM $\{P_{a|x}\}_a$.  A perfect post measurement
therefore supplies a joint POVM for all these PVMs.  Finite sharp PVMs are
jointly measurable if and only if their projections commute; their common
spectral refinement proves the converse.
\end{proof}

Support completeness is essential in this corollary.  Noncommutativity of one
arbitrarily chosen perfect observable is not by itself an obstruction, because
incomplete supports can admit different compatible extensions.  Theorem
\ref{thm:support_delayed_inference}, rather than a chosen observable
commutator, is the general criterion.

\subsection{Exact Clifford gap and a minimal access-restricted fact interface}

The perfect-support theorem has a noisy extension for a broad algebraic
family of fact subchannels.  Let $J_{a|x}^{(\eta)}$ be their Choi operators,
with binary $a=\pm1$, $m\ge2$ uniformly sampled settings, and
$0\le\eta\le1$.

\begin{theorem}[Clifford fact-access theorem]
\label{thm:clifford_fact_access}
Suppose
\begin{equation}
J_{a|x}^{(\eta)}
=
\frac12\Pi(I+a\eta A_x),
\label{eq:clifford_fact_choi}
\end{equation}
where $\Pi$ is a projection satisfying
\begin{equation}
\operatorname{Tr}_{\mathrm{out}}\Pi=I_{\mathrm{in}},
\end{equation}
and the Hermitian unitaries $A_x$ obey
\begin{equation}
[A_x,\Pi]=0,
\qquad
A_xA_{x'}+A_{x'}A_x=0
\quad(x\ne x').
\label{eq:clifford_relations}
\end{equation}
Allow arbitrary input states, arbitrary reference systems, and arbitrary fact
POVMs, but require the input preparation to be independent of $x$.  In the
post regime the fact measurement is completed before $x$ is revealed and only
its classical outcome is retained.  Then
\begin{align}
P_{\mathrm{pre}}(\eta)
&=\frac12(1+\eta),
\label{eq:clifford_pre}\\
P_{\mathrm{post}}(\eta)
=P_{\mathrm{none}}(\eta)
&=\frac12\left(1+\frac{\eta}{\sqrt m}\right).
\label{eq:clifford_post}
\end{align}
\end{theorem}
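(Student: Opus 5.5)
We reduce all three regimes to trace-norm computations on the conditional states generated by an arbitrary $x$-independent input. Purify the input and absorb the purification into the reference; this can only help the fact measurement. The input then becomes a fixed state $\omega$ on $\mathrm{in}\otimes C$. The conditional output states are
\begin{equation}
\tau_{a|x}=\operatorname{Tr}_{\mathrm{in}}\bigl[(\omega^{T_{\mathrm{in}}}\otimes I_{\mathrm{out}})J_{a|x}^{(\eta)}\bigr],
\end{equation}
and their differences are
\begin{equation}
D_x=\tau_{+1|x}-\tau_{-1|x}=\eta\,\Phi_\omega(\Pi A_x),
\end{equation}
where $\Phi_\omega$ is the fixed completely positive, trace-preserving map induced by $\omega$ and the channel with Choi operator $\Pi$. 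Here $\operatorname{Tr}_{\mathrm{out}}\Pi=I$ guarantees trace preservation, and $\sum_a\operatorname{Tr}\tau_{a|x}=1$. The map $\Phi_\omega$ is linear in the operator argument. Equations~\eqref{eq:general_PX_binary} and \eqref{eq:general_Pblind_binary} then give
\begin{equation}
P_{\mathrm{pre}}=\tfrac12\bigl(1+\tfrac1m\textstyle\sum_x\|D_x\|_1\bigr),\qquad
P_{\mathrm{none}}=\tfrac12\bigl(1+\tfrac1m\|\textstyle\sum_xD_x\|_1\bigr).
\end{equation}
For the pre regime, the bound $\|D_x\|_1\le\eta$ follows because $\Pi A_x$ is $\Pi$ times a commuting Hermitian unitary. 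Hence $\Pi A_x=\Pi P_x^+-\Pi P_x^-$ is a difference of two positive operators with partial traces summing to $I$, so $D_x$ is a difference of positive operators of total trace $\eta$. Attainment follows from a maximally entangled input: the outputs are the normalized Choi operators, and $\Pi(I\pm A_x)/2$ are orthogonal projections, so each setting is perfectly distinguished.

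For the none regime, the upper bound uses the Clifford relations. The operator $S=\sum_xA_x$ satisfies $S^2=mI$, so $B=S/\sqrt m$ is a Hermitian unitary commuting with $\Pi$. Then
\begin{equation}
\textstyle\sum_xD_x=\eta\sqrt m\,\Phi_\omega(\Pi B),
\end{equation}
and the same argument gives $\|\sum_xD_x\|_1\le\eta\sqrt m$. This yields $P_{\mathrm{none}}\le\frac12(1+\eta/\sqrt m)$. Attainment again comes from the maximally entangled input measured with the projections $\Pi(I\pm B)/2$.

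The main work is the post regime. Because it is sandwiched as $P_{\mathrm{none}}\le P_{\mathrm{post}}$, we need only an upper bound. A post strategy is a POVM $\{G_{\bm b}\}$ indexed by guess strings, giving
\begin{equation}
P_{\mathrm{post}}=\tfrac12+\tfrac1{2m}\sum_{\bm b}\operatorname{Tr}\Bigl[G_{\bm b}\sum_x b_xD_x\Bigr]
=\tfrac12+\tfrac{\eta}{2m}\sum_{\bm b}\operatorname{Tr}\bigl[G'_{\bm b}\,\Pi S_{\bm b}\bigr],
\end{equation}
where $S_{\bm b}=\sum_xb_xA_x$ and $G'_{\bm b}=\Phi_\omega^\dagger(G_{\bm b})$ form a POVM in the Heisenberg picture. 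The anticommutation relations give $S_{\bm b}^2=mI$ for every sign string $\bm b$, so $\Pi S_{\bm b}\preceq\sqrt m\,\Pi$. Applying this to the pulled-back measurement and summing $G'_{\bm b}$ to the identity yields
\begin{equation}
\sum_{\bm b}\operatorname{Tr}[G'_{\bm b}\Pi S_{\bm b}]\le\sqrt m\operatorname{Tr}[\Phi\text{-state}]=\sqrt m,
\end{equation}
which gives $P_{\mathrm{post}}\le\frac12(1+\eta/\sqrt m)$.

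The delicate step is justifying this operator inequality inside the Choi contraction with a general input $\omega$. We will write the success probability as
\begin{equation}
\operatorname{Tr}\Bigl[\Bigl(\sum_{\bm b}\omega^{T}\otimes\widetilde G_{\bm b}\Bigr)\Pi S_{\bm b}\Bigr]
\end{equation}
with positive operators $X_{\bm b}$ satisfying $\sum_{\bm b}\operatorname{Tr}[X_{\bm b}\Pi]=1$, and then use $\operatorname{Tr}[X\Pi S]\le\sqrt m\operatorname{Tr}[X\Pi]$ for $X\succeq0$. This last inequality is where the commutation $[A_x,\Pi]=0$ is needed.

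The retention of only the classical outcome is exactly what forbids deferring a quantum measurement until $x$ is revealed. It is precisely this restriction that makes $P_{\mathrm{post}}$ collapse to $P_{\mathrm{none}}$.
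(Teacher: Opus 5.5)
Your proposal is correct and takes essentially the paper's route. The decisive step is the same: $S_{\bm b}^2=mI$ gives $\Pi S_{\bm b}\preceq\sqrt m\,\Pi$, which you contract against positive pulled-back tester elements normalized through $\operatorname{Tr}_{\mathrm{out}}\Pi=I_{\mathrm{in}}$; this is the paper's bound $C_{\bm b}\preceq\frac12(1+\eta/\sqrt m)\Pi$. For attainment you use Helstrom trace norms with a maximally entangled input and the sandwich $P_{\mathrm{none}}\le P_{\mathrm{post}}$, where the paper writes down explicit testers, but this is only a minor variation.

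One wording fix is needed. $\Phi_\omega$ is completely positive but not trace preserving on the Choi space. The operators $\Phi_\omega^\dagger(G_{\bm b})$ sum to $\omega_{\mathrm{in}}^{T}\otimes I_{\mathrm{out}}$ (the tester normalization), not to the identity. That is exactly what yields $\sum_{\bm b}\operatorname{Tr}[\Phi_\omega^\dagger(G_{\bm b})\Pi]=\operatorname{Tr}\Phi_\omega(\Pi)=1$, which is the identity you actually use.
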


\begin{proof}
A general one-use tester is represented by positive operators $T_\lambda$
whose sum is $\rho_{\mathrm{in}}^T\otimes I_{\mathrm{out}}$ for a density
operator $\rho_{\mathrm{in}}$; this representation already includes arbitrary
references.  For a post-measurement guess string
$\bm b\in\{\pm1\}^m$, the reward operator is
\begin{align}
C_{\bm b}
&=\frac1m\sum_xJ_{b_x|x}^{(\eta)}
\nonumber\\
&=
\frac12\Pi
\left(
I+\frac{\eta}{\sqrt m}H_{\bm b}
\right),
\qquad
H_{\bm b}:=\frac1{\sqrt m}\sum_xb_xA_x.
\label{eq:clifford_reward}
\end{align}
Equation~\eqref{eq:clifford_relations} gives
$H_{\bm b}^2=I$ and $[H_{\bm b},\Pi]=0$.  Hence
\begin{equation}
C_{\bm b}\preceq
\frac12\left(1+\frac{\eta}{\sqrt m}\right)\Pi.
\label{eq:clifford_reward_bound}
\end{equation}
Summing over tester outcomes and using
$\operatorname{Tr}_{\mathrm{out}}\Pi=I_{\mathrm{in}}$ proves the upper bound
in Eq.~\eqref{eq:clifford_post}.

The bound is attained.  Let $d=\dim\mathcal H_{\mathrm{in}}$, choose
$\rho_{\mathrm{in}}=I/d$, and set
\begin{equation}
T_{\bm b}
=
\frac1{d2^m}
\left[
\Pi(I+H_{\bm b})+(I-\Pi)
\right].
\label{eq:clifford_optimal_tester}
\end{equation}
These operators are positive and sum to $I/d$ on the full Choi space.
Anticommutation also implies
$\operatorname{Tr}(\Pi H_{\bm b})=0$, while
$\operatorname{Tr}\Pi=d$; substitution into
Eq.~\eqref{eq:clifford_reward} gives equality.  The same construction with
$H=(1/\sqrt m)\sum_xA_x$ and two outcomes proves the none value.  For a known
setting, $J_{a|x}^{(\eta)}\preceq(1+\eta)\Pi/2$, and the two-outcome tester
\begin{equation}
T_{a|x}
=
\frac1{2d}
\left[
\Pi(I+aA_x)+(I-\Pi)
\right]
\end{equation}
attains the bound, proving Eq.~\eqref{eq:clifford_pre}.
\end{proof}

We now realize the $m=2$ case in the minimal reset/dephasing family.  Define a
fixed unitary $U_{\mathrm{dec}}:C B_I\to S K$ by
\begin{align}
U_{\mathrm{dec}}|00\rangle&=|\Phi^+\rangle,&
U_{\mathrm{dec}}|10\rangle&=|\Phi^-\rangle,\nonumber\\
U_{\mathrm{dec}}|11\rangle&=|\Psi^+\rangle,&
U_{\mathrm{dec}}|01\rangle&=|\Psi^-\rangle,
\label{eq:fact_decoder_unitary}
\end{align}
or, equivalently,
\begin{equation}
U_{\mathrm{dec}}
=
\operatorname{CNOT}_{C\to B_I}
(H_C\otimes I)
\operatorname{CNOT}_{B_I\to C}.
\end{equation}

We now state the access restriction as a property of an acyclic operational
network.  Let $\mathsf M$ be the fact-measurement event and $\mathsf G$ the final decoding
event.  The setting $x$ is an exogenous classical variable, independent of
all observer-side preparation resources; it is generated before the branch
experiment but its classical register $X$ is delivered only after $\mathsf M$.  Every observer-held
quantum system, including $S$ and every retained reference, terminates at
$\mathsf M$.  The complete memory available at $\mathsf G$ is the commuting algebra generated
by the classical measurement outcome $\lambda$ and $X$; neither the target
record $a$ nor any other branch metadata are supplied.  In particular, no
correlated quantum side information or pre-shared entanglement is available
at $\mathsf G$.  Equivalently, the complete cross-deadline transformation from
observer-side quantum resources at $\mathsf M$ to resources at $\mathsf G$ is
quantum-to-classical and therefore entanglement-breaking, without
entanglement assistance across the cut.  The output $K$ is routed outside the
observer's protocol domain, with no admissible quantum or classical path from
$K$ to either $\mathsf M$ or $\mathsf G$ within the protocol window.  Under
this causal cut the accessible fact interface is the channel
\begin{equation}
\Lambda(\omega)
=
\operatorname{Tr}_{K}
\left[
U_{\mathrm{dec}}\omega U_{\mathrm{dec}}^\dagger
\right].
\label{eq:irreversible_fact_interface}
\end{equation}
Thus ``post'' here means delayed revelation of a setting that already
controlled the branch experiment; it is not the future process choice in
Theorem~\ref{thm:causal_location}.

\begin{proposition}[Causal-access reduction to a post-information tester]
\label{prop:causal_access_reduction}
Fix one use of a branch family with a uniformly sampled classical setting
$x$, and require the input preparation and every retained reference to be
independent of $x$.  Under the causal-cut conditions above, every admissible
strategy is an $x$-independent one-use tester $\{T_\lambda\}_\lambda$ on the
reduced output of Eq.~\eqref{eq:irreversible_fact_interface}, followed after
$\mathsf M$ by a classical decoder $h_x(\lambda)$.  Conversely, under standard
quantum theory with unrestricted observer-side operations before the cut,
every such tester and decoder are admissible.
\end{proposition}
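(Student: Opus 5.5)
The plan is to prove the two directions separately by tracking, along the acyclic network, which systems can influence which events. The setting $x$ is exogenous and its register $X$ reaches only $\mathsf G$, so $\mathsf M$ has no dependence on $x$.

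\emph{Necessity: every admissible strategy is an $x$-independent tester followed by a classical decoder.} First I would model an arbitrary admissible strategy before $\mathsf M$. It consists of an $x$-independent preparation of the input on $A_O$ together with an arbitrary reference $R$, and then the branch channel acting on $A_O$. The output $C B_I$ is then sent through $U_{\mathrm{dec}}$. Because $K$ has no admissible path to $\mathsf M$ or $\mathsf G$, the joint state reachable at $\mathsf M$ is obtained by tracing out $K$, which gives exactly the reduced interface $\Lambda$ of Eq.~\eqref{eq:irreversible_fact_interface}. Next, $\mathsf M$ performs some POVM $\{N_\lambda\}$ on $S$ and the retained references. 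Every observer-held quantum system terminates at $\mathsf M$, and the cross-deadline map is quantum-to-classical with no entanglement assistance. So the only datum that survives is the classical outcome $\lambda$. I would then use the standard Choi/link-product representation of a one-slot quantum comb, the same tester form used in the proof of Theorem~\ref{thm:clifford_fact_access}. Setting
\[
T_\lambda = \operatorname{Tr}_R\big[(\rho_{R\,\mathrm{in}})^{T_{\mathrm{in}}}(N_\lambda\otimes I)\big]
\]
(suitably ordered) gives positive operators with $\sum_\lambda T_\lambda=\rho_{\mathrm{in}}^T\otimes I_{\mathrm{out}}$. This $\{T_\lambda\}$ is independent of $x$ because both the preparation and $\{N_\lambda\}$ are. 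At $\mathsf G$ the memory is the commutative algebra generated by $\lambda$ and $X$, and the target $a$ is not supplied. Any final guess is therefore a possibly randomized function of $(\lambda,x)$. Randomization can be absorbed into a refinement of $\lambda$ (or dropped by convexity), which leaves a deterministic decoder $h_x(\lambda)$.

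\emph{Sufficiency: every such tester and decoder is admissible.} Given any tester $\{T_\lambda\}$ with $\sum_\lambda T_\lambda=\rho_{\mathrm{in}}^T\otimes I$, I would realize it by the standard purification construction. Prepare a purification of $\rho_{\mathrm{in}}$ on $A_O R$, which is independent of $x$. Then implement on $R\otimes S$ the POVM
\[
N_\lambda=(\rho_{\mathrm{in}}^{-1/2}\otimes I)\,T_\lambda\,(\rho_{\mathrm{in}}^{-1/2}\otimes I),
\]
with the transpose conventions and pseudo-inverses taken on the support of $\rho_{\mathrm{in}}$, and complete it by an arbitrary effect off the support. Both steps are unrestricted observer operations occurring before the cut. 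After this, only $\lambda$ is recorded. The decoder $h_x(\lambda)$ is a classical function of data present at $\mathsf G$, so it respects the causal cut. This mirrors the construction in Proposition~\ref{prop:future_realization}.

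\emph{Main obstacle.} I expect the main difficulty to be the necessity step, specifically justifying that nothing beyond $\lambda$ survives to $\mathsf G$. This requires arguing that any entanglement-breaking cross-deadline channel without entanglement assistance factors through a measurement, by the Holevo form of entanglement-breaking channels. Its measurement outcome can be merged into $\lambda$, and any later classical processing of it can be folded into $h_x$. The argument must also make sure that retained references correlated with the input cannot carry $x$-dependence. This follows from the independence of $x$ from all observer-side resources.
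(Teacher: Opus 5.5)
Your proposal is correct and follows essentially the same route as the paper: trace out $K$ to obtain $\Lambda$, use the classical-memory cut to reduce the action at $\mathsf M$ to an $x$-independent POVM, fold the input and references into a tester with $\sum_\lambda T_\lambda=\rho_{\mathrm{in}}^T\otimes I_{\mathrm{out}}$, take deterministic decoders by convexity, and obtain the converse by the standard physical realization of a tester. Your explicit link-product formula for $T_\lambda$, the purification construction, and the measure-and-prepare reduction of an entanglement-breaking cut only spell out steps the paper states in one line. One small correction: the tester's input space is $C A_O$ because $C$ enters $U_{\mathrm{dec}}$, so the purification should be of $\rho_{\mathrm{in}}$ on $C A_O$, not on $A_O$ alone.
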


\begin{proof}
Before the causal cut, write the branch output at $\mathsf M$ as a subnormalized
operator $\widetilde\tau_{a|x}^{R S K}$, where $R$ contains every local
reference.  Since no information from $K$ reaches $\mathsf M$ or $\mathsf G$, an admissible
effect has the form $M_\lambda^{R S}\otimes I^K$, and hence
\begin{align}
&\operatorname{Tr}
\left[
(M_\lambda^{R S}\otimes I^K)
\widetilde\tau_{a|x}^{R S K}
\right]
\nonumber\\
&\qquad=
\operatorname{Tr}
\left[
M_\lambda^{R S}
\operatorname{Tr}_K\widetilde\tau_{a|x}^{R S K}
\right].
\label{eq:causal_cut_partial_trace}
\end{align}
Because every quantum system held at $\mathsf M$ terminates there and the complete
memory at $\mathsf G$ is classical, the operation at $\mathsf M$ is statistically exhausted
by the POVM $\{M_\lambda^{R S}\}_\lambda$.  Folding the $x$-independent input
and reference into this POVM gives the standard tester representation
\begin{equation}
p(a,\lambda|x)
=
\operatorname{Tr}
\left[
T_\lambda J_{a|x}
\right],
\qquad
\sum_\lambda T_\lambda
=
\rho_{\mathrm{in}}^T\otimes I_{\mathrm{out}}.
\label{eq:causal_cut_tester}
\end{equation}
Randomized classical decoding cannot improve a linear objective beyond an
extreme point, so it is sufficient to use deterministic maps
$h_x(\lambda)$.  This is exactly a post-information tester.  The converse is
the usual physical realization of a one-use tester, with its outcome stored
in $\lambda$.
\end{proof}

Let $\mathcal T_{a|x}^{(\eta)}$ denote the subchannel from $C A_O$ to
$C B_I$ defined by Eq.~\eqref{eq:relative_conditioned_states}, and set
\begin{equation}
\mathcal N_{a|x}^{(\eta)}
=
\Lambda\circ\mathcal T_{a|x}^{(\eta)}.
\end{equation}
In the Choi order $C A_O S$, direct substitution gives
\begin{equation}
J[\mathcal N_{a|x}^{(\eta)}]
=
\frac12\Pi_+(I+a\eta A_x),
\label{eq:decoded_fact_choi}
\end{equation}
where
\begin{align}
\Pi_+
&=
\frac12
\left(
I^{C A_O S}
+X^C\otimes I^{A_O}\otimes Z^S
\right),
\label{eq:decoded_Pi}\\
A_{\mathrm R}
&=
I^C\otimes X^{A_O}\otimes I^S,
\label{eq:decoded_AR}\\
A_{\mathrm D}
&=
-Y^C\otimes Y^{A_O}\otimes X^S.
\label{eq:decoded_AD}
\end{align}
These operators obey
\begin{equation}
\operatorname{Tr}_S\Pi_+=I^{C A_O},
\qquad
[A_x,\Pi_+]=0,
\qquad
A_{\mathrm R}A_{\mathrm D}+A_{\mathrm D}A_{\mathrm R}=0.
\label{eq:decoded_clifford_check}
\end{equation}
The Clifford theorem therefore gives the following exact result.

\begin{corollary}[Causally restricted delayed-setting gap]
\label{cor:irreversible_delayed_gap}
Under the causal-access conditions of
Proposition~\ref{prop:causal_access_reduction}, the decoded reset/dephasing
fact subchannels $\{\mathcal N_{a|x}^{(\eta)}\}$, with
$0\le\eta\le1$ and uniformly sampled $x=\mathrm R,\mathrm D$, have
\begin{equation}
\boxed{
P_{\mathrm{pre}}(\eta)=\frac12(1+\eta),
\qquad
P_{\mathrm{post}}(\eta)=P_{\mathrm{none}}(\eta)
=\frac12\left(1+\frac{\eta}{\sqrt2}\right).
}
\label{eq:decoded_exact_hierarchy}
\end{equation}
where the pre value refers to the comparison protocol in which $X$ is instead
delivered to $\mathsf M$ before its measurement, while the post and none values obey
the causal and classical-memory cuts stated above.
At $\eta=1$, the pre regime is perfect whereas every measurement completed
before revelation of $x$ has error at least
$(1-1/\sqrt2)/2$.
\end{corollary}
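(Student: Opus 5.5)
The plan is to reduce the corollary to the Clifford fact-access theorem (Theorem~\ref{thm:clifford_fact_access}) with $m=2$. Proposition~\ref{prop:causal_access_reduction} supplies the operational side. Under the stated causal and classical-memory cuts, every admissible post or none strategy is exactly an $x$-independent one-use tester $\{T_\lambda\}$ on the Choi operators $J[\mathcal N_{a|x}^{(\eta)}]$, followed by a deterministic classical decoder $h_x(\lambda)$ or a fixed decoder. These are precisely the post and none regimes assumed in the Clifford theorem. The comparison pre protocol, with $X$ delivered to $\mathsf M$, is a setting-selected tester, which matches the pre regime there. So once the hypotheses of the Clifford theorem hold for $\{J[\mathcal N_{a|x}^{(\eta)}]\}$, Eqs.~\eqref{eq:clifford_pre} and \eqref{eq:clifford_post} at $m=2$ give Eq.~\eqref{eq:decoded_exact_hierarchy}. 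The $\eta=1$ error bound then follows as $1-\tfrac12(1+1/\sqrt2)=(1-1/\sqrt2)/2$.

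The substantive step is verifying Eq.~\eqref{eq:decoded_fact_choi} together with the relations in Eq.~\eqref{eq:decoded_clifford_check}. I would compute the Choi operator of $\mathcal T_{a|x}^{(\eta)}$ directly.

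For reset, $V_{\mathrm R}$ maps $|j\rangle^{A_O}\mapsto|0\rangle^{B_I}|j\rangle^E$. The record effect $\mathsf E_a^{(\eta)}$ on $E$ then turns, after tracing $E$, into the effect $\tfrac12(I+a\eta X)$ acting transposed on the input $A_O$ ($X^T=X$), with $B_I$ fixed at $|0\rangle$.

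For dephasing, $B_I$ instead copies $j$. Tracing $E$ with the same effect then yields a Choi operator built from $|jj\rangle\langle kk|$ terms, weighted by $\tfrac12(\delta_{jk}+a\eta X_{jk})$.

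The identity channel on the reference $C$ contributes $|\Phi\rangle\langle\Phi|$-type factors on $CC'$. After applying $U_{\mathrm{dec}}$ and tracing $K$, I expect the reference-plus-input Choi space to collapse so that only $C$, $A_O$, and $S$ remain, as in the stated Choi order.

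To evaluate the action of $U_{\mathrm{dec}}$, I would use its Bell-basis description in Eq.~\eqref{eq:fact_decoder_unitary}. The reduced states of $S$ on Bell vectors are simple: tracing $K$ from $|\Phi^\pm\rangle$ or $|\Psi^\pm\rangle$ gives $I/2$, and the cross terms give Pauli operators. I would also track Pauli strings through the circuit form of $U_{\mathrm{dec}}$.

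With the Choi operators in hand, I would identify them as $\tfrac12\Pi_+(I\pm\eta A_x)$ with the displayed Pauli strings. The remaining checks are then routine Pauli algebra.

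First, $\Pi_+$ is a projection, since $(X^C Z^S)^2=I$, and $\operatorname{Tr}_S\Pi_+=I^{CA_O}$ because $\operatorname{Tr}Z=0$. Second, each $A_x$ commutes with $X^C\otimes Z^S$: $A_{\mathrm R}$ acts trivially on $C$ and $S$, and $A_{\mathrm D}$ picks up two anticommuting sign flips, from $Y$ against $X$ on $C$ and from $X$ against $Z$ on $S$, which cancel. Third, $A_{\mathrm R}$ and $A_{\mathrm D}$ anticommute only on $A_O$, where $X$ meets $Y$, so they anticommute overall.

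I expect the main obstacle to be the Choi bookkeeping. It requires fixing the transpose convention, placing the reference $C$ in the Choi space, and confirming that the dephasing branch really produces the three-body string $-Y\otimes Y\otimes X$ with the correct sign rather than some other string. The sign itself is harmless, since it can be absorbed by relabeling $a$ only if it is uniform in $a$. I would therefore first check the correctness on the $\eta=1$ conditional states against Eqs.~\eqref{eq:relative_Bell_states_R} and \eqref{eq:relative_Bell_states_D} by feeding in the Bell input $|\Phi^+\rangle_{CA_O}$, and only then assemble the general Choi form.
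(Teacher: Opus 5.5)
Your proposal is correct and follows the paper's own route. Proposition~\ref{prop:causal_access_reduction} identifies the admissible pre, post and none strategies with the testers of Theorem~\ref{thm:clifford_fact_access} at $m=2$, and Eqs.~\eqref{eq:decoded_fact_choi}--\eqref{eq:decoded_clifford_check} supply its hypotheses; your Pauli-algebra checks are exactly the needed ones, and your explicit Choi computation does reproduce the paper's ``direct substitution'' result with the stated strings and signs, via $\operatorname{Tr}_K[U_{\mathrm{dec}}|cb\rangle\langle c'b'|U_{\mathrm{dec}}^\dagger]=\tfrac12P_{cb}P_{c'b'}^\dagger$ with $P_{00}=I$, $P_{10}=Z$, $P_{11}=X$, $P_{01}=iY$. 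In any case an overall sign on $A_{\mathrm D}$ would be harmless, because the Clifford hypotheses are invariant under $A_x\to-A_x$.
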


Indeed, Proposition~\ref{prop:causal_access_reduction} identifies the complete
admissible strategy set with the post testers in
Theorem~\ref{thm:clifford_fact_access}, and
Eq.~\eqref{eq:decoded_clifford_check} verifies its $m=2$ hypotheses.  Notice
also that
\begin{equation}
\sum_aJ[\mathcal N_{a|x}^{(\eta)}]=\Pi_+
\qquad\text{for both }x.
\label{eq:decoded_nonsignalling_sum}
\end{equation}
Consequently every early tester outcome satisfies
$p(\lambda|x)=\operatorname{Tr}(T_\lambda\Pi_+)$, independently of $x$.
This is the relevant no-signalling-type marginal consistency.  It does not
imply the causal cut: no-signalling alone neither removes $K$ from the
observer's domain nor forbids quantum memory across the deadline.

\begin{lemma}[Data processing for delayed fact access]
\label{lem:fact_access_data_processing}
Let $\{\tau_{a|x}\}_{a,x}$ be a conditioned state family and let $\Gamma$ be
an $x$-independent channel applied before the fact measurement.  Then
\begin{equation}
P_{\mathrm{post}}[\{\Gamma(\tau_{a|x})\}]
\le
P_{\mathrm{post}}[\{\tau_{a|x}\}].
\label{eq:fact_access_data_processing}
\end{equation}
If a channel $\mathcal R$ satisfies
$\mathcal R\circ\Gamma(\tau_{a|x})=\tau_{a|x}$ for every branch, equality
holds.
\end{lemma}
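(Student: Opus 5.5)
The plan is to use the explicit form of $P_{\mathrm{post}}$ in Eq.~\eqref{eq:Ppost_definition}, with the input and reference absorbed as in Proposition~\ref{prop:causal_access_reduction}. We regard any post strategy for the processed family as a strategy for the original family. Fix an $x$-independent measurement $\{M_\lambda\}_\lambda$ and decoders $h_x$ that are admissible for $\{\Gamma(\tau_{a|x})\}$. Define the Heisenberg-picture effects $\widetilde M_\lambda:=\Gamma^\dagger(M_\lambda)$. Since $\Gamma$ is completely positive and trace preserving, $\Gamma^\dagger$ is positive and unital. Hence $\{\widetilde M_\lambda\}$ is again a POVM, and it is independent of $x$ because $\Gamma$ is.

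The first step is the identity $\operatorname{Tr}[M_\lambda\Gamma(\tau_{a|x})]=\operatorname{Tr}[\widetilde M_\lambda\tau_{a|x}]$ for every $a,x,\lambda$. With the same decoders $h_x$, the success probability of $(\{M_\lambda\},\{h_x\})$ on the processed family therefore equals that of $(\{\widetilde M_\lambda\},\{h_x\})$ on the original family. Taking the maximum gives Eq.~\eqref{eq:fact_access_data_processing}. If references are kept explicit, I would use $\mathrm{id}_R\otimes\Gamma$ instead, so complete positivity is exactly what is needed. The timing structure is also preserved: $\widetilde M_\lambda$ is still completed before $x$ is revealed, and only $\lambda$ is passed to the decoder. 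So the dilated strategy satisfies the post constraints.

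For the equality clause, I would apply the inequality just proved twice. First apply it with $\Gamma$ to the family $\{\tau_{a|x}\}$. Then apply it with $\mathcal R$ to the processed family $\{\Gamma(\tau_{a|x})\}$. This gives $P_{\mathrm{post}}[\{\tau_{a|x}\}]=P_{\mathrm{post}}[\{\mathcal R\circ\Gamma(\tau_{a|x})\}]\le P_{\mathrm{post}}[\{\Gamma(\tau_{a|x})\}]$. Combined with the first inequality, this forces equality. The only requirement is that $\mathcal R$ be a channel, and $x$-independence of $\mathcal R$ is automatic in the statement.

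I expect the main obstacle to be bookkeeping rather than substance. If the paper's $P_{\mathrm{post}}$ also maximizes over inputs $\rho$, then $\Gamma$ must act only on the output side. In that case I would write the branch family as functions of the input and note that the reduction applies for each fixed input. This leaves the maximization over inputs unchanged, since both sides maximize over the same $x$-independent preparations.
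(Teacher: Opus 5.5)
Your proposal is correct and follows essentially the same route as the paper's proof. You pull back the output POVM to the $x$-independent POVM $\{\Gamma^\dagger(M_\lambda)\}_\lambda$, which reproduces all branch probabilities, keep the same classical decoders $h_x$, and then apply the same inequality to $\mathcal R$ on the processed family to obtain the reverse bound; your remarks on $\mathrm{id}_R\otimes\Gamma$ and on the input maximization are harmless refinements of that argument.
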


\begin{proof}
For every output POVM $\{M_\lambda\}$, the operators
$\{\Gamma^\dagger(M_\lambda)\}$ form an input POVM and reproduce all branch
probabilities.  Pulling back the measurement and retaining the same classical
decoders proves Eq.~\eqref{eq:fact_access_data_processing}.  If
$\mathcal R$ recovers every branch, applying the same argument to
$\mathcal R$ gives the reverse inequality.
\end{proof}

Applying Lemma~\ref{lem:fact_access_data_processing} to $\Lambda$ and comparing
Eqs.~\eqref{eq:fact_access_hierarchy_current} and
\eqref{eq:decoded_exact_hierarchy} gives a strict decrease of
$P_{\mathrm{post}}$ for every $\eta>0$.  This already excludes a branchwise
recovery of the complete calibrated assemblage; the sharp-state argument
below gives a quantitative obstruction.

For the Bell input in Eq.~\eqref{eq:relative_Bell_input}, the normalized
conditional states after $\Lambda$ are simply
\begin{align}
\mathrm R:\quad&|0\rangle,|1\rangle,
&
\mathrm D:\quad&|+\rangle,|-\rangle.
\label{eq:decoded_pauli_facts}
\end{align}
Thus the interface converts the compatible raw observables on $C B_I$ into
support-complete $Z$ and $X$ facts on the accessible qubit $S$.  Corollary
\ref{cor:sharp_delayed_commutation} gives the perfect-case obstruction
directly, while Theorem~\ref{thm:clifford_fact_access} gives its exact noisy
value.

The role of the excluded output is especially transparent at unit visibility.
Writing the four raw states in
Eqs.~\eqref{eq:relative_Bell_states_R} and
\eqref{eq:relative_Bell_states_D} as
$|r_\pm\rangle$ and $|d_\pm\rangle$, respectively, direct substitution gives
\begin{align}
U_{\mathrm{dec}}|r_+\rangle&=|0\rangle^S|0\rangle^K,
&
U_{\mathrm{dec}}|r_-\rangle&=|1\rangle^S|1\rangle^K,
\nonumber\\
U_{\mathrm{dec}}|d_+\rangle&=|+\rangle^S|+\rangle^K,
&
U_{\mathrm{dec}}|d_-\rangle&=|-\rangle^S|-\rangle^K.
\label{eq:decoded_global_code}
\end{align}
Moreover,
\begin{align}
U_{\mathrm{dec}}Z_{\mathrm R}U_{\mathrm{dec}}^\dagger
&=Z^S\otimes I^K,
&
U_{\mathrm{dec}}Z_{\mathrm D}U_{\mathrm{dec}}^\dagger
&=I^S\otimes X^K.
\label{eq:decoded_global_observables}
\end{align}
The globally transformed facts therefore remain compatible.

\begin{proposition}[Full-access closure]
\label{prop:full_access_closure}
With the same optimization over $x$-independent inputs and references as in
Corollary~\ref{cor:irreversible_delayed_gap}, but with the complete output
$S K$ accessible at the fact event, the calibrated visibility family obeys
\begin{equation}
P_{\mathrm{post}}^{S K}(\eta)
=
P_{\mathrm{pre}}^{S K}(\eta)
=
\frac12(1+\eta).
\label{eq:full_access_value}
\end{equation}
Thus the delayed gap vanishes for every $0\le\eta\le1$, and inference is
perfect at $\eta=1$.
\end{proposition}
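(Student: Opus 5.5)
We prove the two inequalities $P_{\mathrm{post}}^{SK}(\eta)\le P_{\mathrm{pre}}^{SK}(\eta)\le\tfrac12(1+\eta)$ and then exhibit a single post-regime strategy that reaches $\tfrac12(1+\eta)$.

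\textbf{Upper bounds.} The first inequality $P_{\mathrm{post}}\le P_{\mathrm{pre}}$ is immediate. Any delayed strategy is a setting-dependent coarse-graining of one fixed POVM, so it is itself a special pre strategy.

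For the second inequality, note that $U_{\mathrm{dec}}$ is unitary. By Lemma~\ref{lem:fact_access_data_processing}, applied in both directions with $\Gamma=U_{\mathrm{dec}}\cdot U_{\mathrm{dec}}^\dagger$ and $\mathcal R$ its inverse, $P_{\mathrm{pre}}^{SK}$ and $P_{\mathrm{post}}^{SK}$ equal the corresponding values on the raw family $\tau^{(\eta)}_{a|x}$ on $CB_I$. The pre value on $CB_I$ is already known from Theorem~\ref{thm:relative_fact_control_gap}: for each $x$, $\|\tau^{(\eta)}_{+1|x}-\tau^{(\eta)}_{-1|x}\|_1\le\eta$. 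This gives $P_{\mathrm{pre}}^{SK}\le\tfrac12(1+\eta)$.

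\textbf{Achievability.} Take the Bell input of Eq.~\eqref{eq:relative_Bell_input}. At $\eta=1$, Eq.~\eqref{eq:decoded_global_code} shows that on $SK$ the R branches are $|00\rangle,|11\rangle$ and the D branches are $|{+}{+}\rangle,|{-}{-}\rangle$. The fact observables become $Z^S\otimes I^K$ and $I^S\otimes X^K$ by Eq.~\eqref{eq:decoded_global_observables}. These act on different tensor factors and so commute.

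The post strategy is as follows.
\begin{enumerate}
\item Measure the common refinement $\{Z^S\}\times\{X^K\}$ before $x$ is revealed, obtaining outcomes $(s,t)$.
\item After learning $x$, decode $h_{\mathrm R}(s,t)=s$ and $h_{\mathrm D}(s,t)=t$.
\end{enumerate}
At $\eta=1$ each branch state is an eigenstate of the relevant observable with eigenvalue $a$, so the decoding is perfect.

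For general $\eta$, use Eq.~\eqref{eq:relative_visibility_mixture}: $\tau^{(\eta)}_{a|x}$ is a classical flip of the sharp record with probability $(1-\eta)/2$. The same decoder is therefore correct with probability $(1+\eta)/2$ for both settings. Hence $P_{\mathrm{post}}^{SK}\ge\tfrac12(1+\eta)$, and together with the upper bounds all three quantities coincide.

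This argument is the same as the raw-algebra argument leading to Eq.~\eqref{eq:fact_access_hierarchy_current}, transported through the unitary. An alternative check uses Corollary~\ref{cor:sharp_delayed_commutation}: the supports of the branch states inside the effective space do admit a joint refinement, because the commuting projectors $Z^S$ and $X^K$ separate them.

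\textbf{Main obstacle.} The only real point is the identification of the general-$\eta$ states. One must check that the Bell-input branches on $SK$ are exactly the sharp states mixed with the classical flip, so that the decoder's success is $(1+\eta)/2$ uniformly. This follows from the linearity in Eq.~\eqref{eq:relative_visibility_mixture} together with Eq.~\eqref{eq:decoded_global_code}.
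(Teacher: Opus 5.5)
Your argument is correct and is essentially the paper's proof: you use the same $x$-independent product PVM built from $Z^S$ and $X^K$ with delayed decoders $h_{\mathrm R}=s$, $h_{\mathrm D}=t$, the same classical-flip mixture for general $\eta$, and the same optimality argument via $P_{\mathrm{post}}\le P_{\mathrm{pre}}\le\tfrac12(1+\eta)$. Your detour through unitary invariance is harmless but unnecessary, since $\tau^{(\eta)}_{+1|x}-\tau^{(\eta)}_{-1|x}=\eta\bigl(\tau^{(1)}_{+1|x}-\tau^{(1)}_{-1|x}\bigr)$ has trace norm at most $\eta$ on any accessible system. One small correction to your side remark: Corollary~\ref{cor:sharp_delayed_commutation} does not literally apply here, because the $\mathrm R$ supports span $\operatorname{span}\{|00\rangle,|11\rangle\}$ while the $\mathrm D$ supports span $\operatorname{span}\{|{+}{+}\rangle,|{-}{-}\rangle\}$, so there is no common effective space; your product PVM instead verifies the general guess-string criterion of Theorem~\ref{thm:support_delayed_inference}.
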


\begin{proof}
The fixed four-outcome PVM
\begin{equation}
M_{uv}^{S K}
=
\frac{I+uZ^S}{2}
\otimes
\frac{I+vX^K}{2},
\qquad u,v\in\{\pm1\},
\label{eq:full_access_pvm}
\end{equation}
is independent of $x$.  After the setting is revealed, use $u$ for
$x=\mathrm R$ and $v$ for $x=\mathrm D$.  Equation
\eqref{eq:decoded_global_code} gives perfect sharp-branch inference.  The
visibility mixture in Eq.~\eqref{eq:relative_visibility_mixture} therefore
gives success $(1+\eta)/2$.  This is optimal because
$P_{\mathrm{post}}\le P_{\mathrm{pre}}$, and the settingwise Helstrom value is
$(1+\eta)/2$.
\end{proof}

\begin{theorem}[Minimal calibrated Stinespring interface]
\label{thm:minimal_fact_interface}
Let
\begin{equation}
\mathcal H_{\mathrm{act}}
=
\operatorname{span}
\{|r_+\rangle,|r_-\rangle,|d_+\rangle,|d_-\rangle\}
=
\operatorname{span}\{|00\rangle,|10\rangle,|11\rangle\}.
\label{eq:active_fact_subspace}
\end{equation}
Suppose a channel
$\Gamma:\mathcal L(\mathcal H_{\mathrm{act}})\to\mathcal L(\mathcal H_S)$
satisfies
\begin{align}
\Gamma(|r_+\rangle\langle r_+|)&=|0\rangle\langle0|,
&
\Gamma(|r_-\rangle\langle r_-|)&=|1\rangle\langle1|,
\nonumber\\
\Gamma(|d_+\rangle\langle d_+|)&=|+\rangle\langle+|,
&
\Gamma(|d_-\rangle\langle d_-|)&=|-\rangle\langle-|.
\label{eq:calibrated_fact_map}
\end{align}
Then the minimal Stinespring environment dimension is at least two.  At
dimension two the isometry on $\mathcal H_{\mathrm{act}}$ is unique up to an
environment unitary and equals the restriction of
$U_{\mathrm{dec}}$.  Equivalently, the channel on
$\mathcal H_{\mathrm{act}}$ is unique and has minimal Choi rank two.  The
full channel $\Lambda$ in
Eq.~\eqref{eq:irreversible_fact_interface} has Choi rank two and attains this
minimum.  Nevertheless, no recovery channel
$\mathcal R:S\to C B_I$ recovers all four calibrated sharp branches after
tracing out $K$.
\end{theorem}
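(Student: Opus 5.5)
The plan is to reduce everything to inner-product constraints on the four calibrated vectors. In the order $C B_I$ one has $|r_\pm\rangle=(|00\rangle\pm|10\rangle)/\sqrt2$ and $|d_\pm\rangle=(|00\rangle\pm|11\rangle)/\sqrt2$. From these I would record the Gram data $\langle r_+|r_-\rangle=\langle d_+|d_-\rangle=0$, $\langle r_+|d_\pm\rangle=\langle r_-|d_\pm\rangle=\tfrac12$, and the single linear relation
\begin{equation*}
|r_+\rangle+|r_-\rangle=|d_+\rangle+|d_-\rangle=\sqrt2|00\rangle ,
\end{equation*}
since $\dim\mathcal H_{\mathrm{act}}=3$.

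\emph{Lower bound.} A Stinespring environment of dimension one would make $\Gamma$ an isometric conjugation $\mathcal H_{\mathrm{act}}\to\mathcal H_S$. This is impossible on dimensional grounds, $3>2$. It is also incompatible with Eq.~\eqref{eq:calibrated_fact_map}, because $|\langle r_+|d_+\rangle|^2=\tfrac14\neq\tfrac12=|\langle0|+\rangle|^2$. Hence the environment has dimension at least two, and the Choi rank is at least two.

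\emph{Uniqueness at dimension two.} Let $V:\mathcal H_{\mathrm{act}}\to S\otimes E$ with $\dim E=2$ realize $\Gamma$. Each image $\Gamma(|\psi\rangle\langle\psi|)$ is pure, so $V|\psi\rangle$ is a product vector, and we may write
\begin{equation*}
V|r_+\rangle=|0\rangle|e_1\rangle,\quad V|r_-\rangle=|1\rangle|e_2\rangle,\quad V|d_\pm\rangle=|\pm\rangle|f_\pm\rangle,
\end{equation*}
with unit vectors $e_i,f_\pm$. Applying $V$ to the linear relation gives
\begin{equation*}
|0\rangle|e_1\rangle+|1\rangle|e_2\rangle=|+\rangle|f_+\rangle+|-\rangle|f_-\rangle .
\end{equation*}
Hence $e_1=(f_++f_-)/\sqrt2$ and $e_2=(f_+-f_-)/\sqrt2$. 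Isometry then requires $\langle r_+|d_+\rangle=\tfrac12=\langle e_1|f_+\rangle/\sqrt2$. This forces $1+\langle f_-|f_+\rangle=1$, so $\langle f_-|f_+\rangle=0$, and the normalizations of $e_1,e_2$ are then automatic. Thus $\{f_+,f_-\}$ is an orthonormal basis of $E$, fixed up to an environment unitary. Since $\{r_+,r_-,d_+\}$ spans $\mathcal H_{\mathrm{act}}$, $V$ is determined up to that unitary.

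Comparing with Eq.~\eqref{eq:decoded_global_code}, take $f_\pm=|\pm\rangle^K$; then $e_1=|0\rangle$ and $e_2=|1\rangle$. This shows the restriction of $U_{\mathrm{dec}}$ is this isometry. Uniqueness of the isometry up to environment unitaries implies uniqueness of $\Gamma$ on $\mathcal H_{\mathrm{act}}$, with minimal Kraus number, and hence minimal Choi rank, equal to two. For the full $\Lambda$, tracing a two-dimensional $K$ from a unitary gives at most two Kraus operators. The restriction already needs two, so the Choi rank of $\Lambda$ is exactly two.

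\emph{Non-recoverability.} I would use fidelity monotonicity rather than the data-processing Lemma~\ref{lem:fact_access_data_processing}, because it gives an immediate pairwise obstruction. If $\mathcal R\circ\Lambda$ fixed $|r_+\rangle\langle r_+|$ and $|d_+\rangle\langle d_+|$, then
\begin{equation*}
\tfrac14=|\langle r_+|d_+\rangle|^2=F\big(\mathcal R(|0\rangle\langle0|),\mathcal R(|+\rangle\langle+|)\big)\ge F\big(|0\rangle\langle0|,|+\rangle\langle+|\big)=\tfrac12,
\end{equation*}
a contradiction. The same pair also shows that a single wrong branch suffices.

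I expect the main obstacle to be the uniqueness step, specifically justifying the product form of each $V|\psi\rangle$ and extracting orthogonality of $f_\pm$ from the linear relation. That step is where care is needed, and I would carry it out exactly as above.
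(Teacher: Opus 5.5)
Your proposal is correct and follows the paper's proof essentially step by step. It uses the product form of the Stinespring images, then the linear relation $|r_+\rangle+|r_-\rangle=|d_+\rangle+|d_-\rangle$ to fix $e_{1,2}$ in terms of $f_\pm$, then the overlap $\langle r_+|d_+\rangle=\tfrac12$ to force $\langle f_-|f_+\rangle=0$, and finally identification with $U_{\mathrm{dec}}|_{\mathcal H_{\mathrm{act}}}$ through Eq.~\eqref{eq:decoded_global_code}.

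The differences are cosmetic. You use a $3>2$ dimension count for the lower bound, a restriction argument for the Choi rank of $\Lambda$, and fidelity rather than trace-distance monotonicity; the paper's trace-distance version also yields the quantitative bound in Eq.~\eqref{eq:approximate_recovery_bound}. Note that your orthogonality computation never uses $\dim E=2$, so it shows the calibrated channel is unique among all channels, not only among those of Choi rank two.
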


\begin{proof}
Let $V:\mathcal H_{\mathrm{act}}\to\mathcal H_S\otimes\mathcal H_K$ be a
Stinespring isometry for $\Gamma$.  A bipartite pure state with a pure reduced
state is a product, so unit environment vectors exist such that
\begin{align}
V|r_+\rangle&=|0\rangle|e_{r+}\rangle,
&
V|r_-\rangle&=|1\rangle|e_{r-}\rangle,
\nonumber\\
V|d_+\rangle&=|+\rangle|e_{d+}\rangle,
&
V|d_-\rangle&=|-\rangle|e_{d-}\rangle.
\label{eq:minimal_environment_vectors}
\end{align}
The exact input relation
\begin{equation}
|r_+\rangle+|r_-\rangle
=
|d_+\rangle+|d_-\rangle
=
\sqrt2|00\rangle
\label{eq:raw_state_linear_relation}
\end{equation}
and linearity of $V$ imply
\begin{align}
|e_{r+}\rangle
&=
\frac{|e_{d+}\rangle+|e_{d-}\rangle}{\sqrt2},
&
|e_{r-}\rangle
&=
\frac{|e_{d+}\rangle-|e_{d-}\rangle}{\sqrt2}.
\label{eq:minimal_environment_relations}
\end{align}
On the other hand,
$\langle r_+|d_+\rangle=1/2$ and
$\langle0|+\rangle=1/\sqrt2$ give
\begin{equation}
\langle e_{r+}|e_{d+}\rangle=\frac1{\sqrt2}.
\end{equation}
Substitution of Eq.~\eqref{eq:minimal_environment_relations} yields
\begin{equation}
\langle e_{d-}|e_{d+}\rangle=0.
\end{equation}
Thus
\begin{equation}
\dim\operatorname{span}
\{|e_{r+}\rangle,|e_{r-}\rangle,
|e_{d+}\rangle,|e_{d-}\rangle\}
\ge2,
\end{equation}
which is the effective environment dimension and equals the minimal
Stinespring dimension.  At equality,
$|e_{d+}\rangle$ and $|e_{d-}\rangle$ form a basis and
Eq.~\eqref{eq:minimal_environment_relations} fixes the remaining two vectors;
an environment unitary may be chosen so that
\begin{align}
|e_{d+}\rangle&\mapsto|+\rangle^K,
&
|e_{d-}\rangle&\mapsto|-\rangle^K,
\nonumber\\
|e_{r+}\rangle&\mapsto|0\rangle^K,
&
|e_{r-}\rangle&\mapsto|1\rangle^K.
\end{align}
Equation~\eqref{eq:minimal_environment_vectors} then becomes exactly
$U_{\mathrm{dec}}|_{\mathcal H_{\mathrm{act}}}$.  This proves uniqueness and
minimal Choi rank two on the calibrated support.  The displayed dilation of
$\Lambda$ uses a two-dimensional
environment, so its Choi rank is at most two; rank one would require an
isometry from a four-dimensional input into the two-dimensional output $S$,
which is impossible.  Hence its Choi rank is exactly two.  The calibration
does not fix the action on the orthogonal input vector $|01\rangle$; for a
minimal four-dimensional unitary extension that vector is fixed only up to a
phase multiplying $|\Psi^-\rangle$.

Finally, let
$D(\rho,\sigma)=\|\rho-\sigma\|_1/2$ be normalized trace distance.  The pair
$|r_+\rangle,|d_+\rangle$ satisfies
\begin{equation}
D(|r_+\rangle\langle r_+|,|d_+\rangle\langle d_+|)
=\frac{\sqrt3}{2},
\qquad
D(\Lambda(|r_+\rangle\langle r_+|),
\Lambda(|d_+\rangle\langle d_+|))
=\frac1{\sqrt2}.
\label{eq:fact_interface_trace_distances}
\end{equation}
An exact recovery would increase trace distance and contradict CPTP
contractivity.  More quantitatively, define
\begin{align}
\epsilon_r
&=
D((\mathcal R\circ\Lambda)(|r_+\rangle\langle r_+|),
|r_+\rangle\langle r_+|),
\nonumber\\
\epsilon_d
&=
D((\mathcal R\circ\Lambda)(|d_+\rangle\langle d_+|),
|d_+\rangle\langle d_+|).
\end{align}
The triangle inequality and contractivity give
\begin{equation}
\epsilon_r+\epsilon_d
\ge
\frac{\sqrt3}{2}-\frac1{\sqrt2},
\qquad
\max\{\epsilon_r,\epsilon_d\}
\ge
\frac{\sqrt3-\sqrt2}{4}.
\label{eq:approximate_recovery_bound}
\end{equation}
\end{proof}

\begin{proposition}[Record-preserving measurements]
\label{prop:record_preserving_measurements}
Let $\{P_y\}_y$ be a sharp record PVM on $\mathcal H$.  A POVM
$\{E_z\}_z$ on the same system admits an instrument
$\mathcal I_z:\mathcal L(\mathcal H)\to\mathcal L(\mathcal H)$ whose
nonselective channel
$\mathcal T=\sum_z\mathcal I_z$ preserves every record projector,
\begin{equation}
\mathcal T^\dagger(P_y)=P_y
\qquad\text{for every }y,
\label{eq:record_nondisturbance}
\end{equation}
if and only if
\begin{equation}
[E_z,P_y]=0
\qquad\text{for every }z,y.
\label{eq:record_effect_commutation}
\end{equation}
Consequently the outcome statistics of such measurements are insensitive to
coherence between different record sectors.
\end{proposition}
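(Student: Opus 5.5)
The argument has two directions, and each uses a Kraus representation of the instrument.

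For the ``if'' direction, assume $[E_z,P_y]=0$ for all $z,y$. We use the Lüders instrument $\mathcal I_z(\rho)=E_z^{1/2}\rho E_z^{1/2}$. Each $E_z^{1/2}$ is a function of $E_z$, so it also commutes with every $P_y$. The dual of the nonselective channel is $\mathcal T^\dagger(P_y)=\sum_z E_z^{1/2}P_yE_z^{1/2}$. Moving $P_y$ through the square roots gives $P_y\sum_zE_z=P_y$, which is Eq.~\eqref{eq:record_nondisturbance}.

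For the ``only if'' direction, write each $\mathcal I_z$ in Kraus form $\mathcal I_z(\rho)=\sum_k K_{zk}\rho K_{zk}^\dagger$, so that $E_z=\sum_kK_{zk}^\dagger K_{zk}$. The hypothesis $\mathcal T^\dagger(P_y)=P_y$ for every $y$ says that each record projector is a fixed point of the unital map $\mathcal T^\dagger$. The aim is to show that every Kraus operator commutes with every $P_y$. We compute the defect of the Schwarz inequality:
\begin{equation}
\sum_{z,k}[K_{zk},P_y]^\dagger[K_{zk},P_y]
=\mathcal T^\dagger(P_y^2)-\mathcal T^\dagger(P_y)P_y-P_y\mathcal T^\dagger(P_y)+P_y\,\mathcal T^\dagger(I)\,P_y.
\end{equation}
Using $P_y^2=P_y$, $\mathcal T^\dagger(P_y)=P_y$ and $\mathcal T^\dagger(I)=I$, the right-hand side is $P_y-P_y-P_y+P_y=0$. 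A sum of positive operators vanishes only if each term does, so every commutator $[K_{zk},P_y]$ is zero. Hence every $K_{zk}$ commutes with $P_y$, and so does $E_z=\sum_kK_{zk}^\dagger K_{zk}$.

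This step is the main obstacle: it is the place where sharpness of the record PVM matters, through $P_y^2=P_y$. For a general unital channel, fixed points need not commute with the Kraus operators, so the argument genuinely relies on the fixed points being projectors. Expanding the sum of squared commutators directly, as above, avoids having to cite any general fixed-point theorem.

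For the final claim, write $\rho=\sum_{y,y'}P_y\rho P_{y'}$. Because $E_z$ commutes with every $P_y$, the trace of $E_z$ against a cross-sector block $P_y\rho P_{y'}$ with $y\neq y'$ vanishes by cyclicity:
\begin{equation}
\operatorname{Tr}(E_zP_y\rho P_{y'})=\operatorname{Tr}(P_{y'}P_yE_z\rho)=0.
\end{equation}
Therefore $\operatorname{Tr}(E_z\rho)=\sum_y\operatorname{Tr}(E_zP_y\rho P_y)$, so the outcome statistics depend only on the record-block-diagonal part of the state and are insensitive to coherence between different record sectors.
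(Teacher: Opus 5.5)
Your proof is correct. The ``if'' direction (L\"uders instrument) and the coherence-insensitivity claim match the paper's argument. Your ``only if'' direction, however, takes a genuinely different route.

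The paper never opens the instrument into Kraus operators. It forms the joint effects $G_{zy}=\mathcal I_z^\dagger(P_y)\succeq0$ and observes $\sum_zG_{zy}=\mathcal T^\dagger(P_y)=P_y$ and $\sum_yG_{zy}=\mathcal I_z^\dagger(I)=E_z$. It concludes $0\preceq G_{zy}\preceq P_y$, hence $G_{zy}=P_yG_{zy}P_y$, and reads off $E_z=\sum_yP_yG_{zy}P_y$ as block diagonal in the record sectors. That argument needs only positivity of the dual maps $\mathcal I_z^\dagger$ and completeness of the record PVM. It also exhibits $\{G_{zy}\}$ as a joint POVM of $E$ and the record.

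Your Schwarz-defect (multiplicative-domain) computation instead uses complete positivity and trace preservation through the Kraus form. In exchange it gives a stronger conclusion: every Kraus operator $K_{zk}$ lies in the commutant of the record algebra. So each branch map, not only its effect, respects the sectors, $\mathcal I_z(P_y\rho P_{y'})=P_y\mathcal I_z(\rho)P_{y'}$. In particular $G_{zy}=\mathcal I_z^\dagger(P_y)=P_yE_z$, which recovers the paper's block form at once.

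Your remark that idempotency of $P_y$ is exactly what makes the defect vanish is also accurate. Because the computation works projector by projector, it does not even need $\sum_yP_y=I$ for the equivalence itself.
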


\begin{proof}
For the forward direction, define
$G_{zy}=\mathcal I_z^\dagger(P_y)\succeq0$.  Then
\begin{equation}
\sum_zG_{zy}=P_y,
\qquad
\sum_yG_{zy}=E_z.
\end{equation}
Thus $0\preceq G_{zy}\preceq P_y$, which implies
$G_{zy}=P_yG_{zy}P_y$.  Hence
$E_z=\sum_yP_yG_{zy}P_y$ and
Eq.~\eqref{eq:record_effect_commutation} follows.  Conversely, if all effects
commute with the record PVM, the L\"uders instrument
$\mathcal I_z(\rho)=E_z^{1/2}\rho E_z^{1/2}$ obeys
Eq.~\eqref{eq:record_nondisturbance}.  With
\begin{equation}
\Delta_P(\rho)=\sum_yP_y\rho P_y,
\end{equation}
commutation also gives
$\operatorname{Tr}(E_z\rho)
=\operatorname{Tr}[E_z\Delta_P(\rho)]$.
\end{proof}

Proposition~\ref{prop:record_preserving_measurements} formalizes the effective
superselection induced by exact record preservation, but it does not by
itself exclude the present $K$.  For example, preserving a $Z^K$ record still
allows the commuting product measurement of $Z^K$ and $X^S$, which perfectly
decodes the two sharp settings by exchanging the roles of the two output
copies.  Thus neither record stability nor no-signalling implies
the access restriction leading to
Eq.~\eqref{eq:irreversible_fact_interface}.  The two independent operational
hypotheses are the causal exclusion of $K$ and a classical-memory cut from
$\mathsf M$ to $\mathsf G$.  Merely restricting the transmitted message to be
classical is not sufficient if $\mathsf G$ holds pre-shared entanglement,
since teleportation would preserve the quantum information until $x$ is
revealed.

\begin{proposition}[Control-stable observables]
\label{prop:control_stable_factorization}
Let $\{\widehat Q_a^{X S}\}_a$ be a POVM that is physically allowed to access a
quantum setting system $X$.  Suppose that, for every state $\rho_X$ and every
state $\sigma_S$, the probabilities
\begin{equation}
\operatorname{Tr}
\left[
\widehat Q_a^{X S}(\rho_X\otimes\sigma_S)
\right]
\end{equation}
are independent of $\rho_X$.  Then there is a POVM $\{Q_a^S\}_a$ such that
\begin{equation}
\widehat Q_a^{X S}=I^X\otimes Q_a^S
\qquad\text{for every }a.
\label{eq:control_stable_factorization}
\end{equation}
\end{proposition}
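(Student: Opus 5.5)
The plan is to exploit the fact that states $\rho_X$ span the full operator space on $X$, and then read off the factorization from the partial-trace structure. For each fixed $\sigma_S$, define the operator on $X$
\begin{equation*}
F_a(\sigma_S)=\operatorname{Tr}_S\bigl[\widehat Q_a^{XS}(I^X\otimes\sigma_S)\bigr],
\end{equation*}
so that $\operatorname{Tr}[\widehat Q_a^{XS}(\rho_X\otimes\sigma_S)]=\operatorname{Tr}[F_a(\sigma_S)\rho_X]$. The hypothesis says this is constant over all density operators $\rho_X$. Taking $\rho_X=|\psi\rangle\langle\psi|$ for arbitrary unit vectors $|\psi\rangle$ shows that $\langle\psi|F_a(\sigma_S)|\psi\rangle=c_a(\sigma_S)$ for all unit $|\psi\rangle$. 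Polarization, or equivalently the fact that the Hermitian operator $F_a(\sigma_S)-c_a(\sigma_S)I$ has vanishing quadratic form, then gives
\begin{equation*}
F_a(\sigma_S)=c_a(\sigma_S)\,I^X.
\end{equation*}

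Next I would pass from states $\sigma_S$ to arbitrary operators. The map $\sigma_S\mapsto F_a(\sigma_S)$ is linear, and density operators span $\mathcal L(\mathcal H_S)$. Hence $F_a(Y_S)\propto I^X$ for every $Y_S$. Concretely, I would expand $\widehat Q_a^{XS}=\sum_{k} B_k^X\otimes C_k^S$ with $\{B_k\}$ an operator basis of $X$ satisfying $B_0=I^X/\sqrt{d_X}$, Hilbert--Schmidt orthonormal with $\operatorname{Tr}B_k=0$ for $k\neq0$. Then $F_a(Y_S)=\sum_k\operatorname{Tr}(C_kY_S)B_k$. Proportionality to the identity forces $\operatorname{Tr}(C_kY_S)=0$ for every $Y_S$ and every $k\ne0$, so $C_k=0$ for $k\neq0$. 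Therefore
\begin{equation*}
\widehat Q_a^{XS}=I^X\otimes Q_a^S,\qquad Q_a^S:=C_0/\sqrt{d_X}.
\end{equation*}

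Finally I would check that $\{Q_a^S\}$ is a POVM. Positivity follows because $\langle\phi|Q_a^S|\phi\rangle=\langle\psi\phi|\widehat Q_a|\psi\phi\rangle\ge0$ for any product vector. Completeness follows from $I^X\otimes\sum_aQ_a^S=I^{XS}$. The step needing most care is the basis expansion, namely ensuring that the linear extension from states to all operators is legitimate and that the Hermitian-versus-general bookkeeping is correct. Since $\widehat Q_a$ is Hermitian, one can choose Hermitian bases, in which case every $C_k$ is Hermitian and the argument becomes routine.
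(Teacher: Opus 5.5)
Your proof is correct and takes essentially the paper's route: expand $\widehat Q_a^{XS}$ in an operator basis on $X$ made of $I^X$ and traceless elements, use independence from $\rho_X$ together with the fact that states on $S$ span all operators to kill every traceless coefficient, and then read off positivity and completeness of $Q_a^S$. Your intermediate step, showing $\operatorname{Tr}_S[\widehat Q_a^{XS}(I^X\otimes\sigma_S)]\propto I^X$ via the vanishing quadratic form, is a slightly more explicit version of the paper's observation that $\operatorname{Tr}[(F_\mu^X\otimes\sigma_S)\widehat Q_a^{XS}]=0$ for every traceless $F_\mu^X$.
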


\begin{proof}
Expand $\widehat Q_a^{X S}$ in a Hermitian operator basis on $X$ containing
$I^X$ and traceless operators $\{F_\mu^X\}_{\mu>0}$.  Independence from
$\rho_X$ implies
\begin{equation}
\operatorname{Tr}
\left[
(F_\mu^X\otimes\sigma_S)\widehat Q_a^{X S}
\right]
=0
\end{equation}
for every $\mu>0$ and every $\sigma_S$.  Hence every coefficient multiplying a
traceless $F_\mu^X$ vanishes, giving
Eq.~\eqref{eq:control_stable_factorization}.  Positivity and
$\sum_a\widehat Q_a^{X S}=I^{X S}$ imply
$Q_a^S\succeq0$ and $\sum_aQ_a^S=I^S$.
\end{proof}

The block-diagonal construction in Theorem~\ref{thm:causal_location} is exact within the process-matrix formalism.  For causally ordered combs it has an ordinary memory-channel realization~\cite{chiribella2009networks}; Eq.~\eqref{eq:controlled_isometry} supplies one here.  We do not infer an ordinary reversible circuit realization for every abstract indefinite-order process, since not every valid process matrix satisfies the stronger purifiability requirement~\cite{araujo2017purification}.

\section{Finite-data certification and remaining tasks}

For exact reconstructed data, Eqs.~\eqref{eq:Rx} and \eqref{eq:mu_primal} provide a complete test.  The sparse witness also permits a direct finite-data statement without reconstructing the QSOST operators.

Index the eight quadratures in Eq.~\eqref{eq:sparse_W_interference} by $k=1,\ldots,8$, with signs
\begin{equation}
(c_1,\ldots,c_8)=(+1,+1,+1,+1,+1,+1,-1,+1).
\end{equation}
In trial $n$ of setting $k$, encode the probe outcome by $s_{k,n}\in\{+1,-1\}$ and the record outcome by $r_{k,n}\in\{+1,-1\}$.  Equation~\eqref{eq:real_amplitude} or \eqref{eq:imag_amplitude} gives
\begin{equation}
\Delta I_k=\mathbb E[s_kr_k].
\end{equation}
For $N_k$ independent trials, define
\begin{equation}
\widehat{\mathcal W}
=
\frac14\sum_{k=1}^{8}
\frac{c_k}{N_k}
\sum_{n=1}^{N_k}s_{k,n}r_{k,n}.
\label{eq:finite_estimator}
\end{equation}

\begin{theorem}[Distribution-free finite-data test]
Under the common-process hypothesis and independent trials,
\begin{equation}
\Pr\!\left[
\widehat{\mathcal W}\ge\sqrt2+\delta
\right]
\le
\exp\!\left[
-\frac{8\delta^2}{\sum_{k=1}^{8}N_k^{-1}}
\right].
\label{eq:hoeffding_bound}
\end{equation}
Consequently, a one-sided level-$\alpha$ rejection rule is
\begin{equation}
\widehat{\mathcal W}
>
\sqrt2+
\sqrt{
\frac{\log(1/\alpha)}{8}
\sum_{k=1}^{8}\frac1{N_k}
}.
\label{eq:finite_rejection}
\end{equation}
For equal sample sizes $N_k=N$, the statistical margin is $\sqrt{\log(1/\alpha)/N}$.
\end{theorem}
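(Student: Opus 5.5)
The plan is to combine the linear witness with Hoeffding's inequality. The estimator is an average of independent bounded terms, and the common-process hypothesis bounds its mean.

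\textbf{Step 1: bound the mean.} Under the common-process hypothesis, the witness of Eq.~\eqref{eq:sparse_W_bound} gives $\mathcal W\le\sqrt2$. I will justify this bound through its origin: it is Eq.~\eqref{eq:linear_witness} with the dual pair of Eqs.~\eqref{eq:YR_minimal}--\eqref{eq:YD_minimal} and reference visibility $\eta_0=1/\sqrt2$. I will also check that the specialization does not pick up any $p_{a|x}$-dependent term. That check goes as follows. The $\langle q^0|Y|q^0\rangle p$ terms involve $Y_x|q\rangle$ with $|q\rangle=|0000\rangle$ annihilated by $Y_x$. The terms on $|1100\rangle$ and $|1110\rangle$ in $Y_{\mathrm D}$ pair with reference vectors having no support there. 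With the rescaling fixed by the witness normalization, the inequality therefore reduces exactly to $\mathcal W\le\sqrt2$ for every common-process assemblage.

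I also need the identity $\mathcal W=\frac14\sum_k c_k\,\Delta I_k$ from Eq.~\eqref{eq:sparse_W_interference}. Here each quadrature difference is $\Delta I_k=\mathbb E[s_kr_k]$ by Eqs.~\eqref{eq:real_amplitude}--\eqref{eq:imag_amplitude}, since $p(+,a)-p(-,a)$ summed against the sign of $a$ is the expectation of the product of $\pm1$ outcomes. Hence $\mathbb E[\widehat{\mathcal W}]=\mathcal W\le\sqrt2$.

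\textbf{Step 2: apply Hoeffding.} Write $\widehat{\mathcal W}=\sum_{k,n}Z_{k,n}$ with $Z_{k,n}=c_k s_{k,n}r_{k,n}/(4N_k)$. These are independent, and each lies in an interval of length $b_{k,n}=2/(4N_k)=1/(2N_k)$. Then
\[
\sum_{k,n}b_{k,n}^2=\sum_k N_k\cdot\frac{1}{4N_k^2}=\frac14\sum_k N_k^{-1}.
\]
Hoeffding's inequality gives
\[
\Pr\bigl[\widehat{\mathcal W}-\mathbb E\widehat{\mathcal W}\ge\delta\bigr]\le\exp\Bigl(-\frac{2\delta^2}{\tfrac14\sum_kN_k^{-1}}\Bigr)=\exp\Bigl(-\frac{8\delta^2}{\sum_kN_k^{-1}}\Bigr).
\]
Because $\mathbb E\widehat{\mathcal W}\le\sqrt2$, the event $\widehat{\mathcal W}\ge\sqrt2+\delta$ is contained in the event above, which proves Eq.~\eqref{eq:hoeffding_bound}.

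\textbf{Step 3: the rejection rule.} Setting the right-hand side of Eq.~\eqref{eq:hoeffding_bound} equal to $\alpha$ and solving for $\delta$ gives $\delta=\sqrt{\log(1/\alpha)\sum_kN_k^{-1}/8}$, which is the threshold in Eq.~\eqref{eq:finite_rejection}. For equal sample sizes $N_k=N$, the sum is $8/N$, so the margin becomes $\sqrt{\log(1/\alpha)/N}$.

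\textbf{Main obstacle.} The delicate step is Step 1. I must confirm that the sign and normalization conventions in Eq.~\eqref{eq:sparse_W_interference}, including the coefficient $1/4$ and the signs $c_k$, match $\mathcal W$ exactly. I must also confirm that the bound $\sqrt2$ holds for all common-process assemblages, not only for the $\eta$-family. Otherwise the mean bound, and hence the test's validity, could fail. The concentration step itself is routine.
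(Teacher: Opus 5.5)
Your proposal is correct and matches the paper's proof: you bound the mean by the sparse witness ($\mathbb E\widehat{\mathcal W}=\mathcal W\le\sqrt2$), note that each summand has range length $1/(2N_k)$, apply Hoeffding, and solve for $\delta$. One precision in your Step~1 check: the $p_{a|x}$-terms of Eq.~\eqref{eq:linear_witness} do not vanish. Because $Y_x|q\rangle=0$, their coefficient is the $a$-independent constant $4\eta_0^2=2$, so via $\sum_a p_{a|x}=\operatorname{Tr}T=1$ they contribute a fixed total of $4$ and the witness reads $4\sqrt2\,\mathcal W-4\le4$, i.e.\ $\mathcal W\le\sqrt2$.
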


\begin{proof}
The true expectation obeys $\mathcal W\le\sqrt2$.  Each single-trial contribution $c_ks_{k,n}r_{k,n}/(4N_k)$ has range length $1/(2N_k)$.  Hoeffding's inequality applied to their sum gives Eq.~\eqref{eq:hoeffding_bound}, and Eq.~\eqref{eq:finite_rejection} follows by solving for $\delta$.
\end{proof}

The witness must be fixed independently of the testing data.  If calibration analysis supplies additive quadrature-bias bounds $|b_k|\le\varepsilon_k$, a conservative test replaces the right-hand side of Eq.~\eqref{eq:finite_rejection} by an additional systematic allowance $\frac14\sum_k\varepsilon_k$.  More efficient likelihood-ratio or empirical-Bernstein analyses can be developed, but the theorem above already supplies a rigorous nonasymptotic certification rule directly from raw joint counts.

The compression--gluing equivalence, Theorem~\ref{thm:causal_location}, the explicit controlled isometry, the exact visibility threshold, and Eqs.~\eqref{eq:sparse_W_interference}--\eqref{eq:finite_rejection} resolve the convex, causal-implementation, and certification lines.  Theorem~\ref{thm:relative_fact_control_gap} resolves the permanent-erasure fact question on the natural system $C B_I$, while Theorem~\ref{thm:support_delayed_inference} and Corollary~\ref{cor:irreversible_delayed_gap} resolve the delayed-setting problem after the specified causal and classical-memory cuts.  These are distinct statements: the raw system retains a compatible refinement, whereas the accessible qubit after $\Lambda$ does not.

For a general separating family, a control-blind fact test must designate a physical system wire $S$, distinct from $X$, the record registers $E,F$, and any copied pointer, and fix an observable
\begin{equation}
Z=\sum_z zQ_z^S
\label{eq:nontrivial_relative_observable}
\end{equation}
and its complete measurement instrument independently of $x$.  The branch experiment satisfies an exact common relative fact only if
\begin{equation}
\sum_a\sum_{z\ne g(a)}
p(z,a|x)=0
\quad\text{for every relevant }x,
\label{eq:relative_fact_zero_error}
\end{equation}
while retaining a strict gluing violation such as $\mathcal W>\sqrt2$.  For a fixed physical fact tester and fixed $g$, Eq.~\eqref{eq:relative_fact_zero_error} is a collection of linear zero-probability constraints on the branch Choi operators and can be imposed together with process positivity, normalization, and a fixed linear gluing witness in an SDP.  Optimizing the tester and the branch processes simultaneously is bilinear and therefore requires an outer search followed by an exact certificate.

Process gluing failure and delayed fact incompatibility are logically
independent without additional hypotheses on the fact interface.  The raw
$C B_I$ realization at $\eta=1$ already has
$\mu=8>4$ but $P_{\mathrm{pre}}=P_{\mathrm{post}}=1$.  Conversely, steering
one half of a Bell state into the two Pauli eigenstate ensembles gives
$P_{\mathrm{pre}}=1$ and
$P_{\mathrm{post}}=(1+1/\sqrt2)/2$ while retaining an ordinary common state
parent.  Tensoring either construction with bypass flags supplies further
separations.  Hence no strictly positive universal lower bound on a delayed
fact gap can depend only on the process-domination cost.

The precise information-theoretic obstruction is elementary but important.

\begin{proposition}[QSOST-determined fact interfaces]
\label{prop:fact_interface_factorization}
Let $\mathcal F$ be a linear map from branch process operators to conditioned
fact data.  There exists a linear map $\overline{\mathcal F}$ on
$\operatorname{ran}Q_+$ such that
\begin{equation}
\mathcal F=\overline{\mathcal F}\circ Q_+
\label{eq:fact_interface_factors}
\end{equation}
if and only if
\begin{equation}
\ker Q_+\subseteq\ker\mathcal F.
\label{eq:fact_interface_kernel}
\end{equation}
\end{proposition}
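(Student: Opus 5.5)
This is the standard linear-algebra factorization lemma: a linear map factors through another exactly when the kernels are nested. We treat $Q_+$ as a linear map from the (real or complex) space of branch process operators $V$ onto its range $\operatorname{ran}Q_+$. The factorization is defined on that range, so surjectivity onto the range is automatic.

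\emph{Necessity.} Suppose $\mathcal F=\overline{\mathcal F}\circ Q_+$ and $G\in\ker Q_+$. Then
\[
\mathcal F(G)=\overline{\mathcal F}(Q_+(G))=\overline{\mathcal F}(0)=0,
\]
so $\ker Q_+\subseteq\ker\mathcal F$.

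\emph{Sufficiency.} Assume $\ker Q_+\subseteq\ker\mathcal F$. For $\Sigma\in\operatorname{ran}Q_+$, choose any preimage $G$ with $Q_+(G)=\Sigma$ and set $\overline{\mathcal F}(\Sigma)=\mathcal F(G)$.

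\begin{itemize}
\item Well-definedness: if $Q_+(G)=Q_+(G')$, then $G-G'\in\ker Q_+\subseteq\ker\mathcal F$, so $\mathcal F(G)=\mathcal F(G')$.
\item Linearity: if $G_1,G_2$ are preimages of $\Sigma_1,\Sigma_2$, then $\alpha G_1+\beta G_2$ is a preimage of $\alpha\Sigma_1+\beta\Sigma_2$. Hence $\overline{\mathcal F}(\alpha\Sigma_1+\beta\Sigma_2)=\alpha\overline{\mathcal F}(\Sigma_1)+\beta\overline{\mathcal F}(\Sigma_2)$.
\item Factorization: $\mathcal F=\overline{\mathcal F}\circ Q_+$ holds by construction.
\end{itemize}

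To make the construction canonical rather than choice-dependent, one can fix a complement $V=\ker Q_+\oplus V'$ and use the isomorphism $Q_+|_{V'}:V'\to\operatorname{ran}Q_+$. This gives
\[
\overline{\mathcal F}=\mathcal F\circ(Q_+|_{V'})^{-1}.
\]

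\emph{Main point of care.} The only subtle step is fixing the field and domain consistently. If $\mathcal F$ is defined only on Hermitian operators and is merely real-linear, as for real-part witnesses, then both $\ker Q_+$ and $\operatorname{ran}Q_+$ must be read as real subspaces on that same domain. In particular, the kernel is the Hermitian kernel; restricted further to the valid-process subspace, it is the $63$-dimensional space of the earlier theorem. The argument above is identical over $\mathbb R$ or $\mathbb C$, so the proposition holds in either reading once the domain is stated consistently.

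A remark for use afterward: any fact interface whose data can depend on the hidden directions (for example $\Delta$ in Eq.~\eqref{eq:Delta}) has $\mathcal F(\Delta)\neq 0$. By the proposition, such an interface is not determined by QSOST data.
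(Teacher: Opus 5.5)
Your proof is correct and follows the same route as the paper. Necessity comes from evaluating $\overline{\mathcal F}$ at $0$, and sufficiency from setting $\overline{\mathcal F}(Q_+(G)):=\mathcal F(G)$, where the kernel inclusion makes the value independent of the chosen lift and linearity follows directly; your remarks on choosing a complement and on fixing the field (real-linear maps on Hermitian operators) are sound refinements that the paper leaves implicit.
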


\begin{proof}
Factorization immediately implies Eq.~\eqref{eq:fact_interface_kernel}.
Conversely, define
$\overline{\mathcal F}(Q_+(G)):=\mathcal F(G)$.  The kernel inclusion makes
this definition independent of the chosen lift, and linearity is immediate.
\end{proof}

Factorization is necessary for a universal QSOST-only fact statement, but it
is not sufficient to reflect compatibility: one would additionally need
$\overline{\mathcal F}$ to preserve and detect the relevant parent cones.
The $63$-dimensional hidden fiber shows that these order properties cannot be
recovered from $Q_+$ without further hypotheses; a low-dimensional fact
interface may discard still more information.  Order reflection must therefore
be proved separately on a restricted process family with a calibrated physical
interface and with inaccessible or bypass degrees of freedom excluded.

For the calibrated visibility family and the interface $\Lambda$, there is
nevertheless an exact quantitative relation among all observed quantities:
\begin{align}
\mathcal W(\eta)&=2\eta,
&
\mu(\eta)&=\max\{4,8\eta^2\},
\label{eq:fixed_family_process_values}\\
\Gamma_{\mathrm{delay}}(\eta)
&:=
P_{\mathrm{pre}}-P_{\mathrm{post}}
=
\frac{\eta}{2}\left(1-\frac1{\sqrt2}\right)
\nonumber\\
&=
\frac{\mathcal W}{4}
\left(1-\frac1{\sqrt2}\right).
\label{eq:fixed_family_delay_relation}
\end{align}
In the strict process-separation region $\mu>4$ this can also be written as
\begin{equation}
\Gamma_{\mathrm{delay}}
=
\frac{1-1/\sqrt2}{4}\sqrt{\frac{\mu}{2}}.
\label{eq:fixed_family_mu_relation}
\end{equation}
Equations~\eqref{eq:fixed_family_process_values}--\eqref{eq:fixed_family_mu_relation}
are exact identities for this one calibrated family, not universal
process-to-fact inequalities.

The most valuable remaining theorem would characterize a physically natural
restricted class for which the fact interface both factors through $Q_+$ and
reflects common-parent compatibility.  The present causal-cut model removes
the previous ambiguity in the phrase ``inaccessible environment'':
Proposition~\ref{prop:causal_access_reduction} states the complete operational
restriction, while Theorem~\ref{thm:minimal_fact_interface} proves that its
one-qubit environment is minimal for the calibrated map.  It does not,
however, derive that causal cut from QSOST data or from standard quantum
mechanics.  A concrete remaining target is therefore a laboratory model in
which loss, spacetime separation, or an observer-domain constraint enforces
both exclusion of $K$ and the classical-memory cut, with a robust approximate
version of Eq.~\eqref{eq:decoded_exact_hierarchy}.
Propositions~\ref{prop:record_preserving_measurements} and
\ref{prop:control_stable_factorization} supply separate algebraic boundaries
for stable records and a coherently controlled $X$.  Neither record
nondisturbance nor no-signalling alone excludes the full-access measurements
of Proposition~\ref{prop:full_access_closure}; without an explicit access or
memory restriction, no $X$-independent no-go theorem is possible.

A full-rank two-outcome separation would be an additional strengthening.  The present paper separately establishes minimal outcome number and an interior full-rank separation; combining both properties in one example remains open.  The exact two-outcome parametrization is
\begin{equation}
R_x=|q_T\rangle\langle q_T|+|u_x\rangle\langle u_x|,
\qquad
\langle\Omega|u_x\rangle=0,
\end{equation}
so the remaining search is finite dimensional: find $q_T,u_0,u_1$ for which each single-setting domination cost is four, the joint cost exceeds four, and the realizing parents are full rank.

\section{Conclusion}

We separated the established process-level realization structure from the compressed inverse problem created by QSOST interferometry.  Given fully specified branches, common-future realizability is the known process-assemblage condition that their sums equal one deterministic process.  Our contribution begins when only their QSOST images are known.  We established that deterministic-process information loss under this projection is equivalent to a strict separation between settingwise QSOST compatibility and common quantum-process compatibility.  The key technical step is that the standard-positive QSOST projection records $G|\Omega\rangle$ for a Hermitian branch process $G$.  This yields a unique least positive branch lift, eliminates all branch process variables, and reduces common gluing to simultaneous domination of data-derived positive operators by one deterministic process matrix.

In the bipartite-qubit scenario the hidden Hermitian kernel has dimension $63$, so the existence of strict gluing failures follows structurally rather than only from an isolated example: every normalized information loss can be exposed by a conditioned family.  The causal-location theorem identifies common compatibility with future conditioning and settingwise compatibility with past-controlled process selection.  We gave one controlled isometric realization of a combinatorially minimal two-setting, two-outcome separation within the definite order $A\prec B$, together with an exact visibility threshold and an eight-quadrature linear interferometric witness.  A separate full-rank sixteen-outcome construction shows that the obstruction occurs in the interior of the process cone.

Complete agreement of causally agnostic QSOST data therefore does not determine whether several conditioned descriptions share one process on the original laboratory slots.  The minimal circuit also separates three fact-access regimes.  On the natural system $C B_I$, delayed revelation of $X$ is sufficient because the settingwise sharp facts commute; only permanent erasure of $X$ causes an error.  We proved generally that perfect delayed inference is equivalent to a guess-string support POVM, and, for sharp support-complete facts, to pairwise commutation of the support projections.  A fixed unitary and a specified causal cut induce the locally irreversible interface $\Lambda$, converting the same calibrated reset/dephasing family into complementary Pauli facts.  The causal-access reduction and Clifford tester theorem then give, even with arbitrary $x$-independent inputs and references measured at the fact event, $P_{\mathrm{pre}}=(1+\eta)/2$ and $P_{\mathrm{post}}=P_{\mathrm{none}}=(1+\eta/\sqrt2)/2$.  One environment qubit is necessary and sufficient for the calibrated map, and tracing it out is not recoverable on the four sharp branches.  Full $S K$ access, quantum memory across the deadline, or teleportation using side entanglement closes the pre--post delayed-setting gap; it does not remove the pre--none loss when $X$ is never supplied.  Exclusion of $K$ and the complete classical-memory cut are therefore indispensable and independent hypotheses.

The process theorem and the fact theorem should not be conflated.  The former is a universal equivalence governed by injectivity of the QSOST projection on deterministic processes.  The latter is an exact theorem for a specified fact interface, calibrated family, and observer access algebra.  Process gluing and delayed fact incompatibility are logically independent in general, and a fact description is QSOST-determined only under the kernel factorization condition in Proposition~\ref{prop:fact_interface_factorization}.  The fixed family nevertheless links its witness, domination cost, and delayed gap exactly through Eqs.~\eqref{eq:fixed_family_process_values}--\eqref{eq:fixed_family_mu_relation}.  Neither standard unitary quantum mechanics, no-signalling, nor preservation of one sharp record makes the particular $K$ intrinsically inaccessible; Proposition~\ref{prop:full_access_closure} gives an explicit counterstrategy.  The principal remaining conceptual problem is to identify a physically natural restricted class in which the causal and classical-memory cuts arise from laboratory structure, the fact interface factors through QSOST data, and it reflects parent compatibility.  A full-rank two-outcome example and calibration-aware finite-data refinements are secondary strengthenings.

\appendix

\section{Conditioned probe state}
\label{app:probe}

Let the probe start in $|+\rangle_P$ and let the two arms implement the reference and intervention branches.  A Kraus representation of branch $a|x$ is $\{K_{a\lambda|x}\}_\lambda$.  The off-diagonal probe block contains the complex amplitude
\begin{equation}
I_{\mu\nu,a|x}
=
\sum_\lambda
\operatorname{Tr}
\left[
W_\nu^B K_{a\lambda|x}U_\mu^A\rho_AK_{a\lambda|x}^\dagger
\right].
\end{equation}
The non-normalized probe state is
\begin{equation}
\tau_{P,a|x}^{\mu\nu}
=
\frac12
\begin{pmatrix}
p_{0,a|x}&I_{\mu\nu,a|x}^{*}\\
I_{\mu\nu,a|x}&p_{1,a|x}^{\mu}
\end{pmatrix},
\end{equation}
where
\begin{align}
p_{0,a|x}&=\operatorname{Tr}[\mathcal E_{a|x}(\rho_A)],\\
p_{1,a|x}^{\mu}&=\operatorname{Tr}[\mathcal E_{a|x}(U_\mu^A\rho_A(U_\mu^A)^\dagger)].
\end{align}
Probe measurements in the $X$ and $Y$ bases give Eqs.~\eqref{eq:real_amplitude} and \eqref{eq:imag_amplitude}.  Positivity implies
\begin{equation}
|I_{\mu\nu,a|x}|^2
\le
p_{0,a|x}p_{1,a|x}^{\mu}.
\end{equation}
This is a branch-level interferometric visibility constraint and should not be confused with the common-process gluing witness derived in Sec.~VI.

\section{Process projector and QSOST adjoint}
\label{app:process_conventions}

The standard bipartite process constraints can be written as
\begin{align}
[1-B_O]A_IA_OW&=0,\\
[1-A_O]B_IB_OW&=0,\\
[1-A_O][1-B_O]W&=0.
\end{align}
The trace-and-replace maps commute, are self-adjoint, and are idempotent.  Multiplying the three corresponding orthogonal projectors and expanding gives Eq.~\eqref{eq:LV}.

In the swap-Jamio\l kowski convention of Ref.~\cite{lie2026qsost}, the process operator is related to the standard positive matrix by partial transpose on the grouped input spaces.  Substitution into the first-order QSOST formula gives Eq.~\eqref{eq:Qplus}.

For the adjoint, use the Hilbert--Schmidt pairing:
\begin{align}
\operatorname{Tr}[F^\dagger Q_+(W)]
&=
\operatorname{Tr}[(I_I\otimes F^\dagger)S_{I:O}\overline W^{T_I}]\\
&=
\operatorname{Tr}
\left(
\left[(I_I\otimes F^\dagger)S_{I:O}\right]^{T_I}
\overline W
\right).
\end{align}
Taking the adjoint of the coefficient and restoring the interleaved order yields Eq.~\eqref{eq:Q_adjoint}.

Finally, applying Eq.~\eqref{eq:LV} to $|\Omega\rangle\langle\Omega|$ gives Eq.~\eqref{eq:LV_Omega}.  Self-adjointness of $L_V$ then gives, for $L_V(W)=W$,
\begin{equation}
\langle\Omega|W|\Omega\rangle
=
\operatorname{Tr}[L_V(|\Omega\rangle\langle\Omega|)W]
=
\frac{\operatorname{Tr}W}{d_O}.
\end{equation}

\section{Pauli-basis proof of the hidden-fiber dimension}
\label{app:kernel}

Use the Pauli basis
\begin{equation}
\sigma_\mu^{A_I}\otimes\sigma_\nu^{A_O}
\otimes\sigma_\kappa^{B_I}\otimes\sigma_\lambda^{B_O},
\qquad
\mu,\nu,\kappa,\lambda\in\{0,1,2,3\},
\end{equation}
where $\sigma_0=I$.  The valid process subspace contains the following support patterns, with $0$ denoting identity and $1$ a nonidentity Pauli:
\begin{equation}
0000,
\ 0010,
\ 0110,
\ 1000,
\ 1001,
\ 1010,
\ 1011,
\ 1110.
\end{equation}
Their multiplicities are respectively
\begin{equation}
1,3,9,3,9,9,27,27,
\end{equation}
which sum to $88$.

Equation~\eqref{eq:Q_product} gives
\begin{equation}
Q_+
(\sigma_\mu\otimes\sigma_\nu\otimes\sigma_\kappa\otimes\sigma_\lambda)
=
(\sigma_\mu^T\sigma_\nu)
\otimes
(\sigma_\kappa^T\sigma_\lambda).
\end{equation}
Over the complex numbers, the images span all sixteen operators on $A_OB_O$, so the complex kernel dimension is $88-16=72$.  Restricting to real coefficients multiplying Hermitian Pauli strings, the image consists of all sixteen Hermitian two-qubit directions together with the nine anti-Hermitian correlation directions $i\sigma_r\otimes\sigma_s$, $r,s\in\{1,2,3\}$.  The real image dimension is therefore $16+9=25$, and the real Hermitian kernel dimension is $88-25=63$.

\end{document}